\documentclass[copyright,creativecommons]{eptcs}
\providecommand{\event}{LSFA 2026}

\usepackage{iftex}

\ifpdf
  \usepackage{underscore}         
  \usepackage[T1]{fontenc}        
\else
  \usepackage{breakurl}           
\fi

\usepackage{tikz}
\usepackage{amssymb}
\usepackage{amsthm}
\usepackage{amsmath}
\usepackage[capitalize]{cleveref}
\usepackage{caption}
\usepackage{subcaption}
\usepackage{newunicodechar} 
\usepackage{inconsolata}
\usepackage{fancyvrb}
\usepackage{bussproofs}
\usepackage{enumitem}

\usetikzlibrary{positioning}

\newunicodechar{∀}{\ensuremath{\mathnormal\forall}}
\newunicodechar{Γ}{\ensuremath{\mathsf\Gamma}}
\newunicodechar{Δ}{\ensuremath{\mathsf\Delta}}
\newunicodechar{∪}{\ensuremath{\cup}}
\newunicodechar{∃}{\ensuremath{\exists}}
\newunicodechar{Π}{\ensuremath{\mathsf\Pi}}
\newunicodechar{λ}{\ensuremath{\mathsf\lambda}}
\newunicodechar{⊢}{\ensuremath{\vdash}}
\newunicodechar{∶}{{\ensuremath{:}}}
\newunicodechar{β}{{\ensuremath{\beta}}}
\newunicodechar{α}{{\ensuremath{\alpha}}}
\newunicodechar{δ}{{\ensuremath{\delta}}}
\newunicodechar{ε}{{\ensuremath{\varepsilon}}}
\newunicodechar{ρ}{{\ensuremath{\rho}}}
\newunicodechar{Σ}{{\ensuremath{\Sigma}}}
\newunicodechar{≃}{{\ensuremath{\simeq}}}
\newunicodechar{₌}{{\ensuremath{{}_=}}}
\newunicodechar{₀}{{\ensuremath{{}_0}}}
\newunicodechar{₁}{{\ensuremath{{}_1}}}
\newunicodechar{₂}{{\ensuremath{{}_2}}}
\newunicodechar{₃}{{\ensuremath{{}_3}}}
\newunicodechar{₄}{{\ensuremath{{}_4}}}
\newunicodechar{ₛ}{{\ensuremath{{}_\mathsf{s}}}}
\newunicodechar{⁻}{\ensuremath{{}^{-}}}
\newunicodechar{∈}{{\ensuremath{\in}}}
\newunicodechar{∉}{{\ensuremath{\not\in}}}
\newunicodechar{≡}{{\ensuremath{\equiv}}}
\newunicodechar{σ}{{\ensuremath{\sigma}}}
\newunicodechar{∙}{{\ensuremath{\bullet}}}
\newunicodechar{‚}{{\ensuremath{,}}}
\newunicodechar{Λ}{{\ensuremath{\mathsf\Lambda}}}
\newunicodechar{𝒱}{\ensuremath{\mathcal{V}}}
\newunicodechar{𝒞}{\ensuremath{\mathcal{C}}}
\newunicodechar{𝒜}{\ensuremath{\mathcal{A}}}
\newunicodechar{ℛ}{\ensuremath{\mathcal{R}}}
\newunicodechar{ℕ}{\ensuremath{\mathbb{N}}}
\newunicodechar{⊥}{\ensuremath{\bot}}
\newunicodechar{ι}{\ensuremath{\iota}}
\newunicodechar{≟}{\ensuremath{\stackrel{?}{=}}}
\newunicodechar{⇂}{\ensuremath{\downharpoonright}}
\newunicodechar{∼}{\ensuremath{{\sim}}}
\newunicodechar{▹}{\ensuremath{{\triangleright}}}
\newunicodechar{𝒮}{\ensuremath{\mathcal{S}}}
\newunicodechar{↔}{\ensuremath{\leftrightarrow}}
\newunicodechar{⇉}{\ensuremath{\rightrightarrows}}
\newunicodechar{⇒}{\ensuremath{\Rightarrow}}
\newunicodechar{⇔}{\ensuremath{\Leftrightarrow}}
\newunicodechar{⊔}{\ensuremath{\sqcup}}
\newunicodechar{◅}{\ensuremath{\triangleleft}}
\newunicodechar{;}{\ensuremath{;}}
\newunicodechar{⊆}{\ensuremath{\subseteq}}
\newunicodechar{⊎}{\ensuremath{\uplus}}
\newunicodechar{≺}{\ensuremath{\prec}}
\newunicodechar{≈}{\ensuremath{\approx}}
\newunicodechar{τ}{\ensuremath{\tau}}
\newunicodechar{⋯}{\ensuremath{\dots}}
\newunicodechar{ℓ}{\ensuremath{\ell}}
\newunicodechar{∷}{\ensuremath{\Colon}}
\newunicodechar{↠}{\ensuremath{{\twoheadrightarrow}}}
\newunicodechar{⋆}{\ensuremath{{}^*}}
\newunicodechar{⊤}{\ensuremath{\top}}

\newcommand{\websiteurl}{https://surciuoli.github.io/pts-consistency-impred}
\newcommand{\sourcelink}[2]{\href{\websiteurl/links/#1}{#2}}

\newcommand{\redex}[2]{\ensuremath{#1 \triangleright_\beta #2}}

\newcommand{\ok}[1]{\ensuremath{#1\,\mathsf{ok}_{\mathsf{s}}}}
\newcommand{\types}[3]{\ensuremath{#1 \vdash_{\mathsf{s}} #2 : #3}}
\newcommand{\dom}[1]{\ensuremath{\mathsf{dom}\,#1}}
\newcommand{\Starg}{\ensuremath{\mathsf{Star}}}
\newcommand{\Star}[3]{\ensuremath{\mathsf{Star}\,#1\,#2\,#3}}

\newcommand{\step}[2]{\ensuremath{#1 \rightarrow_\beta #2}}

\newcommand{\alphaconv}[2]{\ensuremath{#1\ {\sim}_\alpha\ #2}}
\newcommand{\conv}[2]{\ensuremath{#1 \simeq_\beta #2}}

\newcommand{\convgspc}{\ensuremath{\,{\simeq}_\beta\,}}
\newcommand{\manystep}[2]{\ensuremath{#1\ {\rightarrow}{\beta}{*}_0\ #2}}
\newcommand{\manystepg}{\ensuremath{\_{\rightarrow}{\beta}{*}_0\_}}
\newcommand{\binrel}{\ensuremath{\mathcal{S}}}

\newcommand{\fv}[1]{\ensuremath{\mathsf{fv}\,#1}}

\renewcommand{\parallel}[2]{\ensuremath{#1 \rightrightarrows #2}}
\newcommand{\parallelg}{\ensuremath{\_{\rightrightarrows}\_}}
\newcommand{\parallelbeta}{\ensuremath{{\rightrightarrows}_\beta}}
\newcommand{\vdashstd}{\ensuremath{\vdash_{\mathsf{s}}}}
\newcommand{\typerule}[1]{\ensuremath{{\vdash}}\texttt{#1}}
\newcommand{\nf}[1]{\ensuremath{\mathsf{nf}\,#1}}
\newcommand{\neu}[1]{\ensuremath{\mathsf{ne}\,#1}}
\newcommand{\nfg}{\ensuremath{\mathsf{nf}}}
\newcommand{\neug}{\ensuremath{\mathsf{ne}}}
\newcommand{\Nf}[1]{\ensuremath{\mathsf{Nf}\,#1}}
\newcommand{\Neu}[2]{\ensuremath{\mathsf{Ne}_{#1}\,#2}}
\newcommand{\Nfg}{\ensuremath{\mathsf{Nf}}}
\newcommand{\Neg}{\ensuremath{\mathsf{Ne}}}
\newcommand{\IsVarg}{\ensuremath{\mathsf{IsVar}}}

\newcommand{\app}{\ensuremath{\cdot}}
\newcommand{\quant}{\ensuremath{\reflectbox{Q}}}
\newcommand{\append}{\ensuremath{\,{+}\!\!{+}\,}}

\newcommand{\RelAux}[2]{\ensuremath{#1 \rightarrow_\binrel #2}}
\newcommand{\StarAux}[2]{\ensuremath{#1 \twoheadrightarrow_\binrel #2}}

\fvset{
  fontsize=\small,
  numbers=none
}

\newtheoremstyle{theo}
  {}
  {}
  {}
  {}
  {\bfseries}
  {.}
  { }
  {}

\theoremstyle{theo}  
\newtheorem{theorem}{Theorem}
\newtheorem{lemma}{Lemma}

\newcommand{\myitem}[1]{%
  \refstepcounter{enumi}%
  \item[\sourcelink{#1}{(\roman{enumi}})]
}
\Crefname{enumi}{Lemma}{Lemmas}

\newcommand{\thmitem}[1]{%
  \refstepcounter{enumi}%
  \item[\sourcelink{#1}{(\roman{enumi}})]
}
\Crefname{thmitem}{Theorem}{Theorem}

\newtheoremstyle{linked}
  {}
  {}
  {}
  {}
  {\bfseries}
  {.}
  { }
  {\sourcelink{\thislink}{\thmname{#1} \thmnumber{#2}}\thmnote{ (#3)}}

\theoremstyle{linked}
\newtheorem{innlinklemma}[lemma]{Lemma} 
\newenvironment{linklemma}[1]
 {\def\thislink{#1}\innlinklemma}
 {\endinnlinklemma}
\Crefname{innlinklemma}{Lemma}{Lemmas}

\Crefname{innlinkcoro}{Corollary}{Corolaries}

\title{A Machine-checked Proof of Consistency for Impredicative Pure Type Systems}
\author{Sebasti\'an Urciuoli \institute{Universidad ORT Uruguay}\email{urciuoli@ort.edu.uy}}

\def\authorrunning{S.~Urciuoli}
\def\titlerunning{Consistency of Impredicative PTS}

\begin{document}

\maketitle

\begin{abstract}
In this paper we continue assessing the feasibility of the approach to the mechanization of type theory by using classical syntax and Stoughton's multiple substitutions and report some substantial progress.
We present formal proofs of confluence for beta-reduction and by using Takahashi's revision of Tait and Martin-L\"of's proof, subject reduction for the entire family of the Pure Type Systems and consistency for some impredicative subclass, assuming normalization.
As to the proof of confluence, we also develop a theory of alpha-commutative relations which, in our view, entails a clearer presentation and treatment of the problem than in similar developments.
Finally, we assess general merits and drawbacks of the approach.
The whole development has been machine-checked using Agda.
\end{abstract}

\section{Introduction}

The first ever formal definition of the substitution operation for the (pure) $\lambda$-calculus can be probably attributed to Curry-Feys \cite{curry1958}. Such definition is not primitive recursive; in the complex case of $\lambda$-abstractions it invokes itself twice, once to rename the bound variable in order to prevent name capture, and another to perform the actual substitution. 
The second time is on an expression that is not a strict component of the original.
Formally:
\[
(\lambda x M)[y:=N] = \lambda z(M[x:=z][y:=N])
\]
where $z$ is some fresh name for $N$. 
An unfortunate consequence of the nature of this definition is that
proofs about 
meta-theoretical results 
that mention substitution, 
either directly or indirectly (virtually all), have to be carried out by well-founded induction on the length of the $\lambda$-terms in order to have fit hypotheses. 
It does not seem feasible to undertake any major mechanization considering this definition. 

There have been a lot of alternatives proposed to avoid renaming the bound variables during substitution, all of them with their particular benefits and drawbacks. 
Possibly the most popular one is that suggested by de Bruijn and which removes the names from the syntax altogether (\textit{de Bruijn indices}, dBI) \cite{debruijn1972}.
Instead, variables are identified with natural numbers; those that occur bound to some abstraction denote the count of binders in between one must traverse in the abstract syntax tree (AST) bottom-up until finding its binding location, while the free or real ones refer to its position in the context of declarations (assuming the term we are referring to is the subject of some judgment).
Although this syntax may be convenient for machines, it is not so for humans, hence to fully address type theory under this paradigm we believe one should also relate the AST with its interface, i.e., the concrete syntax, and which is rarely done.\footnote{Some mechanizations do address this issue, e.g., \cite{barras96b}.}
A more conservative approach is to use a mix of names for the free variables and natural numbers for the bound ones. This technique is known as locally nameless \cite{chargueraud2012}.
Actually, it may be regarded as a specialization of the more general technique in which any two different sorts of names are used 
(McKinna and Pollack use this technique to formalize several results about an extension of the PTS, the Cumulative Type System or CTS, in LEGO \cite{mckinna93,mckinna99,pollack94phd}).
Under this approach 
there are some expressions that do not have an ordinary interpretation, i.e., those that mention bound names which are actually not bound to any abstraction. 
Since substitution does not perform any renaming, it cannot operate on these, otherwise variable capture might occur. 
To avoid these expressions, 
a wellformedness predicate 
becomes
necessary and which ends up polluting most of meta-theoretical results.
Yet another well-known approach
involves the use of functions from the meta-level to represent binders, i.e., higher-order abstract syntax (HOAS) \cite{pfenning1988}. This coding technique delegates the problems arising from renaming variables to the meta-level. 
To the best of our knowledge, 
the only mechanization of type theory that uses this technique (a restricted form actually, \textit{weak} HOAS \cite{despeyroux1995}) ``comparable'' to our presentation (in the sense that it formalizes complex results such as cut, syntactic validity, product injectivity, subject reduction, etc., for a type theory with dependent types) is that by Urban et al. \cite{urban2011} and by using Isabelle/HOL \cite{nipkow2002}.\footnote{There are a number of other interestings formalizations of relevant meta-theoretical results for type theory using HOAS, e.g., normalization both for STLC \cite[Section~4.1]{abel2019} and System~F \cite{donelly2007}, just to mention some. Nevertheless, they target on quite different systems and/or properties to the ones we focus on in here, it bears repeating.}
Although it is a remarkable work and much more extensive than ours, it worths mentioning a couple of spiny limitations it has: it is not possible to generate executable code from the definitions and decidability questions such as typeability are not possible to formulate, at least directly.

There is, however, a quite elegant and simple solution that retains the classical syntax 
and which deserves attention as pointed out in \cite{tasistro2015}. Stoughton proposed to use simultaneous substitutions in order to perform the renaming of the bound variables at the same time the original substitution takes place \cite{stoughton1988}. 
If we denote a simultaneous substitution by a function $\sigma$ from the variables to the $\lambda$-terms, and the operation itself by $\_\bullet\_$ then the equation for $\lambda$-abstractions becomes:
\[
(\lambda x M)\bullet\sigma = \lambda y (M \bullet \sigma,y:=z)
\]
where $z$ is some fresh name and $\_,\_{:=}\_$ is the obvious update operator on substitutions that overrides some given image.
This operation can actually be defined by structural recursion since it only requires one self-invocation on a strict sub-expression.

This approach, which materialized itself as an Agda library in \cite{tasistro2015}, has been used since then to formalize some results about different type theories: 
in \cite{copello2017} it was used to formalize, first, the Church-Rosser (CR) theorems following the classical proof by Tait and Martin-L\"of for the pure $\lambda$-calculus \cite{barendregt84}, and secondly, subject reduction (SR) for the Simply-typed $\lambda$-calculus (STLC) à la Curry \cite{hindley1997}; 
in \cite{copes2018} the standardization theorem was mechanized following a proof by Kashima \cite{kashima2000}; 
in \cite{urciuoli2023} the framework was extended with constants in the syntax and then it was used to formalize the proof of strong normalization for System~T by Girard \cite{girard1989}, and; 
finally, in \cite{urciuoli2025} the framework was extended once again, this time in several directions, and used to formalize some dependently-typed $\lambda$-calculus, namely the Pure Type Systems (PTS) \cite{barendregt1991,barendregt92}, along with some meta-theoretical properties (thinning or weakening, cut, syntactic validity and closure under $\alpha$-conversion).\footnote{Some of the main changes introduced in the second revision of the framework were: the type of variables was generalized from natural numbers to  any denumerable type; the syntax was updated with Church-style $\lambda$-abstractions and $\Pi$-types were added; a key definition, namely \textit{restrictions}, which denotes the confinement of substitutions to finite domains, was updated by a more flexible one in order to establish some results more precisely in the new context.}
The PTS is a framework for the study of several related type theories with $\lambda$-terms. 
It can be seen as a generalization of the $\lambda$-cube, so as such, 
it contains many interesting systems: 
STLC; System~F \cite{girard1989}; the Edinburgh Logical Framework (LF) \cite{barendregt92,harper93,harper2005,sorensen2006}, 
the Calculus of Constructions (CC) \cite{coquand1988}; etc.
Ultimately, we would like to formalize a proof of decidability for some subset of the PTS; from this result we would obtain correct-by-constructions type-checking algorithms. 

The aforementioned work revealed that actually only a couple of lemmas out from dozens had to be proven by well-founded induction.
Also, the numbers of lines of code (LoC) showed that the size of the development did not explode by any means and it stayed more or less on par with similar mechanizations using dBI or some of its variants. All in all, the approach seems to be feasible, however, more work should be carried on to have a better judgment, specially when using dependent types.

\paragraph{Contributions}
We shall continue with our approach to type theory and use Agda to formalize: first, the CR theorems for the underlying syntax of the PTS and by using Takahashi's revision of Tait and Martin-L\"of's proof \cite{takahashi1995} (similarly to McKinna and Pollack but from a fresh and cleaner perspective, i.e., considering $\alpha$-conversion explicitly, aside from the obvious fact that we use a different approach to the syntax); 
second, SR (following some ideas from the formalizations by McKinna and Pollack we well), and; 
thirdly, consistency in the empty context for some impredicative class of PTS and by extending Coquand's pen-and-paper proof for CC \cite{coquand1990}.\footnote{Consistency of CC has been completely formalized using dBI \cite{coqincoqSources}, though it has not been published to the best of our knowledge. 
There is also a partial mechanization of consistency (assuming normalization) for a similar theory to Coq in Coq using dBI and an inductively-defined empty type in the context \cite{sozeau2025}.}
Normalization is assumed.\footnote{It does not seem possible to prove normalization of impredicative theories such as CC in Agda (see \cref{sec:conclusions}).}

\paragraph{Sources}
This work has been fully verified using Agda v2.6.2.2 \cite{agda2622} and the standard library v1.7.1 \cite{agdastdlib171}. 
The source can be downloaded from \cite{sources} and browsed in \cite{html}.
If this text is read on a computer, sources of definitions and lemmas can be examined by clicking on their respective links, e.g., \sourcelink{Section2/Syntax.html}{syntax}.

\paragraph{Outline}
The structure of this paper is as follows. 
In the next section we shall present the framework of Stoughton's multiples substitutions for the syntax of the PTS from \cite{urciuoli2025} so this work is self-contained. 
In \cref{sec:confluence} we present our proof of confluence.
In \cref{sec:sr} we introduce the PTS and its main meta-theoretical properties also from \cite{urciuoli2025}, and we formalize SR as well.
In \cref{sec:consistency} we formalize consistency.
Lastly, in \cref{sec:conclusions} we compare our work with other related work and draw some conclusions. 

\section{The Framework of Multiple Substitutions}
\label{sec:stoughton}

This section presents the Agda library developed in \cite{urciuoli2025} for the meta-theory of the underlying syntax of the PTS using Stoughton's substitution. 
Among many other definitions, it contains the operation of substitution, $\alpha$- and $\beta$-conversion relations, as well as some compatibility results between these. 

\subsection{Syntax}

Let $\mathcal{C}$, the constants, be any type and ranged with the letter $c$, and let $\mathcal{V}$, the variables, be any \sourcelink{Section2/Enum.html}{enumerable} type and ranged with letters $x$, $y$, etc. 
The abstract \sourcelink{Section2/Syntax.html}{syntax} of the $\lambda$-terms is given by the grammar below:
\[
(\Lambda) \quad M, N, A, B ::=  \mathsf{c}\ s\ |\ \mathsf{v}\ x\ |\ \lambda[x:A]M\ |\ \Pi[x:A]M\ |\ M \cdot N
\]
 
We define by recursion on the syntax the function \Verb|fv : Λ → List 𝒱| that returns the list of \sourcelink{Section2/FreeVariables.html}{free variables} in a given $\lambda$-term in the usual way; we omit its definition.
By using \Verb|fv| we define the predicates of \sourcelink{Section2/Occurrence.html}{occurrence} of a variable in a $\lambda$-term, with type \Verb|_*_ : 𝒱 → Λ → Set|, and its opposite, \sourcelink{Section2/Freshness.html}{freshness}, \Verb|_#_ : 𝒱 → Λ → Set|.
We refer the interested reader to the sources.

\subsection{Substitutions}

In our framework, \sourcelink{Section2/Substitutions.html}{substitutions} are multiple or simultaneous and they are identified with functions from variables to $\lambda$-terms:
\begin{Verbatim}
Sub = 𝒱 → Λ
\end{Verbatim}
We shall use letters $\rho$, $\sigma$ and $\tau$, possibly with primes, to range them.

We define \Verb|ι : Sub| as the \sourcelink{Section2/Identity.html}{identity} substitution, i.e., the function that maps every variable to itself as a $\lambda$-term.
We also define an \sourcelink{Section2/Update.html}{update} operator on substitutions \Verb|_‚_:=_ : Sub → 𝒱 → Λ → Sub| that overrides the image of some element in the domain, i.e., $(\sigma , x := M)y$ yields $M$ if $x$ equals to $y$ and $\sigma y$ otherwise. 

To define relations and prove properties about substitutions by extension it turns out convenient to confine their domains. Thus we introduce \sourcelink{Section2/Restrictions.html}{restrictions}, which are simply pairs of substitutions and list of names, to be written $(\sigma,xs)$:
\begin{Verbatim}
Res = Sub × List 𝒱
\end{Verbatim}

Now, to define Stoughton's substitution operation, first we have to be able to select some fresh name for the image of any given restriction; we will use these names to rename the bound variables in the $\lambda$-terms at issue in order to avoid name capture during the substitution operation.
Actually, this function is defined upon a more elementary one that choses a fresh name for a \textit{$\lambda$-term}. 
We will only show their types, and we we refer the reader to \cite{tasistro2015} for their definitions as well as for a description:
\begin{Verbatim}
X' : Λ → 𝒱
X : Res → 𝒱
\end{Verbatim}
Above, \Verb|X'| is to be read \sourcelink{Section2/ChiPrime.html}{\it chi prime}, and \Verb|X| as \sourcelink{Section2/Chi.html}{\it chi}. 
    
In the aforementioned work it was proven that the functions above actually returns a fresh name. Let \sourcelink{Section2/FreshnessRes.html}{freshness be extended to restrictions} by:
\begin{Verbatim}
x #⇂ (σ , xs) = ∀ y → y ∈ xs → x # σ y
\end{Verbatim}
Then we have:
\begin{lemma}
\begin{enumerate}[ref=\thelemma.(\roman{enumi}),itemsep=0pt]
	\myitem{Section2/LemmaChiPrime.html} 
	\label{lemma:chiPrime}
	\Verb|∀ xs → X' xs ∉ xs|
	
	\myitem{Section2/LemmaChi.html} 
	\label{lemma:chi}	
	\Verb|∀ σ xs → X (σ , xs) #⇂ (σ , xs)|
\end{enumerate}
\end{lemma}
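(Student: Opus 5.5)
The plan is to fix concrete definitions of the two choice functions and then argue directly from them. Although \Verb|X'| was displayed with type \Verb|Λ → 𝒱|, the statement applies it to a list of names, so I read it as a choice function on \Verb|List 𝒱| (applied to a term $M$ through \Verb|fv M|). Enumerability of $\mathcal{V}$ gives a bijection with $\mathbb{N}$, i.e.\ functions \Verb|encode : 𝒱 → ℕ| and \Verb|decode : ℕ → 𝒱| with \Verb|encode (decode n) ≡ n|. I would then define
\[
\mathtt{X'}\ xs = \mathtt{decode}\,(1 + \mathtt{max}\,(\mathtt{map\ encode}\ xs)),
\qquad
\mathtt{X}\,(\sigma,xs) = \mathtt{X'}\,(\mathtt{concatMap}\,(\mathtt{fv}\circ\sigma)\,xs).
\]
The names $\mathtt{X'}$ and $\mathtt{X}$ here stand for the verbatim functions \Verb|X'| and \Verb|X| of the paper.

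For (i), I first prove by induction on $xs$ an auxiliary fact: if $y \in xs$ then $\mathtt{encode}\,y \le \mathtt{max}\,(\mathtt{map\ encode}\ xs)$. In the cons case, $y$ is either the head, where the bound holds because $\mathtt{max}$ dominates its first argument, or lies in the tail, where it holds by the induction hypothesis and monotonicity of $\mathtt{max}$. Now suppose $\mathtt{X'}\,xs \in xs$. Applying the auxiliary fact gives $\mathtt{encode}\,(\mathtt{decode}\,(1+m)) \le m$, where $m$ is the maximum. Since $\mathtt{encode} \circ \mathtt{decode}$ is the identity, this reduces to $1+m \le m$, which is absurd.

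For (ii), unfold the freshness of a restriction. Given $y \in xs$, we must show $\mathtt{X}\,(\sigma,xs) \mathrel{\#} \sigma y$. This takes two lemmas. The first is a list lemma: $z \in \mathtt{fv}\,(\sigma y)$ and $y \in xs$ imply $z \in \mathtt{concatMap}\,(\mathtt{fv}\circ\sigma)\,xs$. It follows by induction on $xs$, combining membership in the head segment or the tail via the standard append-membership lemmas. The second is the link between freshness and free variables: $z \mathrel{\#} M$ follows from $z \notin \mathtt{fv}\,M$. This is essentially by definition, since freshness is introduced as the negation of occurrence and occurrence as membership in $\mathtt{fv}$. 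Combining them, if $\mathtt{X}\,(\sigma,xs)$ occurred in $\sigma y$, it would belong to the concatenated list, contradicting (i) applied to that list.

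I expect the main obstacle to be bookkeeping rather than ideas. The first part is the interface to enumerability: if the library's notion of an enumerable type only provides an injection into $\mathbb{N}$ or a surjection from $\mathbb{N}$, I would instead define $\mathtt{X'}\,xs$ as the first element of an enumeration whose index exceeds those of all elements of $xs$, and prove (i) from injectivity. The second part is the \Verb|concatMap| membership lemma; everything else is direct unfolding of the definitions.
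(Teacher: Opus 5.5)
Your argument is correct, and its skeleton is the standard one. The paper proves nothing here itself. It only defers to the library of Tasistro et al., where $\mathtt{X}\,(\sigma,xs)$ is likewise $\mathtt{X'}$ applied to the concatenation of the lists $\mathtt{fv}\,(\sigma y)$ for $y \in xs$. There, (ii) follows from (i) through exactly your membership lemma. You are also right to read $\mathtt{X'}$ as acting on lists, since the displayed type $\Lambda \to \mathcal{V}$ clashes with the statement.

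The real difference is in (i). Lacking the definitions, you verify the lemma for your own $\mathtt{X'}$, the decoding of $1+\max$ of the codes, which turns (i) into a one-line inequality. The paper's remark that the new binder is ``the first name available'' suggests the library instead returns the least fresh name of the enumeration. For that definition, (i) comes from a search that returns its own non-membership witness. Your inequality is exactly what bounds that search, since $\mathtt{decode}\,(1+m)$ is always fresh. Part (ii) transfers unchanged to either choice of $\mathtt{X'}$.

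One caveat concerns your fallback. An injection $\mathcal{V}\to\mathbb{N}$ or a surjection $\mathbb{N}\to\mathcal{V}$ alone cannot work, because $\mathcal{V}$ might then be finite, and (i) fails for a list containing every name. What is really needed is what your main version uses: an injective $\mathtt{decode}$ with left inverse $\mathtt{encode}$. The paper's assumption that $\mathcal{V}$ is denumerable supplies this.
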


Now the \sourcelink{Section2/Substitution.html}{substitution operation} can be defined by structural recursion:
\begin{Verbatim}
_∙_ : Λ → Sub → Λ
c k ∙ _ = c k
v x ∙ σ = σ x
M · N ∙ σ = (M ∙ σ) · (N ∙ σ)
λ[ x ∶ A ] M ∙ σ = λ[ y ∶ A ∙ σ ](M ∙ σ , x := v y) where y = X (σ , fv M - x)
Π[ x ∶ A ] B ∙ σ = Π[ y ∶ A ∙ σ ](B ∙ σ , x := v y) where y = X (σ , fv B - x)
\end{Verbatim}
\normalsize

The only interesting cases to mention in the definition of the substitution operation are that of $\lambda$-abstractions and $\Pi$-types. There, we always rename the bound variables to prevent any possible name clash. 
By using the same substitution to record such renaming we only need to invoke a recursive call once. 
The new name, $y$, is chosen so it is fresh for the image of the free variables in $M$ under $\sigma$ except for $x$ since it is renamed anyway. Actually, if we considered $x$ to build the list of free names in the image, then $y$ might not be the \textit{first} name available, since $x$ in $\sigma$ may introduce free names that are not troublesome but they would be still dismissed anyway.

We shall abbreviate \sourcelink{Section2/Unary.html}{unary substitutions} by: 
\begin{Verbatim} 
M [ x := N ] = M ∙ (ι ‚ x := N)
\end{Verbatim}

The following results were proven in \cite{urciuoli2025}:
\begin{lemma}
\normalfont
\small
\begin{enumerate}[ref=\thelemma.(\roman{enumi}),itemsep=0pt]
	\myitem{Section2/SubLemma1.html}
	\label{lemma:factorizeUpd}		
	\Verb|∀ {M N σ x} → M ∙ (σ ‚ x := (N ∙ σ)) ≡ M ∙ (ι ‚ x := N) ∙ σ|
	
	\myitem{Section2/SubLemma2.html}
	\label{lemma:expandUpd}
	\Verb|∀ {x z M N σ} → z ∉ fv M - x → M ∙ (σ ‚ x := N) ≡ M [ x := v z ] ∙ (σ ‚ z := N)|
\end{enumerate}
\end{lemma}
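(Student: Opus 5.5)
Both items are proven by structural induction on $M$ with $\sigma$, $N$ and the variables generalized; no $\alpha$-conversion is needed because the choice of fresh names is uniform. For (i) the constant and application cases are immediate, the latter from the two induction hypotheses. In the variable case $\mathsf{v}\,y$ we compare $y$ with $x$. If $y = x$, both sides reduce to $N \bullet \sigma$. Otherwise the left side is $\sigma y$ and the right side is $(\mathsf{v}\,y)\bullet\sigma = \sigma y$.

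The binder cases $\lambda[y:A]M$ and $\Pi[y:A]B$ carry the real content. On the left the fresh name is $z = \chi(\sigma,x{:=}N\bullet\sigma,\ \mathsf{fv}\,M - y)$. On the right we first obtain $w = \chi(\iota,x{:=}N,\ \mathsf{fv}\,M - y)$ and then $z' = \chi(\sigma,\ \mathsf{fv}(M\bullet(\iota,x{:=}N,y{:=}w)) - w)$. The key auxiliary fact is that $\chi$ depends only on the set of free variables of the image. I would establish it as a lemma: if for every $u$ in $xs$ the terms $\sigma u$ and $\sigma' u$ have the same free variables, then $\chi(\sigma,xs) = \chi(\sigma',ys)$ for suitable lists with the same images. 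This should follow from the definition of $\chi$ via $\chi'$ on the concatenated free-variable lists, up to set equality; it may be necessary to show that $\chi'$ is invariant under permutation or duplication.

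We also need that $\mathsf{fv}(P\bullet\rho) = \bigcup_{u\in\mathsf{fv}\,P}\mathsf{fv}(\rho u)$ as sets. Combining the two gives $z = z'$. The bodies then match by the induction hypothesis applied with the substitution $(\sigma,y{:=}\mathsf{v}\,z)$. What remains is the extensional fact that $M\bullet\rho$ depends only on $\rho$ restricted to $\mathsf{fv}\,M$, together with a composition of updates:
\[
(\sigma, x{:=}N\bullet\sigma, y{:=}\mathsf{v}\,z)
\quad\text{agrees on } \mathsf{fv}\,M \text{ with}\quad
u \mapsto (\iota,x{:=}N,y{:=}\mathsf{v}\,w)\,u \bullet (\sigma,w{:=}\mathsf{v}\,z).
\]
The case $u = x$ needs $N\bullet\sigma = N\bullet(\sigma,w{:=}\mathsf{v}\,z)$, which holds by freshness of $w$ via \cref{lemma:chi}. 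Rather than proving (i) only for the unary form, I would prove a general composition law $M\bullet\rho\bullet\sigma \equiv M\bullet(\rho;\sigma)$ with $(\rho;\sigma)u = \rho u\bullet\sigma$, and derive (i) as the instance $\rho = (\iota,x{:=}N)$.

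Item (ii) then follows from (i) and extensionality. By (i), $M[x{:=}\mathsf{v}\,z]\bullet(\sigma,z{:=}N) = M\bullet(u\mapsto(\iota,x{:=}\mathsf{v}\,z)u\bullet(\sigma,z{:=}N))$. This substitution sends $x$ to $N$ and sends every other $u\in\mathsf{fv}\,M$ to $(\sigma,z{:=}N)u = \sigma u$, since $u\neq z$ by the hypothesis $z\notin\mathsf{fv}\,M - x$. Hence it agrees with $(\sigma,x{:=}N)$ on $\mathsf{fv}\,M$.

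The main obstacle is the binder case of the composition law. It requires showing that the two chosen fresh names coincide, or else tolerating different names through $\alpha$-equivalence; syntactic equality forces the former. I would first try to prove that $\chi$ is determined by the set $\bigcup_{u\in xs}\mathsf{fv}(\sigma u)$, and fall back to structuring the proof around that invariant if the list-level definition resists it.
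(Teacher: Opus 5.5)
The paper does not prove this lemma; both items are quoted from \cite{urciuoli2025}, whose substitution library descends from \cite{tasistro2015}. There is no in-paper argument to compare against, so your proposal has to be judged on its own, and on those terms it is correct.

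Generalizing to a syntactic composition law $M\bullet\rho\bullet\sigma\equiv M\bullet(\rho;\sigma)$ is the right move. The unary statement does not carry through the induction: in the binder case the inner substitution on the right becomes $(\iota,x{:=}N,y{:=}\mathsf{v}\,w)$, which is not of unary form.

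The law really holds up to $\equiv$ and not merely $\sim_\alpha$. In the binder case the two chosen names are $\chi(\sigma,\mathsf{fv}(M\bullet(\rho,x{:=}\mathsf{v}\,y))-y)$ and $\chi((\rho;\sigma),\mathsf{fv}\,M-x)$. Since $y$ is fresh for every $\rho u$ with $u\in\mathsf{fv}\,M-x$, both images have free-variable set $\bigcup_{u\in\mathsf{fv}\,M-x}\mathsf{fv}(\rho u\bullet\sigma)$. Both items then follow from the law plus extensionality exactly as you describe; in (ii), the hypothesis $z\notin\mathsf{fv}\,M-x$ gives $u\neq z$.

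Three points in the write-up should be tightened:
\begin{itemize}
\item Your $\chi$-invariance lemma is mis-stated: it assumes something about $xs$ and concludes about an unrelated $ys$. What you need is that $\chi(\sigma,xs)=\chi(\sigma',ys)$ whenever $\bigcup_{u\in xs}\mathsf{fv}(\sigma u)$ and $\bigcup_{u\in ys}\mathsf{fv}(\sigma' u)$ have the same members. This is immediate if $\chi'$ returns the first name of the enumeration not occurring in its argument, as the paper's remark about ``the first name available'' indicates. It then depends only on membership, so no permutation or duplication lemma is needed.
\item The sentence saying the bodies match ``by the induction hypothesis applied with $(\sigma,y{:=}\mathsf{v}\,z)$'' is off. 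The hypothesis actually used is the general law for $M$, applied to the pair $(\iota,x{:=}N,y{:=}\mathsf{v}\,w)$ and $(\sigma,w{:=}\mathsf{v}\,z)$. This is followed by extensionality, using your displayed agreement.
\item In checking that agreement you also need the case $u\in\mathsf{fv}\,M$ with $u\neq x,y$, which requires $w\neq u$. This follows from \cref{lemma:chi}, because $w$ is fresh for $(\iota,x{:=}N)u=\mathsf{v}\,u$.
\end{itemize}
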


\subsection{Alpha-conversion}

\sourcelink{Section2/Alpha.html}{Alpha-conversion} is defined by induction using the next informal rules:

\begin{center}
\centering
\small
\AxiomC{}
\UnaryInfC{\alphaconv{k}{k}}
\bottomAlignProof
\DisplayProof
\quad
\AxiomC{}
\UnaryInfC{\alphaconv{x}{x}}
\bottomAlignProof
\DisplayProof
\quad
\AxiomC{\alphaconv{M}{M'}}
\AxiomC{\alphaconv{N}{N'}}
\BinaryInfC{\alphaconv{M \app N}{M' \app M'}}
\bottomAlignProof
\DisplayProof
\quad
\AxiomC{\alphaconv{A}{A'}}
\AxiomC{$M[x:=y] \equiv M'[x':=y]$}
\RightLabel{$\begin{cases}y \not\in\fv{M} - x\\y \not\in\fv{M'}-x'\end{cases}$}
\BinaryInfC{\alphaconv{\quant[x:A]M}{\quant[x':A']M'}}
\bottomAlignProof
\DisplayProof
\end{center}
where $\quant\in\{\lambda,\Pi\}$.

Again, the only cases worth some words are that of $\lambda$-abstractions and $\Pi$-types (\quant).
There, to derive that such syntactic classes are $\alpha$-convertible we require that their bodies be \textit{equal} after their respective bound variables have been replaced by a common fresh name. In \cite{urciuoli2025}, it was proven that this definition is equivalent to a more standard one that asks for their bodies to be \textit{$\alpha$-convertible} instead. This is because the substitution operation enforces a uniform renaming (by the choice function) that turns any two $\alpha$-convertible $\lambda$-terms syntactically equal.

The next results are from \cite{urciuoli2025}:
\begin{lemma}
\begin{enumerate}[ref=\thelemma.(\roman{enumi}),label=(\roman{enumi}),itemsep=0pt]
	\myitem{Section2/AlphaSub.html}
	\Verb|∀ {M M' N N' x} → M ∼α M' → N ∼α N' → M [ x := N ] ∼α M' [ x := N' ]|
	\label{lemma:compatAlphaSubUnary}
	
	\item $\alpha$-conversion is a \sourcelink{Section2/AlphaReflexive.html}{reflexive}, \sourcelink{Section2/AlphaSymmetric.html}{symmetric} and \sourcelink{Section2/AlphaTransitive.html}{transitive}.
	\label{lemma:alphaEquivalence}
\end{enumerate}
\end{lemma}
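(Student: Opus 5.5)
The lemma has two parts: (i) compatibility of unary substitution with $\alpha$-conversion, and (ii) that $\alpha$-conversion is an equivalence. Part (ii) I would do first, since (i) uses transitivity.

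\emph{Part (ii).} Reflexivity goes by structural induction on $M$. The only non-trivial case is $\quant[x:A]M$. There we take $y = \Verb|X'|(\fv{M})$, which is fresh by \cref{lemma:chiPrime}, so the side condition $y \notin \fv{M}-x$ holds. The equality premise $M[x:=y]\equiv M[x:=y]$ is then trivial.

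Symmetry goes by induction on the derivation. In the binder case we simply swap the two side conditions and flip the equation.

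Transitivity is the delicate one. Suppose we have $\quant[x:A]M \sim_\alpha \quant[x':A']M'$ via a fresh $y$, and $\quant[x':A']M' \sim_\alpha \quant[x'':A'']M''$ via a different fresh $y'$. To compose them we need a single common name. I would first prove an auxiliary lemma: if $M[x:=y]\equiv M'[x':=y]$ with $y$ fresh for both bodies, then the same holds for every $z$ fresh for both. Its proof rewrites with \cref{lemma:expandUpd}: for $z\notin \fv{M}-x$ we get $M[x:=z] \equiv M[x:=y][y:=z]$, and likewise for $M'$. The hypothesis then transports along the equation.

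With that lemma, we choose $z$ fresh for $\fv{M}\append\fv{M'}\append\fv{M''}$ using $\Verb|X'|$. We then chain the two equations at $z$, and use induction for the domains $A, A', A''$. One subtlety here is that freshness of $y$ for $M$ gives $y \notin \fv{M}-x$, not $y\notin\fv{M}$. So the expansion lemma must be applied exactly in the form stated, which is why its hypothesis is $z\notin\fv{M}-x$.

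\emph{Part (i).} A direct induction on $M\sim_\alpha M'$ gets stuck under binders, because $M[x:=N]$ re-enters the substitution with an updated, non-unary substitution. So I would generalise to arbitrary substitutions. The claim becomes: if $M\sim_\alpha M'$ and $\sigma x \sim_\alpha \sigma' x$ for all $x$ free in $M$, then $M\bullet\sigma \sim_\alpha M'\bullet\sigma'$.

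The intended route is the remark in the paper that substitution maps $\alpha$-equivalent terms to syntactically equal ones. That is, I would prove $M\sim_\alpha M' \Rightarrow M\bullet\sigma \equiv M'\bullet\sigma$. In the binder case this needs two facts:
\begin{itemize}
\item $\Verb|X|(\sigma,\fv{M}-x) = \Verb|X|(\sigma,\fv{M'}-x')$, which follows because $\alpha$-convertible terms have the same free variables (a small lemma proven by induction);
\item the bodies $M\bullet(\sigma,x:=y)$ and $M'\bullet(\sigma,x':=y)$ coincide. This follows by rewriting each via \cref{lemma:expandUpd} through $M[x:=z]\equiv M'[x':=z]$ and then using the premise.
\end{itemize}
Separately, a pointwise-$\alpha$ lemma for substitutions, $\sigma\sim_\alpha\sigma'$ on $\fv{M}$ implying $M\bullet\sigma\sim_\alpha M\bullet\sigma'$, goes by induction on $M$. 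In the binder case both sides choose possibly different fresh names, which is handled with the auxiliary renaming lemma from part (ii).

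Part (i) is then obtained by chaining with transitivity:
\[
M[x:=N] \equiv M'[x:=N] \sim_\alpha M'[x:=N'].
\]
The main obstacle is the binder case of the pointwise lemma, where the chosen names differ. I would handle it by passing through a third, jointly fresh name as in transitivity.
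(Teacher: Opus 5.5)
Your part (ii) and your fact (a), $\alphaconv{M}{M'}\Rightarrow M\bullet\sigma\equiv M'\bullet\sigma$, both go through. Splitting (i) into (a) plus invariance of substitution under pointwise $\alpha$-convertible substitutions (b) is the natural route for these definitions. The paper gives no argument of its own here; it imports both items from \cite{urciuoli2025}. There are two minor slips.

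\begin{itemize}
\item In your renaming lemma, the side condition of \cref{lemma:expandUpd} is on the intermediate name $y$, not on $z$. So $M[x:=z]\equiv M[x:=y][y:=z]$ actually holds for every $z$.
\item In the binder case of (a), what you need is that $\fv{M}-x$ and $\fv{M'}-x'$ have the same members. This comes from $M[x:=y]\equiv M'[x':=y]$ and the freshness of $y$. It does not follow from free-variable preservation for the whole binders, since that also mixes in the domains.
\end{itemize}

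The genuine gap is the binder case of (b). You correctly identify it as the crux, but the tool you propose cannot close it. Your renaming lemma consumes a syntactic equation between renamed bodies. In (b), such an equation is precisely the premise of the $\alpha$-rule you must produce, while the induction hypothesis only delivers $\alpha$-convertibility of the bodies. Routing through a third fresh name never turns $\sim_\alpha$ into $\equiv$.

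Meanwhile, the problem you anticipate does not arise. The name $w=\mathsf{X}(\sigma,\fv{B}-x)$ depends only on which variables occur free in the terms $\sigma z$ for $z\in\fv{B}-x$. Since $\alpha$-convertible terms have the same free variables, $(\quant[x:A]B)\bullet\sigma$ and $(\quant[x:A]B)\bullet\sigma'$ bind the same $w$. The case then closes in three steps:

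\begin{enumerate}
\item The domains are handled by the induction hypothesis on $A$.
\item The substitutions $(\sigma,x:=w)$ and $(\sigma',x:=w)$ agree up to $\alpha$ on $\fv{B}$. The induction hypothesis on $B$ therefore gives $\alphaconv{B\bullet(\sigma,x:=w)}{B\bullet(\sigma',x:=w)}$.
\item Applying your fact (a) with the renaming $[w:=w]$ yields $B\bullet(\sigma,x:=w)[w:=w]\equiv B\bullet(\sigma',x:=w)[w:=w]$. This is exactly the rule's premise with common name $w$, and its freshness side conditions hold trivially.
\end{enumerate}

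So the missing idea is to invoke (a) inside the proof of (b), to upgrade the $\alpha$-convertibility supplied by induction to the syntactic equality that the paper's $\alpha$-rule demands.
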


\subsection{Beta-conversion}
\label{sec:beta}

Next we will define small-step operational semantics: $\beta$-reduction.
We will diverge a bit from previous work in that we will treat \textit{many-step reduction} more rigorously and define it as a sequence of steps of $\beta$-contraction but not of $\alpha$-conversion;
in the next section we will develop a proof of the first CR theorem that fits best with this new characterization. 

Let \sourcelink{Section2/Beta.html}{$\beta$-contraction} be inductively 
defined with the single clause: 
\begin{prooftree}
\AxiomC{}
\LeftLabel{($\beta$)}
\UnaryInfC{\redex{(\lambda[x:A]M) \app N}{M[x:=N]}}
\end{prooftree}
Then \sourcelink{Section2/BetaReduction.html}{one-step $\beta$-reduction} (\Verb|_→β_|) is defined as its contextual closure;
\sourcelink{Section2/ManyStepBeta.html}{many-step $\beta$-reduction} (\Verb|_→β*₀_|) as the star closure of the former; and;
\sourcelink{Section2/BetaConversion.html}{$\beta$-conversion} (\Verb|_≃β_|) as the equivalence closure of \Verb|_→β_| augmented with $\alpha$-conversion as well (see the next paragraph for the definition of the closure operators):  
\begin{Verbatim}  
_→β_ = _→C_ _▹β_
_→β*₀_ = Star _→β_
_≃β_ = EqClosure (_∼α_ ∪ _→β_)
\end{Verbatim}
All of three definitions have type: \Verb|Λ → Λ → Set|. 

Let \binrel\ be any binary relation on $\Lambda$. 
Its \sourcelink{Section2/StarClosure.html}{reflexive and transitive closure},$\Starg\,\binrel$, is inductively defined by the next clauses: (i) \Star{\binrel}{M}{M} for any $M$, and; (ii) if $\binrel\,M\,N$ and \Star{\binrel}{N}{P} then \Star{\binrel}{M}{P}.
Its \sourcelink{Section2/EqClosure.html}{equivalence closure}, $\mathsf{EqClosure}\,\binrel$, is defined as the composition of the symmetric closure of \binrel\ followed by its star closure, 
where the \sourcelink{Section2/SymClosure.html}{symmetric closure} of \binrel\ is the union of \binrel\ and its inverse. 
Its contextual or \sourcelink{Section2/CxtClosure.html}{compatible-with-the-syntax closure},$\_{\rightarrow}\mathsf{C}\_\,\binrel$, is defined as usual, e.g., see \cite[Chapter~3]{barendregt84}.

Next we have some properties about $\beta$-reduction. 
Let \Verb|_𝒮_| be any binary relation on the $\lambda$-terms. We shall say \Verb|_𝒮_| is \sourcelink{Section2/AlphaComm.html}{$\alpha$-commutative},
written \Verb|CommAlpha _𝒮_|, iff: 
\begin{Verbatim}
∀ {M N P} → M ∼α N → N 𝒮 P → ∃ λ Q → M 𝒮 Q × Q ∼α P
\end{Verbatim}
Then we have:
\begin{lemma}
\begin{enumerate}[ref=\thelemma.(\roman{enumi}),itemsep=0pt]

\myitem{Section2/ManyStepCommAlpha.html}
\label{lemma:manyStepCommAlpha}
\Verb|CommAlpha _→β*₀_|

\myitem{Section2/CompatConvSub.html}
\label{lemma:compatConvSub}
\Verb|∀ {M N σ} → M ≃β N → M ∙ σ ≃β N ∙ σ|

\myitem{Section2/InversionManyStep.html}
\label{lemma:key}
\Verb|∀ {x A₁ B₁ C} → Π[ x ∶ A₁ ] B₁ →β*₀ C|\\
\Verb|→ ∃₂ λ A₂ B₂ → C ≡ Π[ x ∶ A₂ ] B₂ × A₁ →β*₀ A₂ × B₁ →β*₀ B₂|
\end{enumerate}
\end{lemma}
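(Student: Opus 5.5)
The lemma just stated has three parts, and I would prove each separately, always by induction on the derivations involved rather than on term size.

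For (i), $\alpha$-commutativity of $\_{\rightarrow}\beta{*}_0\_$, the core step is to show that one-step reduction $\_{\rightarrow}_\beta\_$ is itself $\alpha$-commutative. Once that holds, the many-step case is a routine induction on the $\mathsf{Star}$ derivation: given $M \sim_\alpha N$ and $N \rightarrow_\beta N_1 \rightarrow\beta{*}_0 P$, we obtain $Q_1$ with $M \rightarrow_\beta Q_1 \sim_\alpha N_1$, and then the induction hypothesis applied to $Q_1 \sim_\alpha N_1$ closes the case. For one-step reduction I would argue by induction on the contextual closure, inverting the $\alpha$-derivation of $M \sim_\alpha N$ in each case.

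The cases of (i) split as follows.
\begin{itemize}
\item Applications are immediate.
\item In the root case $(\lambda[x':A']M')\cdot N' \sim_\alpha (\lambda[x:A]M)\cdot N$, we contract to $M'[x':=N']$. From the abstraction rule we have $M[x:=y]\equiv M'[x':=y]$ for a fresh $y$. By \cref{lemma:expandUpd}, $M'[x':=N'] \equiv M'[x':=v\,y][y:=N']$, and likewise for $M$, so the two contracta are equal up to the difference between $N'$ and $N$. \Cref{lemma:compatAlphaSubUnary} then yields the required $\alpha$-equivalence.
\item In the binder case, where reduction happens under $\quant[x:A]M$, I expect the main obstacle. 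The $\alpha$-witness compares bodies only after renaming to $y$, so a step $M \rightarrow_\beta M_2$ must be transported to a step on $M'$. I would first prove that $\beta$-reduction is compatible with renaming substitutions: if $M\rightarrow_\beta M_2$ then $M\bullet\sigma \rightarrow_\beta M_2\bullet\sigma$ up to $\alpha$, with renamings preserving the redex shape. I would also use the fact that $M' \sim_\alpha M'[x':=y][y:=x']$ when $y$ is fresh. The reduct is then taken as $\quant[x':A']$ applied to the appropriately renamed $M_2$, and freshness of $y$ for $M_2$ follows because reduction does not create free variables.
\end{itemize}

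For (ii), compatibility of conversion with substitution, I would induct on the $\mathsf{EqClosure}$ derivation. Symmetry and transitivity go through directly, so it suffices to treat the base relations. For an $\alpha$ step, $M\sim_\alpha N$ implies $M\bullet\sigma \equiv N\bullet\sigma$, which is the uniform-renaming fact from \cite{urciuoli2025}. For a $\beta$ step I would show $M\rightarrow_\beta N$ implies $M\bullet\sigma \rightarrow_\beta N\bullet\sigma$ up to $\alpha$, by induction on the contextual closure. The root case rewrites $(M[x:=N])\bullet\sigma$, using \cref{lemma:factorizeUpd} and \cref{lemma:expandUpd}, into $(M\bullet\sigma,x:=v\,y)[y:=N\bullet\sigma]$.

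For (iii), inversion of many-step reduction from a $\Pi$-type, I would induct on $\mathsf{Star}$ and keep the same bound name $x$ throughout. A single step $\Pi[x:A_1]B_1 \rightarrow_\beta C$ cannot be at the root, since a $\Pi$ is not a redex, so it is a contextual step inside either $A_1$ or $B_1$ with $x$ retained. This is precisely why many-step reduction was defined without $\alpha$. Composing the resulting steps with the induction hypothesis gives $A_1 \rightarrow\beta{*}_0 A_2$ and $B_1 \rightarrow\beta{*}_0 B_2$ with $C \equiv \Pi[x:A_2]B_2$.
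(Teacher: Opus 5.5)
\paragraph{Main gap: the binder case of (i).}
Part (iii) and the overall strategy for (ii) are fine. Reducing (i) to one-step $\alpha$-commutativity plus induction on $\mathsf{Star}$ is exactly \cref{lemma:commutativity}. The gap is the binder case of one-step commutativity, which you correctly flag as the obstacle but do not resolve.

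There you have $\lambda[x':A']M' \sim_\alpha \lambda[x:A]M$ and $M \rightarrow_\beta M_2$ (likewise for $\Pi$), and you must exhibit a step whose source is $M'$ itself. Forward compatibility with renamings gives $M'[x':=y] \equiv M[x:=y] \rightarrow_\beta P_1$, and then $K \rightarrow_\beta P_2$ with $K = M'[x':=y][y:=x']$. But Stoughton's substitution renames every bound variable canonically, so $K$ is only $\alpha$-convertible to $M'$, not equal to it. What you obtain is therefore $M' \sim_\alpha K \rightarrow_\beta P_2$, which is precisely the configuration the lemma is meant to commute.

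The induction hypothesis does not cover it. The step from $K$ is a new derivation produced by your renaming lemma, not a subderivation of the one you induct on. Its source (essentially $M$ with $x$ renamed to $x'$) is in general not $\alpha$-convertible to $M$. Feeding the hypothesis the genuine $\alpha$-partner $M'[x':=x]$ of $M$ does not help either: it again yields a step only from a renaming of $M'$. As written, the case is circular.

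\paragraph{What is missing.}
You need reflection of $\beta$-steps along renamings: if $\rho$ is a renaming and $M\bullet\rho \rightarrow_\beta P$, then $M \rightarrow_\beta R$ for some $R$ with $R\bullet\rho \sim_\alpha P$.
\begin{itemize}
\item It is proved by structural induction on $M$, generalizing $\rho$; renamings are closed under updating with a variable.
\item This is where ``renamings preserve redex shape'' really matters: a renaming cannot turn an application headed by a variable into a redex. The forward direction works for any substitution and does not need this.
\item With reflection the binder case is immediate. Reflect $M'[x':=y] \rightarrow_\beta P_1$ along $(\iota, x':=v\,y)$ to get $M' \rightarrow_\beta R$ with $R[x':=y] \sim_\alpha P_1 \sim_\alpha M_2[x:=y]$. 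Then close with the body-$\alpha$ form of the binder rule, using $y\notin \mathsf{fv}\,R - x'$ and $y\notin\mathsf{fv}\,M_2 - x$.
\item In fact, reflection along $\iota$, forward compatibility, and $M\sim_\alpha N \Rightarrow M\bullet\iota \equiv N\bullet\iota$ give one-step commutation directly.
\end{itemize}
The only other way out is the one you explicitly excluded. That is well-founded induction on term size, or on derivation height with a height-preserving renaming lemma, so that the hypothesis applies to the renamed body.

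\paragraph{Minor points in (ii).}
In the root case, $(M[x:=N])\bullet\sigma$ and $(M\bullet(\sigma,x:=v\,y))[y:=N\bullet\sigma]$ agree only up to $\alpha$, because each image $\sigma\,u$ reappears as $\sigma\,u\bullet\iota$. \Cref{lemma:factorizeUpd} and \cref{lemma:expandUpd} cannot produce this identity on their own, since both place the unary substitution first; you need a composition lemma. Also, under a binder the fresh names chosen for $(\lambda[x:A]K)\bullet\sigma$ and $(\lambda[x:A]K')\bullet\sigma$ may differ. Neither point hurts, since you only claim the step up to $\alpha$.
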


\noindent \cref{lemma:key} is sort of an inversion lemma and it will become useful in the proof of SR.

\section{Confluence}
\label{sec:confluence}

\begin{figure}[t]
    \centering
    \begin{subfigure}[b]{0.24\textwidth}        
    	\centering
     	\resizebox{0.8\linewidth}{!}{
            \begin{tikzpicture}
                \coordinate (M) at (0,1);
                \coordinate (N) at (-1,0);
                \coordinate (P) at (1,0);
                \coordinate (Q) at (0,-1);
                
                \draw[->] (M) -- (N);
                \draw[->] (M) -- (P);
                \draw[->,dashed] (N) -- (Q);
                \draw[->,dashed] (P) -- (Q);
            \end{tikzpicture}
        }
        \caption{}
        \label{subfig:diamond}        
    \end{subfigure}      
    \begin{subfigure}[b]{0.24\textwidth}
    	\centering
		\resizebox{0.8\linewidth}{!}{
            \begin{tikzpicture}
\draw[->] (0,1)-- (-1.0517548805800188,0.23999283706445115);
\draw[->,dashed] (-1.0517548805800188,0.23999283706445115)-- (-0.653955247855192,-0.995140624850181);
\draw[dashed] (-0.653955247855192,-0.995140624850181)-- (0.6436533264609949,-0.9984879220201989);
\draw[<-,dashed] (0.6436533264609949,-0.9984879220201989)-- (1.0478198967568657,0.23457679647291607);
\draw[<-] (1.0478198967568657,0.23457679647291607)-- (0,1);
            \end{tikzpicture}
        }
        \caption{}
        \label{subfig:pentagon}        
    \end{subfigure}      
    \begin{subfigure}[b]{0.24\textwidth}
    	\centering
     	\resizebox{0.8\linewidth}{!}{
            \begin{tikzpicture}
                \coordinate (M) at (-0.6,0.6);
                \coordinate (N) at (1,0.6);
                \coordinate (P) at (-1,-0.6);
                \coordinate (Q) at (0.6,-0.6);
                
                \draw[->] (M) -- (N);
                \draw (M) -- (P);
                \draw[dashed] (N) -- (Q);
                \draw[->,dashed] (P) -- (Q);
            \end{tikzpicture}
        }
        \vspace{0.5pt}
        \caption{}
        \label{subfig:comm}
    \end{subfigure}     
    \begin{subfigure}[b]{0.24\textwidth}
    	\centering
		\resizebox{0.8\linewidth}{!}{
            \begin{tikzpicture}
                \coordinate (M) at (0.15,1.35);
                \coordinate (N) at (1,0.5);
                \coordinate (P) at (-0.25,-0.75);
                \coordinate (Q) at (-0.75,-0.75);
                \coordinate (R) at (-1.35,-0.15);

                \draw[->] (M) -- (N);
                \draw[->>] (M) -- (R);   
                \draw[->,dashed] (R) -- (Q);                             
                \draw[->>,dashed] (N) -- (P);
                \draw[dashed] (P) -- (Q);
            \end{tikzpicture}
        }
        \caption{}
        \label{subfig:strip}
    \end{subfigure}
    \caption{Properties of Binary Relations}
\end{figure}

Following the literature 
we define the \sourcelink{Section3/Diamond.html}{diamond property} by:
\begin{Verbatim}
Diam _𝒮_ = ∀ {M N P} → M 𝒮 N → M 𝒮 P → ∃ λ Q → N 𝒮 Q × P 𝒮 Q
\end{Verbatim}
\Cref{subfig:diamond} illustrates it. Vertices are $\lambda$-terms and edges relationship between them. An arrow indicates that the source vertex, say $M$, is related to the target vertex, say $N$, under such relation, i.e., $M\,\binrel\,N$. A line without the arrow head
means that the vertices are $\alpha$-convertible.

In some of the proofs related to the properties above, straight lines will indicate hypotheses while dashed lines will indicate thesis. Particularly, endpoints of thesis will denote $\lambda$-terms that must be found.

If $\mathsf{Star}\,\binrel$ satisfies the diamond property, then we will say \binrel\ is \textit{confluent} in the traditional sense.

The proof of confluence of $\beta$-reduction by Tait and Martin-L\"of can be summarized in two steps.
The first one is a general result about relations. It consists on establishing that the diamond property is preserved by the star closure.
The second step is to find a suitable reduction relation, namely parallel reduction and written \parallelg, such that both it satisfies the diamond property (basing upon the previous result) and its star closure coincides with many-step $\beta$-reduction. Then, by virtue of the first part that it follows $\beta$-reduction is confluent.

Let \sourcelink{Section3/ParallelRed.html}{parallel reduction} be defined below similar to \cite[p.~60]{barendregt84} 
but with some extra rules for our syntax:
\begin{center}
\small
\AxiomC{}
\UnaryInfC{\parallel{k}{k}}
\bottomAlignProof
\DisplayProof
\quad
\AxiomC{}
\UnaryInfC{\parallel{x}{x}}
\bottomAlignProof
\DisplayProof
\quad
\AxiomC{\parallel{A}{A'}}
\AxiomC{\parallel{M}{M'}}
\BinaryInfC{\parallel{\quant[x:A]M}{\quant[x:A']M'}}
\bottomAlignProof
\DisplayProof
\quad
\AxiomC{\parallel{M}{M'}}
\AxiomC{\parallel{N}{N'}}
\BinaryInfC{\parallel{M \app N}{M' \app N'}}
\bottomAlignProof
\DisplayProof
\quad
\AxiomC{\parallel{M}{M'}}
\AxiomC{\parallel{N}{N'}}
\BinaryInfC{\parallel{(\lambda[x:A]M)\app N}{M'[x:=N']}}
\bottomAlignProof
\DisplayProof
\end{center}
\noindent With $\quant\in\{\lambda,\Pi\}$.
Unfortunately, \parallelg\ does \textit{not} satisfy the diamond property in our formal setting. The following illustration provides a counterexample (we ignore domains in $\lambda$-abstractions which are not relevant for the point about to be made and assume names are ordered: $x_0$, $x_1$, $x_2$ \dots):
\begin{center}
\begin{tikzpicture}
	\small
    \node (M) at (0,1) {$(\lambda[x_0]\lambda[x_1]x_0)\cdot((\lambda[x_2]x_0) \cdot x_1)$};
    \node[below left = 0.2cm of M] (N) {$\lambda[x_2]((\lambda[x_2]x_0) \cdot x_1)$};
    \node[below right = 0.2cm of M] (P) {$(\lambda[x_0]\lambda[x_1]x_0 )\cdot x_0$}; 
    
    \node[below right = 0.2cm of N] (N') {$\lambda[x_2]x_0$};   
         
    \node[below left = 0.2cm of P] (P') {$\lambda[x_1]x_0$};    
            
    \draw[->] (M) -- (N);
    \draw[->] (M) -- (P);
    \draw[->] (N) -- (N');    
    \draw[->] (P) -- (P');  
\end{tikzpicture}
\end{center}        

The above illustration shows two different reduction strategies for the $\lambda$-term in the top.
On the left hand-side, first we have contracted the left-most redex and then the only left redex, leading to: $\lambda[x_2]x_0$.
On the right hand-side we did similarly, but first we have contracted the right-most redex to obtain: $\lambda[x_1]x_0$. 
We can see both terms differ on the bound variables: $x_2$ and $x_1$. 
One the one hand, the one on the left was fixed during the first reduction, while solving 
${(\lambda[x_1]x_0)[x_0 := (\lambda[x_2]x_0) \cdot x_1)]}$; according to our definition of substitution, the bound variable has to be renamed to 
${X((\iota,x_0:=(\lambda[x_2]x_0)\cdot x_1),\lambda[x_1]x_0)}$, i.e., $x_2$. 
On the other hand, the bound variable $x_1$ did not changed throughout the reduction on the right;
in the first reduction it was not affected at all, and in the second one it was renamed to itself, i.e., ${X((\iota,x_0:=x_0),\lambda[x_1]x_0)}$. 

A solution to the problem above is to add a rule to \parallelg\ to perform an additional $\alpha$-conversion step.
This path was taken in \cite{copello2017} for the pure $\lambda$-calculus.
Nevertheless, a definition of parallel reduction without $\alpha$-conversion was also considered in order to simplify some of the proofs.
As a result, some lemmas ended up duplicated.

In this work we shall explore the alternative path of generalizing confluence up to $\alpha$-conversion and working with a single definition of \parallelg\ that does not mention $\alpha$-conversion right from the start. As we shall see, what we believe is an interesting theory of \textit{$\alpha$-commutative relations} comes up,
and which makes the development cleaner and, at the same time, it saves us from repeating ourselves. 
Besides, we will follow Takahashi's revision of Tait and Martin-L\"of's proof instead of the the original proof.

So, 
to begin with,
we define the 
\sourcelink{Section3/Pentagon.html}{diamond property up to $\alpha$-conversion} or the pentagon property by:
\begin{small}
\begin{Verbatim}
Pent _𝒮_ = ∀ {M N P} → M 𝒮 N → M 𝒮 P → ∃₂ λ Q₁ Q₂ → N 𝒮 Q₁ × P 𝒮 Q₂ × Q₁ ∼α Q₂ 
\end{Verbatim}
\end{small}
Now in this sense we can see that both reductions in the previous example are actually confluent. 

\subsection{Preservation of the Pentagon Property under Star: First Part}
\label{sec:firstPartConfl}

Let us begin by the proving that the diamond property up to $\alpha$-conversion is preserved by the star closure (\cref{lemma:starPreservesPent}). 
To do so, we shall need the following preparatory results.
Let \binrel\ be any $\alpha$-commutative binary relation.
Then we have that \textit{$\alpha$-commutativity} is preserved by the star closure (in what follows we shall write $M \rightarrow_\binrel N$ and $M \twoheadrightarrow_\binrel N$ for $\binrel\,M\,N$ and \Star{\binrel}{M}{N} respectively):
\begin{linklemma}{Section3/StarAlphaComm.html}
\label{lemma:commutativity}
\normalfont
\Verb|CommAlpha (Star 𝒮)|
\end{linklemma}

\begin{proof}
We are given two derivations \alphaconv{M}{N} and \StarAux{N}{P}
and we have to prove there is some $Q$ such that \StarAux{M}{Q} and \alphaconv{Q}{P}. 
We proceed by induction on the derivation of \StarAux{N}{P}. 
\begin{itemize}[itemsep=0pt]
\item Case $P=N$. Then $Q=M$ works fine (see \cref{subfig:comm}). 

\item Case \StarAux{N}{P} follows from \RelAux{N}{P'} and \StarAux{P'}{P} for some $P'$. 
By the hypothesis we have there is some $Q'$ such that \RelAux{M}{Q'} and \alphaconv{Q'}{N'}, thus by the IH we obtain there is some $Q''$ such that \StarAux{Q'}{Q''} and \alphaconv{Q''}{P}. 
Then the $Q$ we are after is $Q''$. 
\qedhere
\end{itemize}
\end{proof}

Next we will define the so-called \textit{strip} property to handle the inductive step in \cref{lemma:starPreservesPent}. The property is illustrated in \cref{subfig:strip}, where a double-headed arrow indicates relationship under the star closure of any  given binary relation \Verb|𝒮|. 
So, we say \Verb|𝒮| satisfies the \sourcelink{Section3/Strip.html}{strip property}, and write it \Verb|Strip 𝒮|, if and only if:
\begin{small}
\begin{Verbatim}
∀ {M N R} → Star 𝒮 M N → 𝒮 M R → ∃₂ λ T₁ T₂ → 𝒮 N T₁ × Star 𝒮 R T₂ × T₁ ∼α T₂
\end{Verbatim}
\end{small}

\begin{figure}[t]
    \centering
    
    \begin{tikzpicture}[scale=3.2]       
        \node (M) at (0,1) {$M$};
        \node (N') at (-0.5,0.5) {$N'$};
        \node (Q') at (-0.15,0.15) {$Q'$};
        \node (Q'') at (-0.5,-0.2) {$Q''$};
        \node (N) at (-1,0) {$N$};
        \node (T1) at (-0.8,-0.2) {$T_1$};
        \node (R) at (0.5,0.5) {$R$};
        \node (R') at (0.15,0.15) {$R'$};
        \node (P) at (1,0) {$P$};
        \node (T) at (-0.4,-0.6) {$Q_1$};
        \node (P') at (0.4,-0.6) {$Q_2$};
        \node (S) at (-0.2,-0.2) {$T_2$};
        \node (S') at (0.2,-0.6) {$Q_1'$};

        \draw[->] (M) -- (N');
        \draw[->>] (N') -- (N);
        \draw[->,dashed] (N') -- (Q');
        \draw[->,dashed] (N) -- (T1);
        \draw[dashed] (T1) -- (Q'');
        \draw[->>,dashed] (T1) -- (T);
        \draw[dashed] (T) -- (S');
        \draw[->] (M) -- (R);
        \draw[->>] (R) -- (P);
        \draw[->,dashed] (R) -- (R');
        \draw[->>,dashed] (P) -- (P');
        \draw[->>,dashed] (R') -- (S);
        \draw[dashed] (Q') -- (R');
        \draw[->>,dashed] (Q') -- (Q'');
        \draw[dashed] (S) -- (Q'');
        \draw[->>,dashed] (S) -- (S');
        \draw[dashed] (S') -- (P');
        
        \node at (0,0.5) {1};
        \node at (-0.6,0.1) {2};
        \node at (-0.17,-0.02) {3};
        \node at (0.4,-0.1) {4};
        \node at (-0.3,-0.4) {5};
        
    \end{tikzpicture}
    \caption{Proofs of the Strip Lemma and Preservation of the Pentagon Property}
    \label{fig:proofClosureDiamondStar}
\end{figure}

\noindent Then we have: 
\begin{linklemma}{Section3/StripLemma.html}
\Verb|Pent 𝒮 → Strip 𝒮|
\label{lemma:strip}
\end{linklemma}

\begin{proof}
By structural induction on some given derivation of \StarAux{M}{N} (we are also given \RelAux{M}{R} as a hypothesis). 
We have to find some $\alpha$-convertible terms $T_1$ and $T_2$ such that \RelAux{N}{T_1} and \StarAux{R}{T_2}. 
\begin{itemize}[itemsep=0pt]
\item Case $M=N$. Then the whole structure or shape collapses and $T_1 = T_2 = R$. 

\item Case \StarAux{M}{N} follows from \RelAux{M}{N'} and \StarAux{N'}{N}. We can use the diagram chase suggested by \cref{fig:proofClosureDiamondStar} to navigate through the proof, focusing only on the areas with numbers 1, 2 and 3. 
These numbers also indicate the sequence of steps in the proof. 
First (1), by confluence we have that there are some $\alpha$-convertible terms $Q'$ and $R'$ such that \RelAux{N'}{Q'} and \RelAux{R}{R'}. 
Second (2), by the IH we have our first corner, $T_1$, i.e., \RelAux{N}{T_1}, and some $\alpha$-convertible term $Q''$ such that \StarAux{Q'}{Q''}. 
And thirdly (3), by \cref{lemma:commutativity} we have that the star closure of \binrel\ is also $\alpha$-commutative (\cref{subfig:comm}), thus we can reach to the last corner $T_2$ from $R'$, which, since it is $\alpha$-convertible to $Q''$, it is also $\alpha$-convertible to $T_1$.
\qedhere
\end{itemize}
\end{proof}

Having established the previous results we can 
complete
the first step in the proof of confluence:

\begin{linklemma}{Section3/StarPreservesPent.html}
\label{lemma:starPreservesPent}
\Verb|Pent 𝒮 → Pent (Star 𝒮)|
\end{linklemma}

\begin{proof}
The proof is carried out by structural induction on the derivation of 
\StarAux{M}{N} (we are also given \StarAux{M}{P} as a hypothesis).
The only interesting case is the inductive one, i.e., when \StarAux{M}{N} follows from \RelAux{M}{N'} and \StarAux{N'}{N}.
There, we continue by case analysis on the structure of \StarAux{M}{P}. If $M=P$, then the result is trivial.
If \StarAux{M}{P} follows from \RelAux{M}{R} and \StarAux{R}{N}, then we proceed as follows. 
First, and similarly to the previous lemma, we do steps (1), (2) and (3) using \cref{fig:proofClosureDiamondStar} as a guide.
Now we have to take a look at the whole graph. 
The fourth step (4) is to use the IH; we obtain $Q_1'$ and $Q_2$. 
Finally (5), by \cref{lemma:commutativity} we can commute \alphaconv{T_1}{T_2} and \StarAux{T_2}{Q_1'} to obtain $Q_1$ as desired. 
\qedhere
\end{proof}

\subsection{Properties of Parallel Reduction: Second Part}
\label{sec:secondPartConfl}

Now we have to prove that parallel reduction satisfies the pentagon property. 
To this end, first we will need to establish that \Starg{\parallelg} coincides with \manystepg, and that \parallelg\ is compatible with substitutions and commutes with $\alpha$-conversion (the proofs are adapted from \cite{copello2017}): 
\begin{lemma}
\label{lemma:propertiesParallel}
\begin{enumerate}[ref=\thelemma.(\roman{enumi}),itemsep=0pt]

\myitem{Section3/EquivBetaPar.html}
\label{lemma:eqBetaParStar}
\Verb|_→β*₀_ ⇔ Star _⇉_|

\myitem{Section3/CompatParSub.html}
\label{lemma:compatParSubUnary}
\Verb|∀{M M' N N' x}→ M ⇉ M' → N ⇉ N' → ∃ λ P → M [ x := N ] ⇉ P × P ∼α M' [ x := N' ]|

\myitem{Section3/CommPent.html}
\label{lemma:pentComm}
\Verb|CommAlpha _⇉_|

\end{enumerate}   
\end{lemma}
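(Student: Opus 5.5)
We treat the three items separately, all by structural induction on derivations, reusing the substitution lemmas of \cref{sec:stoughton}.

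For \cref{lemma:eqBetaParStar}, we prove the two inclusions $\rightarrow_\beta \subseteq \rightrightarrows$ and $\rightrightarrows \subseteq \rightarrow_\beta{*}_0$ and close under $\mathsf{Star}$. The first inclusion needs reflexivity of $\rightrightarrows$ (by induction on the term). Then we go by induction on the contextual closure: a root redex is the last rule with both premises reflexive, and every congruence case uses reflexivity for the untouched side. For the second inclusion we go by induction on $\parallel{M}{M'}$. The congruence cases glue together many-step reductions in subterms, using congruence lemmas for $\rightarrow_\beta{*}_0$ obtained by induction on $\mathsf{Star}$. In the $\beta$-rule case we reduce $(\lambda[x:A]M)\app N \twoheadrightarrow (\lambda[x:A]M')\app N'$ and then perform one contraction to $M'[x:=N']$. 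No $\alpha$-step is needed here, which is why the statement holds with the $\alpha$-free $\rightarrow_\beta{*}_0$.

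For \cref{lemma:compatParSubUnary}, we first generalise to multiple substitutions. If $\parallel{M}{M'}$ and $\sigma \rightrightarrows \sigma'$ pointwise on $\fv{M}$, then there is $P$ with $\parallel{M\bullet\sigma}{P}$ and $\alphaconv{P}{M'\bullet\sigma'}$. The unary case is the instance $\sigma=(\iota,x:=N)$, $\sigma'=(\iota,x:=N')$. The proof is by induction on $\parallel{M}{M'}$.

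In the binder case, the chosen names $y=X(\sigma,\ldots)$ and $y'=X(\sigma',\ldots)$ may differ, since the free variables of $M'$ are a subset of those of $M$. We therefore use the rule with the same bound name $y$, applying the IH with the updated substitutions $(\sigma,x:=y)$ and $(\sigma',x:=y)$. We then $\alpha$-convert to the result with $y'$, using freshness of $y$ for $\sigma'$ on $\fv{M'}-x$.

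The $\beta$ case is the hardest. We must show that $M\bullet(\sigma,x:=y)[y:=N\bullet\sigma]$ parallel-reduces to something $\alpha$-convertible to $M'[x:=N']\bullet\sigma'$. The plan is as follows.
\begin{enumerate}
\item Rewrite both sides with \cref{lemma:expandUpd} and \cref{lemma:factorizeUpd}, turning them into $M\bullet(\sigma,x:=N\bullet\sigma)$ and $M'\bullet(\sigma',x:=N'\bullet\sigma')$.
\item Apply the IH for $M$ with the updated substitutions; this needs the IH for $N$ to supply the pointwise hypothesis at $x$.
\item Use the resulting $\alpha$-step. Because the IH only yields results up to $\alpha$, the pointwise hypothesis itself must also be stated up to $\alpha$. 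To use it we need compatibility of substitution with $\alpha$ (the multi-substitution version of \cref{lemma:compatAlphaSubUnary}) and \cref{lemma:pentComm}.
\end{enumerate}
To avoid a circular dependency, I would prove \cref{lemma:pentComm} first.

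For \cref{lemma:pentComm}, we take $\alphaconv{M}{N}$ and $\parallel{N}{P}$ and proceed by induction on $\parallel{N}{P}$, inverting the $\alpha$-derivation. Variable, constant and application cases are direct. In the binder case we push the parallel step through the common fresh renaming, which requires that $\rightrightarrows$ is preserved by renamings $[x:=y]$. That is a simple instance of substitutivity, proved syntactically since renamings of variables do not create redexes. In the $\beta$ case, $M$ is $(\lambda[x':A']M_0)\app N_0$ with $\alphaconv{M_0[x':=z]}{N_1[x:=z]}$ for the body $N_1$ of $N$. We use the IH on the bodies and on the arguments, then \cref{lemma:compatAlphaSubUnary} to relate $M_0'[x':=N_0']$ with $N_1'[x:=N']$.

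I expect the main obstacle to be the $\beta$ case of \cref{lemma:compatParSubUnary}. There the bookkeeping of fresh names chosen by $X$ must be reconciled with substitution composition, and only an $\alpha$-relation is obtainable, so every auxiliary result has to be formulated modulo $\alpha$.
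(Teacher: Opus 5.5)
Your proof of (i) is fine. Generalising (ii) to simultaneous substitutions, with the same fresh binder $y$ on both sides, is also the right move. For the final $\alpha$-step you additionally need that neither $\rightrightarrows$ nor $\sim_\alpha$ creates free variables, so that $y$ is fresh for $\sigma'$ on $\fv{M'}-x$.

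\textbf{Gap in the $\beta$ case of (ii).} Your plan aims at the wrong term. You must reduce the substituted redex $(\lambda[y:A\bullet\sigma]\,M\bullet(\sigma,x:=y))\app(N\bullet\sigma)$, not its contractum. A parallel step out of it that contracts this redex ends with the $\beta$-rule and yields $Q_1[y:=Q_2]$, where $\parallel{M\bullet(\sigma,x:=y)}{Q_1}$ and $\parallel{N\bullet\sigma}{Q_2}$. Your plan instead invokes the IH for $M$ at $(\sigma,x:=N\bullet\sigma)$ and $(\sigma',x:=N'\bullet\sigma')$. That gives a parallel step out of $M\bullet(\sigma,x:=N\bullet\sigma)\equiv M[x:=N]\bullet\sigma$, i.e.\ out of the substituted contractum. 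A $\beta$-step followed by a $\rightrightarrows$-step is not a $\rightrightarrows$-step, so this says nothing about the redex. The fix is as follows.
\begin{enumerate}
\item Use the IH for $M$ at $(\sigma,x:=y)$ and $(\sigma',x:=y)$, and for $N$ at $\sigma,\sigma'$.
\item Close with the $\beta$-rule.
\item Compute $Q_1[y:=Q_2]\sim_\alpha (M'\bullet(\sigma',x:=y))[y:=N'\bullet\sigma']\sim_\alpha M'\bullet(\sigma',x:=N'\bullet\sigma')\equiv M'[x:=N']\bullet\sigma'$. The steps use, in turn, \cref{lemma:compatAlphaSubUnary}; composition of substitutions together with freshness of $y$ for $\sigma'$; and \cref{lemma:factorizeUpd}.
\end{enumerate}
Done this way, the exact pointwise hypothesis $\parallel{\sigma z}{\sigma' z}$ suffices, and neither an up-to-$\alpha$ hypothesis nor (iii) is needed.

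\textbf{False auxiliary fact in (iii).} You rely on $\rightrightarrows$ being preserved on the nose by renamings $[x:=y]$. This is false here, for the same reason as the paper's counterexample to the diamond property. A parallel step can remove free variables from a body, so the choice function may pick different binders on the two sides, whereas the abstraction rule of $\rightrightarrows$ demands the same binder. Concretely, take $B$ a sort; then $\parallel{\lambda[u:A]((\lambda[w:B]u)\app z)}{\lambda[u:A]u}$. After any renaming fixing $z$, the left binder is chosen fresh for $z$, while the right one is the first name of the enumeration. If that first name is $z$, the two results are not related by $\rightrightarrows$. Only the up-to-$\alpha$ version holds, and that is an instance of (ii). So the dependency must run from (ii) to (iii), not the other way round.

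\textbf{Induction measure in (iii).} Induction on $\parallel{N}{P}$ gives you no IH for the body $M_0$ of $M$. When the binders $x'$ and $x$ differ, $M_0$ is not $\alpha$-convertible to the body $N_0$ of $N$. The same mismatch blocks your direct use of \cref{lemma:compatAlphaSubUnary} in the $\beta$ case. A route that works:
\begin{enumerate}
\item Induct on the structure of $M$.
\item From the $\alpha$-premise, derive $\alphaconv{M_0}{N_0[x:=x']}$ using \cref{lemma:expandUpd}, since $x'\notin\fv{N_0}-x$.
\item Use (ii) to get $\parallel{N_0[x:=x']}{R}$ with $\alphaconv{R}{P_0[x:=x']}$.
\item Apply the IH to $M_0$.
\item Re-abstract; in the $\beta$ case, substitute instead, using \cref{lemma:expandUpd} again.
\end{enumerate}
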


Next we will define Takahashi's star operator (\Verb|_⋆|) by recursion on the syntax. Given a $\lambda$-term, $M^*$ is the result of contracting all redexes in $M$ simultaneously. We will use informal notation to avoid some technicalities on Agda's function definitions and refer the interested reader to the source: 
\begin{gather*}
x^* = x \quad k^* = k \quad (\lambda[x:A]M)^* = \lambda[x:A^*](M^*) \quad 
(\Pi[x:A]B)^* = \Pi[x:A^*](B^*)\\
(M \cdot N)^* = \begin{cases}M'^*[x:=N^*] &\text{if}\ M=\lambda[x:A]M'\\M^* \cdot N^*\ & \text{otherwise}\end{cases}
\end{gather*}

The following lemma is also due to Takahashi:

\begin{linklemma}{Section3/TakahashiLemma.html}
\Verb|∀ {M N} → M ⇉ N → ∃ λ P → N ⇉ P × P ∼α M ⋆|
\label{lemma:takahashi}
\end{linklemma}

\begin{proof}
By structural induction on $M$ and subordinate case analysis on the derivation of \parallel{M}{N}. We only show the most interesting case, i.e., when $M$ is a redex: ${M=(\lambda[x:A] M_0) \cdot M_1}$.
Our goal is to find some $P$ such that \parallel{N}{P} and \alphaconv{P}{M_0^*[x:=M_1^*]}. 
There are two cases as to the derivation of 
\parallel{(\lambda[x:A] M_0) \cdot M_1}{N}:
\begin{itemize}[itemsep=0pt]
     \item If the last rule applied is \Verb|⇉·| then $N=(\lambda[x:B]N_0) \cdot N_1$, and we have \parallel{\lambda[x:A]M_0}{\lambda[x:B]N_0} and \parallel{M_1}{N_1}. Also, the first premise can only follow from \Verb|⇉λ| so we have \parallel{A}{B} and \parallel{M_0}{N_0}.
     By the IH we have that there are some $P_0$ and $P_1$ such that \parallel{N_0}{P_0} and \alphaconv{P_0}{M_0^*}, and \parallel{N_1}{P_1} and \alphaconv{P_1}{M_1^*}. Then $P=P_1[x:=P_2]$; by the \parallelbeta\ rule we have \parallel{(\lambda[x:B]N_0) \cdot N_1}{P_0[x:=P_1]} and by \cref{lemma:compatAlphaSubUnary} we get \alphaconv{P_0[x:=P_1]}{M_0'^*[x:=M_1^*]}.   

    \item If the last rule applied is \Verb|⇉β| then $N=N_0[x:=N_1]$, and we have \parallel{A}{B}, \parallel{M_0}{N_0} and \parallel{M_1}{N_1}.
    First, by the IH we obtain \parallel{N_0}{P_0} and \alphaconv{P_0}{M_0^*}, and \parallel{N_1}{P_1} and \alphaconv{P_1}{M_1^*} for some $P_0$ and $P_1$. Next, by \cref{lemma:compatParSubUnary} we have \parallel{P_0[x:=P_1]}{Q} and \alphaconv{Q}{N_0[x:=N_1]} for some $Q$. Let $P=Q$; 
    by \cref{lemma:compatAlphaSubUnary} we have \alphaconv{P_0[x:=P_1]}{M_0^*[x:=M_1^*]} and by transitivity using the previous result \alphaconv{P}{M_0^*[x:=M_1^*]}.
    \qedhere
\end{itemize}
\end{proof}

\begin{linklemma}{Section3/PentPar.html}
\label{lemma:pentPar}
\Verb|Pent _⇉_|
\end{linklemma}

\begin{proof}
Immediate by using Lemmas~\ref{lemma:takahashi}~and~\ref{lemma:alphaEquivalence}.\footnote{In \cite{copello2017}, Lemmas \ref{lemma:pentComm} and \ref{lemma:pentPar} were proven twice, once for \parallelg\ as is and another including $\alpha$-conversion.}
\end{proof}

Once we know the previous facts about \parallelg, the CR theorems follow easily:

\begin{theorem}[Church-Rosser]
\normalfont
\label{theorem:CR}
\begin{enumerate}[ref=\thetheorem.(\roman{enumi}),itemsep=0pt]
\thmitem{Section3/CR1.html} \Verb|Pent _→β*₀_| 
\thmitem{Section3/CR2.html} \Verb|∀ {M N} → M ≃β N → ∃₂ λ P₁ P₂ → M →β*₀ P₁ × N →β*₀ P₂ × P₁ ∼α P₂|
\end{enumerate}
\end{theorem}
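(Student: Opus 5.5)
For part (i), I will transfer the pentagon property from parallel reduction to many-step reduction. By \cref{lemma:pentPar} we have \Verb|Pent _⇉_|, so \cref{lemma:starPreservesPent} gives \Verb|Pent (Star _⇉_)|. It remains to move this along the equivalence of \cref{lemma:eqBetaParStar}. Given \manystep{M}{N} and \manystep{M}{P}, I convert both derivations into \Verb|Star _⇉_| derivations and apply the pentagon property. This yields $Q_1,Q_2$ with $N \mathrel{\mathsf{Star}{\rightrightarrows}} Q_1$, $P \mathrel{\mathsf{Star}{\rightrightarrows}} Q_2$ and \alphaconv{Q_1}{Q_2}. Converting the two star derivations back with \cref{lemma:eqBetaParStar} finishes (i). This part is a direct chain of earlier results.

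For part (ii), I proceed by induction on the derivation of \conv{M}{N}, that is, on the \Verb|Star| of the symmetric closure of \Verb|_∼α_ ∪ _→β_|. I will prove the statement in the stated form, with a common $\alpha$-convertible pair of reducts.
\begin{itemize}
\item In the reflexive case, take $P_1=P_2=M$ and use reflexivity of $\alpha$ (\cref{lemma:alphaEquivalence}).
\item In the step case, $M$ is related to some $M'$ by one symmetric step and \conv{M'}{N}. By the IH there are $P_1',P_2'$ with \manystep{M'}{P_1'}, \manystep{N}{P_2'} and \alphaconv{P_1'}{P_2'}. The goal is then to find reducts of $M$ and $N$ that are $\alpha$-convertible.
\end{itemize}

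There are four subcases for the single step.
\begin{itemize}
\item If \alphaconv{M}{M'}, apply \cref{lemma:manyStepCommAlpha} to \alphaconv{M}{M'} and \manystep{M'}{P_1'}. This gives $Q$ with \manystep{M}{Q} and \alphaconv{Q}{P_1'}. Then take $P_1=Q$, $P_2=P_2'$, and use transitivity of $\alpha$.
\item If \alphaconv{M'}{M}, first use symmetry and then argue as in the previous subcase.
\item If \step{M}{M'}, prepend the step: \manystep{M}{P_1'}. The pair $P_1',P_2'$ then already works.
\item If \step{M'}{M}, both $M$ and $P_1'$ are reducts of $M'$; here part (i) is needed. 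Part (i) applied to \manystep{M'}{M} (a single step) and \manystep{M'}{P_1'} gives $R_1,R_2$ with \manystep{M}{R_1}, \manystep{P_1'}{R_2} and \alphaconv{R_1}{R_2}. Next I must push the reduction \manystep{P_1'}{R_2} over to $P_2'$ using \alphaconv{P_2'}{P_1'}. Applying \cref{lemma:manyStepCommAlpha} yields $S$ with \manystep{P_2'}{S} and \alphaconv{S}{R_2}. Composing, \manystep{N}{S}, and \alphaconv{R_1}{S} holds by symmetry and transitivity. The answer is $P_1=R_1$, $P_2=S$.
\end{itemize}

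The main obstacle is this last subcase, where confluence must be combined with $\alpha$-commutativity. The care lies in orienting the $\alpha$-conversions correctly so that \cref{lemma:manyStepCommAlpha} applies in the direction it is stated. The remaining bookkeeping consists of transitivity of \Verb|Star| and of the $\alpha$-equivalence facts in \cref{lemma:alphaEquivalence}.
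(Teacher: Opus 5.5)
Your argument has the same structure as the paper's. For (i) you push the pentagon property of \parallelg\ through \cref{lemma:starPreservesPent} and transport it along \cref{lemma:eqBetaParStar}. For (ii) you do induction on the derivation of \conv{M}{N}, using (i) and \cref{lemma:manyStepCommAlpha}. Your four subcases for (ii) are a correct unfolding of what the paper leaves implicit, including the inverted $\beta$-step where confluence and $\alpha$-commutativity have to be combined.

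There is, however, one undischarged hypothesis in (i). \Cref{lemma:starPreservesPent} is not proved for an arbitrary \binrel. Section~\ref{sec:firstPartConfl} works under the standing assumption that \binrel\ is $\alpha$-commutative. That assumption is used, through \cref{lemma:commutativity}, in step (3) of the strip lemma, to pass from \alphaconv{R'}{Q'} and \StarAux{Q'}{Q''} to a reduct of $R'$. It is used again in step (5) of the proof of \cref{lemma:starPreservesPent}. The assumption is not a formality: the pentagon property alone is not preserved by the star closure, because an $\alpha$-variant of a term may lack the reductions of the original.

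So besides \cref{lemma:pentPar} you must also invoke \cref{lemma:pentComm}, i.e.\ that \parallelg\ is $\alpha$-commutative. This is precisely the lemma the paper's proof of (i) cites. With that added, your proof is complete.
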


\begin{proof}
(i) follows by using \cref{lemma:pentComm,lemma:starPreservesPent}, and then \cref{lemma:eqBetaParStar}. (ii) is by structural induction on the derivation of \conv{M}{N} and by using (i) and \cref{lemma:manyStepCommAlpha}.\qedhere
\end{proof}

To finish this section, we present a list of equations showing that $\lambda$-calculus is consistent; some of them will be used later in this development:
\begin{lemma}
\begin{enumerate}[ref=\thelemma.(\roman{enumi}),itemsep=0pt]

\myitem{Section3/Absurd1.html} \Verb|∀ {x s A B} → ¬(Π[ x ∶ A ] B ≃β c s)|
\label{lemma:absurdEquationProdSorts}

\myitem{Section3/Absurd2.html} \Verb|∀ {x s} → ¬(v x ≃β c s)|
\label{lemma:absurdEquationVarSort}

\myitem{Section3/Absurd3.html} \Verb|∀ {y x A B} → ¬(v y ≃β Π[ x ∶ A ] B)|
\label{lemma:absurdEquationVarProd}
\end{enumerate}
\end{lemma}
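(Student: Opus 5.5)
All three items will be proved by the same argument: use \cref{theorem:CR} (ii) to reduce a conversion to two reductions with $\alpha$-convertible endpoints, then use the fact that the head constructor of each of the three shapes is preserved by reduction and by $\alpha$-conversion.

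Suppose, towards a contradiction, that \conv{M}{N}, where $M$ and $N$ are two of the three shapes $\mathsf{c}\,s$, $\mathsf{v}\,x$ and $\Pi[x:A]B$. By \cref{theorem:CR} (ii) we obtain $P_1$ and $P_2$ with $M \rightarrow_\beta^* P_1$, $N \rightarrow_\beta^* P_2$ and \alphaconv{P_1}{P_2}.

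Next we need a small inversion lemma for each shape.

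\begin{itemize}[itemsep=0pt]
\item \emph{Constants and variables.} If $\mathsf{c}\,s \rightarrow_\beta^* P$ then $P \equiv \mathsf{c}\,s$, and likewise for $\mathsf{v}\,x$. The proof is by induction on the star derivation. The reflexive case is trivial. In the step case, a one-step reduction out of a constant or variable is impossible: inverting the contextual closure leaves only the $\beta$-contraction base case, which requires an application on the left.
\item \emph{Products.} For $\Pi[x:A]B$ we use \cref{lemma:key} directly, which gives $P_1 \equiv \Pi[x:A_2]B_2$ (or $P_2$, as appropriate).
\end{itemize}

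After this step each of $P_1$ and $P_2$ is syntactically a constant, a variable or a $\Pi$-type, and the two head constructors differ.

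Finally, $\alpha$-conversion is inductively defined by rules that are syntax-directed on both sides, so \alphaconv{P_1}{P_2} forces the two terms to share their head constructor. In Agda this is an absurd pattern match on the $\alpha$-derivation. This gives the contradiction in each case:
\begin{itemize}[itemsep=0pt]
\item for (i), a $\Pi$ against a constant;
\item for (ii), a variable against a constant;
\item for (iii), a variable against a $\Pi$.
\end{itemize}
The symmetric orientations are handled the same way, since \cref{theorem:CR} (ii) is symmetric in its conclusion.

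There is no deep obstacle here. The only real work is the inversion of many-step reduction from a variable or constant, which we must state as a separate small lemma (by induction on $\mathsf{Star}$, with absurd cases on the contextual closure). We should also make sure the case split on $\Pi$-types goes through \cref{lemma:key} rather than reproving it.
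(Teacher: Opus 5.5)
Your proposal is correct and follows essentially the route the paper takes. You apply the second Church--Rosser theorem, show by a local induction that $\mathsf{c}\,s$ and $\mathsf{v}\,x$ reduce only to themselves, use \cref{lemma:key} to keep the $\Pi$ head, and close by inverting the syntax-directed $\alpha$-conversion rules; the paper gives no separate written proof of this lemma, but this is the same pattern it spells out in the proof of \cref{lemma:normalTermConvertsToSort}.
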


\section{The Pure Type Systems and Subject Reduction}
\label{sec:sr}

A PTS is specified by a triple $(\mathcal{S}, \mathcal{A}, \mathcal{R})$ of sorts, axioms and rules respectively. We shall use $\mathcal{C} = \mathcal{S}$, i.e., we will instantiate the constants in the previous framework with the set of sorts of the particular PTS. 
The rules of the \sourcelink{Section4/Typing.html}{typing judgment} (\verb|_⊢ₛ_:_|) and \sourcelink{Section4/ValidCxt.html}{valid contexts} (\verb|_okₛ|) are mutually defined and shown in \cref{fig:ptsInformal} using informal notation.\footnote{In \cite{urciuoli2025}, two different presentations of the PTS were introduced. One using infinitary branching trees and a modified rule for applications, and another more standard. The latter puts an ``s'' as a subscript to the entailment symbol and it is the one we are going to use here, nevertheless, we remark that both presentations are (extensionally) equivalent.
} 
$\Gamma,x:A$ is definitionally equal to $(x,A) :: \Gamma$.

\begin{figure}[t]
    \centering
    \small
    
    \AxiomC{}
    \LeftLabel{\typerule{nil}} 
    \UnaryInfC{$\ok{[]}$}
    \bottomAlignProof
    \DisplayProof
    \quad
    \AxiomC{\ok{\Gamma}}
    \AxiomC{$\Gamma \vdashstd A : s$}
    \LeftLabel{\footnotesize\typerule{cons}} 
    \RightLabel{($x \not\in \dom{\Gamma}$)}
    \BinaryInfC{\ok{\Gamma,x:A}}
    \bottomAlignProof
    \DisplayProof
    
    \bigskip
    
    \AxiomC{$\ok{\Gamma}$}
    \LeftLabel{\typerule{sort}}
    \RightLabel{($\mathcal{A}\,s_1\,s_2$)}
    \UnaryInfC{$\Gamma \vdashstd s_1 : s_2$}
    \bottomAlignProof
    \DisplayProof
    \quad    
    \AxiomC{\ok{\Gamma}}
    \LeftLabel{\typerule{var}}
    \RightLabel{($(x,A) \in \Gamma$)}
    \UnaryInfC{$\Gamma \vdashstd x : A$}
    \bottomAlignProof    
    \DisplayProof
    
    \bigskip
    
    \AxiomC{$\Gamma \vdashstd A : s_1$}
    \AxiomC{$\Gamma,y:A \vdashstd B[x:=y] : s_2$}
    \LeftLabel{\typerule{prod}}
    \RightLabel{\small$\begin{cases}\mathcal{R}\,s_1\,s_2\,s_3\\y \not\in\fv{B}-x\end{cases}$}
    \BinaryInfC{$\Gamma \vdashstd \Pi[x:A]B : s_3$}
    \bottomAlignProof
    \DisplayProof
    
    \bigskip

    \AxiomC{$\Gamma \vdashstd A : s_1$}
    \AxiomC{$\Gamma, z : A \vdashstd M[x:=z] : B[y:=z]$}           
    \AxiomC{$\Gamma, z : A \vdashstd B[y:=z] : s_2$}                   
	\LeftLabel{\typerule{abs}}
    \RightLabel{$\begin{cases}\mathcal{R}\,s_1\,s_2\,s_3\\z\not\in\fv{M}-x\\z\not\in\fv{B}-y\end{cases}$}
    \TrinaryInfC{$\Gamma \vdashstd \lambda[ x : A ] M : \Pi[ y : A ] B$}
    \DisplayProof
    
    \bigskip
        
    \AxiomC{$\Gamma \vdashstd M : \Pi[ x : A ] B$}
    \AxiomC{$\Gamma \vdashstd N : A$}
    \LeftLabel{\typerule{app}}
    \BinaryInfC{$\Gamma \vdashstd M \cdot N : B[x:=N]$}
    \DisplayProof

    \bigskip
                
    \AxiomC{$\Gamma \vdashstd M : A$}    
    \AxiomC{$\Gamma \vdashstd B : s$}
    \LeftLabel{\typerule{conv}}
	\RightLabel{($\conv{A}{B}$)}
    \BinaryInfC{$\Gamma \vdashstd M : B$}
    \DisplayProof
    \caption{Pure Type Systems}
    \label{fig:ptsInformal}
\end{figure}

The \textit{sorts} ($\mathcal{S}$) are the classifiers for types, e.g., in the $\lambda$-cube there two sorts: $*$ and $\square$ \cite{barendregt1991}.
The meaning of $*$ depends on the specific calculus; for instance, 
in $\lambda$C (CC), $*$ is the type of \textit{propositions}, while in $\lambda$P (LF) it usually denotes the type of sets and \textit{judgments}.
As to $\square$, it classifies some types, including $*$, into what are called \textit{kinds}.
The \textit{axioms} ($\mathcal{A}$) describe the relationship between the sorts, e.g., $* : \square$ in all systems of the $\lambda$-cube. 
Finally, the \textit{rules} ($\mathcal{R}$) are some conditions that moderate the construction of function types; 
for instance, in $\lambda$P, the domain in functions can only be types with sort $*$; using set-theoretic notation, $\mathcal{R}_{\lambda\text{P}} = \{ (*, *, *) , (*, \square, \square) \}$.

The next results were proven in \cite{urciuoli2025}:
\begin{lemma}
\begin{enumerate}[ref=\thelemma.(\roman{enumi}),label=(\roman{enumi}),itemsep=0pt]

\item 
\sourcelink{Section4/CxtValidity.html}{\textit{(Context Validity)}} 
\Verb|∀ {Γ M A} → Γ ⊢ₛ M ∶ A → Γ okₛ|
\label{lemma:contextValidity}

\item 
\sourcelink{Section4/InversionSorts.html}{\textit{(Inversion of Sorts)}}
\Verb|∀ {Γ s A} → Γ ⊢ₛ c s ∶ A → ∃ λ t → Γ okₛ × 𝒜 s t × A ≃β c t|
\label{lemma:inversionTypeSorts}

\item 
\sourcelink{Section4/InversionVars.html}{\textit{(Inversion of Variables)}}
\Verb|∀{Γ x A} → Γ ⊢ₛ v x ∶ A → ∃ λ B → Γ okₛ × (x , B) ∈ Γ × A ≃β B|
\label{lemma:inversionTypeVar}

\item 
\sourcelink{Section4/InversionProd.html}{\textit{(Inversion of $\Pi$-types)}}
\Verb|∀ {Γ x A B C} → Γ ⊢ₛ Π[ x ∶ A ] B ∶ C → ∃₄ λ s₁ s₂ s₃ y|\\
\Verb|→ ℛ s₁ s₂ s₃ × Γ ⊢ₛ A ∶ c s₁ × y ∉ fv B - x|\\
\Verb|× Γ ‚ y ∶ A ⊢ₛ B [ x := v y ] ∶ c s₂ × C ≃β c s₃|

\item 
\sourcelink{Section4/InversionLam.html}{\textit{(Inversion of $\lambda$-abstractions)}}
\Verb|∀ {Γ x A M C} → Γ ⊢ₛ λ[ x ∶ A ] M ∶ C → ∃₄ λ s x' y B|\\
\Verb|→ y ∉ fv M - x × y ∉ fv B - x' × Γ ‚ y ∶ A ⊢ₛ M [ x := v y ] ∶ B [ x' := v y ]|\\
\Verb|× Γ ⊢ₛ Π[ x' ∶ A ] B ∶ c s × C ≃β Π[ x' ∶ A ] B|
\label{lemma:inversionTypeAbs}

\item 
\sourcelink{Section4/InversionApp.html}{\textit{(Inversion of Applications)}}
\Verb|∀ {Γ M N C} → Γ ⊢ₛ M · N ∶ C|\\ 
\Verb|→ ∃₃ λ x A B → Γ ⊢ₛ M ∶ Π[ x ∶ A ] B × Γ ⊢ₛ N ∶ A × C ≃β B [ x := N ]|
\label{lemma:inversionTypeApp}

\item 
\sourcelink{Section4/Weakening.html}{\textit{(Weakening)}}
\Verb|∀{Γ x s M A B}→ Γ ⊢ₛ M ∶ B → x ∉ dom Γ → Γ ⊢ₛ A ∶ c s → Γ ‚ x ∶ A ⊢ₛ M ∶ B|
\label{lemma:weakening}

\item 
\sourcelink{Section4/SyntacticValidity.html}{\textit{(Syntactic Validity)}}
\Verb|∀ {Γ M A} → Γ ⊢ₛ M ∶ A → ∃ λ s → A ≡ c s ⊎ Γ ⊢ₛ A ∶ c s|

\item 
\sourcelink{Section4/ClosAlpha.html}{\textit{(Closure under $\alpha$-conversion)}}
\Verb|∀ {Γ M N A} → M ∼α N → Γ ⊢ₛ M ∶ A → Γ ⊢ₛ N ∶ A|

\item \sourcelink{Section4/Cut.html}{\textit{(Cut)}}
\Verb|∀{Γ M N A B x}→ Γ ‚ x ∶ A ⊢ₛ M ∶ B → Γ ⊢ₛ N ∶ A → Γ ⊢ₛ M [ x := N ] ∶ B [ x := N ]|
\end{enumerate}
\end{lemma}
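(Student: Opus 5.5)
These results are the standard metatheory of PTS, and I would prove them in dependency order, mostly by structural induction on typing derivations. The mutual definition of $\ok{\Gamma}$ and $\types{\Gamma}{M}{A}$ forces mutual inductions throughout.

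\textbf{Context validity and inversions.} Context validity (i) is a direct induction on the derivation of $\types{\Gamma}{M}{A}$. The leaves \typerule{sort} and \typerule{var} carry $\ok{\Gamma}$ as a premise, and every other rule has a premise in the same context $\Gamma$, to which the induction hypothesis applies.

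Each inversion lemma (ii)--(vi) is also an induction on the derivation, restricted to the rules whose conclusion subject has the given shape. Exactly two rules can end such a derivation:
\begin{itemize}
\item the syntax-directed rule, which yields the witnesses with the reflexive conversion;
\item \typerule{conv}, where we apply the induction hypothesis to the first premise and then compose $\conv{A}{B}$ with the conversion it returns, using symmetry and transitivity of $\mathsf{EqClosure}$.
\end{itemize}
No confluence is needed here.

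\textbf{Weakening.} Weakening (vii) is the first real obstacle. A naive induction fails at \typerule{prod} and \typerule{abs}: the premises live in $\Gamma,y:A$, and the chosen binder name $y$ may clash with the new variable $x$. I would first prove a more general \emph{renaming/thinning} lemma. It states that if $\types{\Gamma}{M}{B}$, $\ok{\Delta}$, and $\sigma$ maps every $(z,C)\in\Gamma$ to a variable $\sigma z$ with $(\sigma z, C\bullet\sigma)\in\Delta$, then $\types{\Delta}{M\bullet\sigma}{B\bullet\sigma}$.

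The binder cases are handled in two steps:
\begin{itemize}
\item Extend $\sigma$ with $y:=\mathsf{v}\,y'$, where $y'$ is the fresh name chosen by $\chi$ in the definition of $\bullet$.
\item Use \cref{lemma:expandUpd,lemma:factorizeUpd} to commute $B[x:=y]\bullet(\sigma,y:=y')$ into $(B\bullet\sigma)$ with its bound variable renamed, so that the result matches the shape of the rule for $\Pi[y':A\bullet\sigma](\ldots)$.
\end{itemize}
The side condition $y'\notin\dom{\Delta}$ is the delicate freshness bookkeeping. I would also choose $y'$ to avoid $\dom{\Delta}$ by a further $\chi$ call, and absorb the mismatch by α-conversion. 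For this reason, closure under α (ix) should be proved alongside, again by induction using the α-rule's common-fresh-name formulation. Weakening then follows by instantiating $\sigma=\iota$, noting that $M\bullet\iota\sim_\alpha M$.

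\textbf{Cut and syntactic validity.} Cut (x) I would generalize to a substitution lemma: if $\types{\Gamma,x:A,\Delta}{M}{B}$ and $\types{\Gamma}{N}{A}$, then $\types{\Gamma,\Delta[x:=N]}{M[x:=N]}{B[x:=N]}$. More cleanly, I would state it for any $\sigma$ that sends each declared variable of the source context to a term typed by its substituted type in the target. The proof is by induction on the derivation:
\begin{itemize}
\item In the \typerule{var} case, $x$ itself is sent to $N$, and weakening is needed to move $N$ into the larger context.
\item In the \typerule{app} case, \cref{lemma:factorizeUpd} rewrites $B[y:=P]\bullet\sigma$ as $(B\bullet\sigma,\ldots)[\,\cdot\,:=P\bullet\sigma]$.
\item In the \typerule{conv} case, \cref{lemma:compatConvSub} is used.
\end{itemize}

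Syntactic validity (viii) then follows by induction on the derivation:
\begin{itemize}
\item In the \typerule{var} case, we use context validity and the fact that each declared type was typed by a sort when it entered the context, weakened along the context.
\item In the \typerule{app} case, the induction hypothesis gives $\types{\Gamma}{\Pi[x:A]B}{s}$. Inversion of $\Pi$-types with \cref{lemma:absurdEquationProdSorts} gives $\types{\Gamma,y:A}{B[x:=y]}{s_2}$, and cut with $N$ plus \cref{lemma:expandUpd} yields $\types{\Gamma}{B[x:=N]}{s_2}$.
\item In the \typerule{conv} case, the second premise is used directly.
\end{itemize}
I expect the binder/freshness juggling in the renaming lemma to be the main obstacle, with everything else being routine once it is in place.
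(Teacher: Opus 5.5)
The gap is in weakening (vii), at the step ``instantiating $\sigma=\iota$, noting that $M\bullet\iota\sim_\alpha M$''.

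\textbf{Why the instantiation fails.} Here $\bullet$ is Stoughton's substitution, which renames \emph{every} binder via $\chi$. So $\iota$ is the identity only up to $\alpha$: with names enumerated $x_0,x_1,\dots$, we get $(\Pi[x_1:s]x_1)\bullet\iota=\Pi[x_0:s]x_0$. This breaks both ends of your renaming lemma:
\begin{itemize}
\item Its hypothesis, $(z,C\bullet\iota)\in(\Gamma,x:A)$ for all $(z,C)\in\Gamma$, is false for the context you need, which contains $(z,C)$ itself.
\item Its conclusion \types{\Gamma,x:A}{M\bullet\iota}{B\bullet\iota} is off by $\alpha$ in the subject \emph{and} in the type.
\end{itemize}

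\textbf{Why proving (ix) alongside does not help.} Each $\alpha$-closure fact you would need depends on results you derive later:
\begin{itemize}
\item Replacing $B\bullet\iota$ by $B$ needs \typerule{conv} with \types{\Gamma,x:A}{B}{s}, i.e.\ syntactic validity.
\item The \typerule{app} case of (ix) must convert $B[x:=N']$ into $B[x:=N]$ for \alphaconv{N}{N'}. That again needs syntactic validity.
\item Changing a context entry up to $\alpha$ needs that entry typed in the extended context, i.e.\ weakening.
\end{itemize}
Your syntactic validity in turn uses weakening (var case) and cut, and cut's binder case must weaken the substitution hypothesis. So the plan is circular.

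Conversely, the mismatch you anticipate inside the binder case does not arise. The context variable of \typerule{prod} and \typerule{abs} is independent of the binder name. You simply choose it fresh for $\dom{\Delta}$ and the body, and keep the $\chi$-chosen binder of $(\Pi[x:A]B)\bullet\sigma$ as it is.

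\textbf{A repair within $\vdash_{\mathsf s}$.} The missing idea is to get weakening without ever substituting into $M$, $B$ or the context entries.
\begin{itemize}
\item Relax the renaming hypothesis to membership up to $\alpha$: $(\sigma z,C')\in\Delta$ with \alphaconv{C'}{C\bullet\sigma}.
\item Discharge the var case by \typerule{conv}, using a typing of the substituted declared type supplied by the induction hypothesis on the \ok{\Gamma} subderivation. The induction therefore runs simultaneously over both judgments. You also need the routine fact that free variables of context entries lie in the domain.
\item Prove weakening by its own induction, generalized to $\Gamma\subseteq\Delta$ with \ok{\Delta}.
\item Invoke renaming only on binder premises, with $\sigma=(\iota,y:=y')$ into $\Delta,y':A$, where $y'\notin\dom{\Delta}$.
\item There \cref{lemma:expandUpd} turns $B[x:=y]\bullet(\iota,y:=y')$ into exactly $B[x:=y']$, and likewise for the abstraction premises. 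So no $\alpha$-closure is ever needed.
\end{itemize}

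\textbf{The cited route.} The paper gives no proof here and defers to \cite{urciuoli2025}. That work uses an infinitely branching presentation, with binder premises for every fresh name, proved equivalent to $\vdash_{\mathsf s}$. In it weakening is a direct induction with no renaming at all. The remaining results can then be derived there and carried over to $\vdash_{\mathsf s}$ through the equivalence.

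\textbf{Remaining parts.} Your treatment of (i)--(vi) is fine. In the app case of (viii), the disjunct $\Pi[x:A]B\equiv s$ is ruled out by constructor disjointness. The Church--Rosser-based \cref{lemma:absurdEquationProdSorts} is therefore not needed.
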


\subsection{The Proof of Subject Reduction}

To prove SR first we need to prove product injectivity:

\begin{lemma}[\sourcelink{Section4/ProductInjectivity.html}{Product Injectivity}]
\label{lemma:productInjectivity}
\Verb|∀ {x₁ x₂ A₁ A₂ B₁ B₂} → Π[ x₁ ∶ A₁ ] B₁ ≃β Π[ x₂ ∶ A₂ ] B₂|\\
\Verb|→ A₁ ≃β A₂ × ∀ y → B₁ [ x₁ := v y ] ≃β B₂ [ x₂ := v y ]|
\end{lemma}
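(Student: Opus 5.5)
The plan is to reduce the conversion between the two $\Pi$-types to a common reduct using the second Church--Rosser theorem, and then read off the components with the inversion lemma for many-step reduction.

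First, from \conv{\Pi[x_1:A_1]B_1}{\Pi[x_2:A_2]B_2}, \cref{theorem:CR}.(ii) gives terms $P_1,P_2$ with $\Pi[x_1:A_1]B_1 \rightarrow_\beta^* P_1$, $\Pi[x_2:A_2]B_2 \rightarrow_\beta^* P_2$ and \alphaconv{P_1}{P_2}.

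By \cref{lemma:key}, applied to each side, we get $P_1 \equiv \Pi[x_1:A_1']B_1'$ and $P_2 \equiv \Pi[x_2:A_2']B_2'$. It also gives $A_i \rightarrow_\beta^* A_i'$ and $B_i \rightarrow_\beta^* B_i'$. The bound variables are unchanged, which is why this inversion lemma was stated without $\alpha$-conversion.

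Inverting the derivation of \alphaconv{\Pi[x_1:A_1']B_1'}{\Pi[x_2:A_2']B_2'} yields two facts. The first is \alphaconv{A_1'}{A_2'}. The second is a name $z$, fresh for $\fv{B_1'}-x_1$ and $\fv{B_2'}-x_2$, with $B_1'[x_1:=z]\equiv B_2'[x_2:=z]$.

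The domain part is then immediate. We have $A_1 \simeq_\beta A_1'$, since many-step reduction is contained in $\simeq_\beta$ by induction on Star and inclusion of $\rightarrow_\beta$. We also have \conv{A_1'}{A_2'} by the $\alpha$ inclusion in EqClosure, and \conv{A_2'}{A_2} by symmetry. Transitivity of EqClosure then gives \conv{A_1}{A_2}.

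For the codomain, fix an arbitrary $y$. By \cref{lemma:compatConvSub} with $\sigma=\iota,x_i:=v\,y$, each $B_i \simeq_\beta B_i'$ lifts to $B_i[x_i:=y]\simeq_\beta B_i'[x_i:=y]$. It then remains to show \conv{B_1'[x_1:=y]}{B_2'[x_2:=y]}, and I will in fact obtain syntactic equality. \Cref{lemma:expandUpd} with $\sigma=\iota$ rewrites $B_1'[x_1:=y]$ as $B_1'[x_1:=z]\bullet(\iota,z:=v\,y)$, using that $z\notin\fv{B_1'}-x_1$. The same holds for $B_2'$. The two inner terms are equal by the $\alpha$-premise, so the results coincide. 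Chaining with transitivity and symmetry gives the claim.

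The main obstacle is exactly this last step: the $\alpha$-rule only fixes one witness $z$, while the statement quantifies over every $y$. \Cref{lemma:expandUpd} is the tool that transports equality along the renaming, and the freshness side conditions from the $\alpha$-rule must match its hypothesis precisely. A further small technical point is that the $\Pi$-case of $\alpha$-conversion must be inverted in Agda, whose indices are the two $\Pi$-terms. This works only because \cref{lemma:key} returned the reducts literally as $\Pi$-types with the original binders.
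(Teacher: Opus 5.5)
Your proposal is correct and follows essentially the same route as the paper: the second Church--Rosser theorem, then \cref{lemma:key} to expose both common reducts as $\Pi$-types with the original binders, inversion of the $\alpha$-rule, the chain $A_1 \rightarrow_\beta^* A_1' \sim_\alpha A_2'$ closed by the reverse reduction from $A_2$ for the domains, and, for the codomains, \cref{lemma:compatConvSub} followed by \cref{lemma:expandUpd} to transport the single witness $z$ to an arbitrary $y$. The only cosmetic difference is that the paper cites \cref{lemma:chi} alongside \cref{lemma:expandUpd} for the freshness condition, whereas you take it directly from the side condition of the $\alpha$-rule.
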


\begin{proof}
By 
the second CR theorem
we know there are some $\alpha$-convertibles $\lambda$-terms $C_1$ and $C_2$ such that \manystep{\Pi[ x_i ∶ A_i ] B_i}{C_i} for $i=1,2$.
By \cref{lemma:key} we also know there are some $A_i'$ and $B_i'$ such that $C_i \equiv \Pi[x_i:A_i']B_i'$ with \manystep{A_i}{A_i'} and \manystep{B_i}{B_i'} for $i=1,2$. 
Now, first notice that, by the rule of products, \alphaconv{C_1}{C_2} must follow from (a) \alphaconv{A_1'}{A_2'} and (b) $B_1'[x_1:=y']\equiv B_2'[x_2:=y']$ for some $y'$ fresh for $\fv{B_1'} - x_1$ and $\fv{B_2'} - x_2$
As to the the first goal, i.e., \conv{A_1}{A_2}, it follows by using the following sequence of conversions: \manystep{A_1}{A_1'}, \alphaconv{A_1'}{A_2'} and $A_2'\ {\leftarrow}\beta{*}_0\ A_2$. 
As to the second goal,
let $y$ be any variable. We proceed as follows:
\begin{multline*}
B_1 [ x_1 := y ] \stackrel{(1)}{\convgspc} B_1' [ x_1 := y ] \stackrel{(2)}{\equiv} B_1' [ x_1 := y' ] [ y' := y ] \stackrel{(3)}{\equiv} B₁' [ x₁ := y' ] [ y' := y ] \\
\stackrel{(4)}{\equiv} B₂' [ x₂ := y' ] [ y' := y ] \stackrel{(5)}{\equiv}  B₂' [ x₂ := y ] \stackrel{(6)}{\convgspc} B₂ [ x₂ := y ]
\end{multline*}
In (1) and (6) we have applied \cref{lemma:compatConvSub}; in (2) and (5), \cref{lemma:chi} and \cref{lemma:expandUpd}, and finally; in (3) and (4) congruence of $\_{\equiv}\_$ on (b) with the substitution operation. 
\end{proof}

Next we will need some preparatory definitions.
Since typing derivations may move redexes between the subjects and the contexts we will take the same path as McKinna and Pollack and extend reductions to the latter;
we will then state and prove simultaneously two lemmas, 
one that shows context validity is preserved by reduction (Theorem~\ref{theo:sr1}),
and another for the soundness of both context and subject reduction w.r.t. typing, i.e., SR (Theorem~\ref{theo:sr2}).
Also, because of sometimes we will need to invoke the induction hypothesis with a single reduction, either in the context or in the subject, then we will require that both relations be reflexive.
Thus, below we define \sourcelink{Section4/ReflexiveBeta.html}{reflexive $\beta$-reduction} (\Verb|_→β₌_|) and its \sourcelink{Section4/ReflexiveBetaCxt.html}{extension to contexts pointwise} (\Verb|_→→β₌_|): 
\begin{Verbatim}
_→β₌_ = _≡_ ∪ _→β_
_→→β₌_ = Pointwise (λ (x , A) (y , B) → x ≡ y × A →β₌ B)
\end{Verbatim}

\newcommand{\reflstepcxt}[2]{\ensuremath{#1\ {\rightarrow}{\rightarrow}_{\beta_=}\ #2}}
\newcommand{\reflstep}[2]{\ensuremath{#1 \rightarrow_{\beta_=} #2}}

We can now state and prove the SR theorem:

\begin{theorem}[Subject Reduction]
\label{theo:sr}
\begin{enumerate}[ref=\thetheorem.(\roman{enumi}),itemsep=0pt]

\thmitem{Section4/SR1.html} 
\label{theo:sr1}
\Verb|∀ {Γ Δ} → Γ okₛ → Γ →→β₌ Δ → Δ okₛ|

\thmitem{Section4/SR2.html} 
\label{theo:sr2}
\Verb|∀ {Γ Δ M N A} → Γ ⊢ₛ M ∶ A → Γ →→β₌ Δ → M →β₌ N → Δ ⊢ₛ N ∶ A|

\end{enumerate}
\end{theorem}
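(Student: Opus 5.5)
We prove (i) and (ii) simultaneously, by mutual induction on the derivations of $\ok{\Gamma}$ and $\types{\Gamma}{M}{A}$, in the style of McKinna and Pollack. Reflexivity of \Verb|_→β₌_| and \Verb|_→→β₌_| is what makes the induction go through: we can always call the IH with a reduction in the context only, or in the subject only.

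For (i), the case \typerule{nil} is trivial. In case \typerule{cons}, $\Delta=\Delta',x:A'$ with \reflstepcxt{\Gamma}{\Delta'} and \reflstep{A}{A'}. Applying IH(ii) to $\types{\Gamma}{A}{s}$ gives $\types{\Delta'}{A'}{s}$, and IH(i) gives $\ok{\Delta'}$. The side condition $x\not\in\dom{\Delta'}$ carries over, since pointwise reduction preserves domains.

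For (ii), we first use IH(i) to get $\ok{\Delta}$, and we proceed by cases on the last typing rule, with a subordinate case analysis on \reflstep{M}{N}.

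\begin{itemize}
\item \textbf{Sort.} The subject cannot reduce, so we reapply \typerule{sort} under $\ok{\Delta}$.
\item \textbf{Variable.} We have $(x,A)\in\Gamma$, so $(x,A')\in\Delta$ with \reflstep{A}{A'}. We type $x:A'$ in $\Delta$ and convert back to $A$. The needed $\types{\Delta}{A}{s}$ comes from syntactic validity in $\Gamma$, then weakening to the prefix and IH(ii). This requires a small lemma that validity of a context gives typability of each declaration; we prove it by induction on $\ok{\Gamma}$.
\item \textbf{Conversion.} Immediate from the IH, since \conv{A}{B} is untouched.
\item \textbf{Product and abstraction.} A reduction can happen in the domain or in the body. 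Body reductions inside $B[x:=y]$ must be transported along the renaming, so we need a lemma that \step{B}{B'} implies \step{B[x:=y]}{B'[x:=y]}; this is an instance of compatibility of $\beta$ with substitution, up to $\alpha$, followed by closure under $\alpha$-conversion of typing. When the domain $A$ reduces to $A'$, the extended context $\Gamma,y:A$ reduces pointwise to $\Delta,y:A'$, and the IH applies directly. For \typerule{abs}, the resulting type is then $\Pi[y:A']B$. We restore $\Pi[y:A]B$ by \typerule{conv}, using a typing of $\Pi[y:A]B$ in $\Delta$ obtained from IH(ii) with reflexive subject reduction.
\item \textbf{Application, congruence.} Reductions in $M$ or $N$ follow from the IH. If $N$ reduces to $N'$, the type becomes $B[x:=N']$. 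We convert it back to $B[x:=N]$ using \cref{lemma:compatConvSub} (the factorized form) and syntactic validity plus cut for the typing of $B[x:=N]$.
\end{itemize}

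The main obstacle is the $\beta$-redex case $(\lambda[x:A_0]M_0)\cdot N\to M_0[x:=N]$. From $\types{\Gamma}{\lambda[x:A_0]M_0}{\Pi[x':A]B}$, inversion of $\lambda$-abstractions (\cref{lemma:inversionTypeAbs}) gives $\Gamma,y:A_0\vdashstd M_0[x:=y]:B_0[x'':=y]$ together with \conv{\Pi[x'':A_0]B_0}{\Pi[x':A]B}. Product injectivity (\cref{lemma:productInjectivity}) then gives \conv{A_0}{A} and conversion of the bodies at every variable, in particular at $y$.

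Next we convert $N:A$ into $N:A_0$, which needs $\types{\Gamma}{A_0}{s}$; this comes from the context validity of the inverted judgment. We then apply cut to obtain $M_0[x:=y][y:=N]:B_0[x'':=y][y:=N]$. By \cref{lemma:expandUpd} with the freshness of $y$, this subject is $M_0[x:=N]$. Its type is convertible, via \cref{lemma:compatConvSub}, to $B[x':=y][y:=N]\equiv B[x':=N]$.

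A final \typerule{conv} needs $\types{\Gamma}{B[x':=N]}{s}$, which we get from syntactic validity of the application, or directly by inverting the $\Pi$-type. Since the context may also reduce, we first do all of this in $\Gamma$ and then move to $\Delta$ with one more IH call using reflexive subject reduction. This is legitimate because the IH is on the derivation: we only need to apply IH(ii) to the subderivations, with the subject reduction split into a context step and a subject step. If that ordering causes trouble, we instead isolate the $\beta$-step as a separate lemma that holds in a fixed context and compose it with the context-reduction IH.
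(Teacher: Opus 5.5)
Your architecture matches the paper's: simultaneous induction on the derivations of $\Gamma\ \mathsf{ok}_{\mathsf{s}}$ and $\Gamma\vdash_{\mathsf{s}} M:A$, with reflexive reductions so the IH can be called with only the context or only the subject changing. For the root redex you use the same chain of $\lambda$-inversion, product injectivity, conversion of the argument, cut and \cref{lemma:expandUpd}.

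\textbf{Main gap: the order in the root-redex case.} You build $\Gamma\vdash_{\mathsf{s}} M_0[x:=N]:B[x':=N]$ in $\Gamma$ by cut and conversion. That is a new derivation, not a subderivation of the one you are inducting on, so no IH can transport it to $\Delta$. Your justification (``the IH is on the derivation'') is exactly what rules this step out. Induction on derivation height would not rescue it either, since cut can enlarge derivations.

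The paper proceeds the other way round. It first applies IH(ii) to the two premises $\Gamma\vdash_{\mathsf{s}}\lambda[x:A_0]M_0:\Pi[x':A]B$ and $\Gamma\vdash_{\mathsf{s}}N:A$, using the context step $\Gamma\rightarrow\rightarrow_{\beta_=}\Delta$ and the reflexive subject step. It then does inversion, injectivity, syntactic validity, conversion and cut entirely in $\Delta$, where no further IH is needed. Your fallback is sound only in this order: context-IH on the premises first, then the fixed-context $\beta$-lemma in $\Delta$.

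\textbf{The same slip in your variable case.} A typing of $A$ obtained by syntactic validity in $\Gamma$ is again a fresh derivation, and IH(ii) does not apply to it. What you need is the premise $\Gamma_1\vdash_{\mathsf{s}}A:s$ of the \texttt{cons} step that declared $x$. It sits inside the derivation of $\Gamma\ \mathsf{ok}_{\mathsf{s}}$, so it is a genuine subderivation. Apply IH(ii) to it with the restricted context reduction. Then weaken along the remaining declarations of $\Delta$, using $\Delta\ \mathsf{ok}_{\mathsf{s}}$ from IH(i). Your auxiliary lemma must return that subderivation (or be folded into the mutual induction), not just some typing of $A$.

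\textbf{Smaller point: freshness in the redex case.} The $y$ from \cref{lemma:inversionTypeAbs} is only known to be fresh for $\mathsf{fv}\,M_0-x$ and $\mathsf{fv}\,B_0-x''$, not for $\mathsf{fv}\,B-x'$. So $B[x':=y][y:=N]\equiv B[x':=N]$ does not follow from \cref{lemma:expandUpd}. Since \cref{lemma:productInjectivity} holds at every variable, pick a $w$ fresh for both bodies; the paper uses $X'$ on the concatenation of the two lists. Then $B_0[x'':=N]\equiv B_0[x'':=w][w:=N]\simeq_\beta B[x':=w][w:=N]\equiv B[x':=N]$.

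\textbf{Smaller point: the application-congruence case.} \cref{lemma:compatConvSub} varies the term being substituted into, not the substituent, so it does not give $B[x:=N']\simeq_\beta B[x:=N]$. That conversion holds instead because $(\lambda[x:A]B)\cdot N$ contracts to $B[x:=N]$, and, after reducing the argument, to $B[x:=N']$.
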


\begin{proof}
By simultaneous induction on 
the derivations of \ok{\Gamma} and \types{\Gamma}{M}{A}, and subordinate case analysis on \reflstepcxt{\Gamma}{\Delta} and \reflstep{M}{N}. 
We only show the most complex case in (ii), i.e.,
when the last rule applied is that of applications and the redex in $M$ being contracted is the outer-most one.
There we have 
$M=(\lambda[x:A_1]M_1) \cdot M_2$ and $N=M_1[x:=M_2]$, 
and \types{\Gamma}{(\lambda[x:A_1]M_1) \cdot M_2}{B_2[y:=M_2]} which
follows from \types{\Gamma}{\lambda[x:A_1]M_1}{\Pi[y:B_1]B_2} and \types{\Gamma}{M_2}{B_1}.
We must derive: \types{\Delta}{M_1[x:=M_2]}{B_2[y:=M_2]}. In order to do so, first we will derive \types{\Delta}{M_1[x:=M_2]}{C} for some term $C$ such that \conv{C}{B_2[y:=M_2]} and then use the conversion rule.
We proceed as follows:
\begin{enumerate}[label=\arabic*.,itemsep=0pt]
	\item By the IH we can convert the contexts on the premises and obtain: (i) \types{\Delta}{\lambda[x:A_1]M_1}{\Pi[y:B_1]B_2}, and; (ii) \types{\Delta}{M_2}{B_1}.
	
	\item By inversion on 1.(i) we know there are some $s_1$, $s_2$, $y'$, $B_2'$ and $z$ fresh for $\fv{M_1}-x$ and $\fv{B_2'}-y'$ such that (i) \types{\Delta}{A_1}{s_1}, (ii) \types{\Delta,z:A_1}{M_1[x:=z]}{B_2'[y':=z]}, (iii) \types{\Delta}{\Pi[y':A_1]B_2'}{s_2}; and; (iv) \conv{\Pi[y:B_1]B_2}{\Pi[y':A_1]B_2'}.
	
	\item By 
    \cref{lemma:productInjectivity}
	on 2.(iv) it follows: (i) \conv{B_1}{A_1} and (ii) \conv{B_2[y:=w]}{B_2'[y':=w]} for all $w$.
	
	\item By syntactic validity on 1.(i) we also obtain \types{\Delta}{\Pi[y:B_1]B_2}{s} for some $s$.
	
	\item By inversion on this last result we know there are some $s_1'$, $s_2'$ and $z'$ fresh $\fv{B_2}-y$ for such that (i) \types{\Delta}{B_1}{s_1'} and (ii) \types{\Delta,z':B_1}{B_2[y:=z']}{s_2'}.

	\item By the conversion rule using 1.(ii), 3.(i) and 2.(i) we have: \types{\Delta}{M_2}{A_1}.
	
	\item By cut on 2.(ii) and 6 and we have \types{\Delta}{M_1[x:=z][z:=M_2]}{B_2'[y':=z][z:=M_2]} which, by \cref{lemma:expandUpd}, equals to \types{\Delta}{M_1[x:=M_2]}{B_2'[y':=M_2]}. Let $C=B_2'[y':=M_2]$; then we have the first part of our goal. 
	
	\item Now, as to \conv{B_2'[y':=M_2]}{B_2[y:=M_2]}, first let ${z''=X'(\fv{B_2'}-y' \append\ \fv{B_2}-y)}$; by \cref{lemma:chiPrime}
	we have that $z''$ is fresh for $\fv{B_2'-y'}$ and $\fv{B_2}-y$.
	Next, by \cref{lemma:compatConvSub} on 3.(ii) we have that \conv{B_2[y:=z][z:=M_2]}{B_2'[y':=z][z:=M_2]}, and finally by \cref{lemma:expandUpd} and symmetry of conversion we obtain the second part of our goal.
	
	\item Finally, to use the conversion rule we have to prove that $C$ is a valid type, i.e., \types{\Delta}{B_2'[y':=M_2]}{s} for some $s$. 
	This can be easily accomplished by using cut on 5.(ii) and 6 in order to obtain that \types{\Delta}{B_2[y:=z'][z':=M_2]}{s_2'}, which by virtue of \cref{lemma:expandUpd} equals to \types{\Delta}{B_2[y:=M_2]}{s_2'}.		
\qedhere
\end{enumerate}
\end{proof}

\section{Consistency}
\label{sec:consistency}

In this section we will prove that for some class of impredicative PTS to be precisely defined and assumed to be normalizing, it follows that there is a type that cannot be inhabited in the empty context.
The proof we present extends the one for CC \cite{coquand1990}.

\subsection{Normal and Neutral Forms}
\label{sec:normalforms}

We say some $\lambda$-term $M$ is in \sourcelink{Section5/nf.html}{normal form}, written \nf{M}, iff it is irreducible:
\begin{Verbatim}
nf M = ∀ {N} → ¬(M →β N)
\end{Verbatim}
Particularly, if $M$ is an iteration of applications blocked by a variable at the head, i.e., $M = x \cdot M_1 \cdot M_2 \dots$ then we say it is in \sourcelink{Section5/ne.html}{neutral form}, and write it \neu{M}. Below we define it by recursion on the syntax:
\begin{Verbatim}
ne (var x) = ⊤; ne (M · N) = ne M × nf N; ne _ = ⊥
\end{Verbatim}

We can say then that some $\lambda$-term $M$ is \sourcelink{Section5/wn.html}{weakly normalizing} iff every sequence of reductions beginning on $M$ ends with some other $\lambda$-term in normal form:
\begin{Verbatim}
wn M = ∃ λ N → nf N × M →β*₀ N
\end{Verbatim}

The next inversion lemmas about \nfg\ follow directly:
\begin{lemma}[Inversion Lemmas for \nfg]
\label{lemma:inversionnf}
\begin{enumerate}[ref=\thelemma.(\roman{enumi}),label=(\roman{enumi}),itemsep=0pt]

\myitem{Section5/InversionAppNf.html} 
\Verb|∀ {M N} → nf (M · N) → nf M × nf N|
\label{lemma:inversionAppNf}

\myitem{Section5/InversionAbsNf.html} 
\Verb|∀ {x A M} → nf (λ[ x ∶ A ] M) → nf A × nf M|
\label{lemma:inversionAbsNf}

\myitem{Section5/InversionProdNf.html} 
\Verb|∀ {x A B} → nf (Π[ x ∶ A ] B) → nf A × nf B|
\end{enumerate}
\end{lemma}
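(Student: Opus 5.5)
The three inversion lemmas all follow the same pattern: a contrapositive argument that lifts a reduction step of a subterm to a reduction step of the whole term through the contextual closure. Recall that \nf{M} is defined as \Verb|∀ {N} → ¬(M →β N)|, so to prove, say, \nf{M} from \nf{(M \cdot N)}, we take an arbitrary $M'$ with \step{M}{M'} and derive a contradiction. The contextual closure \Verb|_→C_ _▹β_| contains the compatibility rules for each syntactic former. Hence from \step{M}{M'} we obtain \step{M \cdot N}{M' \cdot N} by the rule for the left side of an application. This contradicts the hypothesis \nf{(M \cdot N)}. Symmetrically, \step{N}{N'} gives \step{M\cdot N}{M\cdot N'} by the right application rule, which establishes \nf{N}.

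For (ii) and (iii) we proceed the same way, using the contextual rules for the domain and for the body of the binder. A step \step{A}{A'} yields \step{\lambda[x:A]M}{\lambda[x:A']M}, and a step \step{M}{M'} yields \step{\lambda[x:A]M}{\lambda[x:A]M'}. The $\Pi$-type case is identical, using \step{\Pi[x:A]B}{\Pi[x:A']B} and \step{\Pi[x:A]B}{\Pi[x:A]B'}. In each case the resulting step contradicts the normal-form hypothesis.

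The one point to check, and the only place where anything could go wrong, is that the contextual closure as defined in the framework keeps the bound name $x$ unchanged under the binder rules. In other words, it must not involve any renaming or $\alpha$-conversion. Since \Verb|_→β_| is defined purely as the compatible closure of $\triangleright_\beta$, with no $\alpha$-steps (the paper stresses this in \cref{sec:beta}), the binder rules are exactly the structural ones above, so no real obstacle arises. Each part is therefore a direct consequence of the definitions: in Agda, the body is just \Verb|λ r → h (ctor r)|, where \Verb|ctor| is the appropriate constructor of the contextual closure.
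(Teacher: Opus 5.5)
Your proof is correct and takes the same direct route the paper has in mind when it says these lemmas ``follow directly''. Each part is the contrapositive of lifting a step in a subterm, e.g.\ \step{M}{M'}, to a step of the whole term via the corresponding congruence rule of the compatible closure; that closure is the standard one and leaves the bound name unchanged.
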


Let $(\mathcal{S},\mathcal{A},\mathcal{R})$ be any PTS which we also identify with $\_\vdash_{\mathsf{s}}\_∶\_$.
Then we say said PTS is \sourcelink{Section5/Normalizing.html}{normalizing} iff every subject and predicate are weakly normalizing:
\begin{Verbatim}
Normalizing = ∀ {Γ M A} → Γ ⊢ₛ M ∶ A → wn M × wn A
\end{Verbatim}
In the code, both the module containing the above definition and the one with the definition of the PTS are parameterized by a triple $(\mathcal{S},\mathcal{A},\mathcal{R})$. The current module then acts as sort of a proxy and opens the others with the supplied triple.

The definitions of normal and neutral forms above naturally reflect our intuition on such concepts. However, they are not entirely fit for some of the upcoming results. Hence, we also define 
\sourcelink{Section5/NormalForm.html}{normal-} and \sourcelink{Section5/NeutralForm.html}{neutral- forms} $\lambda$-terms by (mutual) \textit{induction}, using informal notation, with the following set of rules:
\begin{center}
\AxiomC{}
\UnaryInfC{\Nf{s}}
\bottomAlignProof
\DisplayProof
\quad
\AxiomC{\Nf{A}}
\AxiomC{\Nf{M}}
\BinaryInfC{\Nf{(\lambda[x:A]M)}}
\bottomAlignProof
\DisplayProof
\quad
\AxiomC{\Nf{A}}
\AxiomC{\Nf{B}}
\BinaryInfC{\Nf{(\Pi[x:A]B)}}
\bottomAlignProof
\DisplayProof
\quad
\AxiomC{\Neu{x}{M}}
\UnaryInfC{\Nf{M}}
\bottomAlignProof
\DisplayProof
\quad
\AxiomC{}
\UnaryInfC{\Neu{x}{x}}
\bottomAlignProof
\DisplayProof
\quad
\AxiomC{\Neu{x}{M}}
\AxiomC{\Nf{N}}
\BinaryInfC{\Neu{x}{(M \cdot N)}}
\bottomAlignProof
\DisplayProof
\end{center}
We have also added the head variable to the type in the neutral form predicate for convenience.

Note that the previous predicate exclude some normal-form terms such as $s \app M$ or $(\Pi[x:A]B) \app M$. 
These, however, cannot occur in any typing derivation. 
Since we will only consider well-typed terms, it turns out that our definition is suitable enough.

\subsection{Soundness and Completeness of the Inductive Characterization of Normal Forms}
\label{sec:soundnessnf}

To delegate some of the results to the inductive definition we will require a lemma of completeness, i.e., that \nfg/\neug\ implies \Nfg/\Neg.
As to the opposite direction, i.e., soundness, it turns out that at some point (\cref{lemma:consistencyNf}) we are going to have a derivation of \Nf{A} for some $A$, and it will require us to use some lemma (\cref{lemma:normalTermConvertsToSort}) that follows much more easily by using \nf{A} instead.
Hence, and to close the circle, we will also prove soundness.
 
Soundness follows directly by mutual induction on the structure of the derivations:
\begin{lemma}[Soundness of \Nfg/\Neg]
\begin{enumerate}[ref=\thelemma.(\roman{enumi}),label=(\roman{enumi}),itemsep=0pt]

\myitem{Section5/SoundnessNF.html}
\Verb|∀ {M} → Nf M → nf M|
\label{lemma:soundnessNf}

\myitem{Section5/SoundnessNE.html}
\Verb|∀ {x M} → Ne x M → nf M|
\end{enumerate}
\end{lemma}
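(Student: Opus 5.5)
The two statements are proved simultaneously, by mutual structural induction on the given derivations of \Nf{M} and \Neu{x}{M}. In each case the hypothesis is some $N$ with \step{M}{N}, and the aim is a contradiction. Since one-step $\beta$-reduction is the contextual closure of \Verb|_▹β_|, inverting \step{M}{N} leaves only two possibilities: either $M$ itself is a $\beta$-redex, or the step occurs inside one of the immediate subterms of $M$.

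For \Nf{}, the cases go as follows.
\begin{itemize}[itemsep=0pt]
\item \emph{Sort.} A constant has neither a redex nor subterms, so no step \step{s}{N} exists.
\item \emph{$\lambda$ and $\Pi$.} An abstraction or product is never a redex at the root. Any step therefore happens either in the domain $A$ or in the body, and the induction hypotheses \nf{A} and \nf{M} (resp.\ \nf{B}) rule both out.
\item \emph{Embedding of neutral terms.} We apply the mutual hypothesis for \Neg\ directly.
\end{itemize}

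For \Neg, the cases are these.
\begin{itemize}[itemsep=0pt]
\item \emph{Variable.} A variable admits no step, as with sorts.
\item \emph{Application $M \cdot N$ with \Neu{x}{M} and \Nf{N}.} A step inside $M$ is excluded by the \Neg\ hypothesis on $M$, and a step inside $N$ by the \Nf{} hypothesis on $N$. A step at the root would require $M \cdot N$ to be a redex, i.e.\ $M = \lambda[y:A]M'$. This is impossible, because a derivation of \Neu{x}{M} only ever builds a variable or an application.
\end{itemize}

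The only point needing care is the root-redex case for neutral applications. To handle it, I would first prove (or inline) the small fact that \Neu{x}{M} excludes $M$ being a $\lambda$-abstraction. This follows by inspecting the last rule of the \Neg\ derivation. In Agda it is dispatched by pattern matching, since unification of the constructors fails.

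Everything else is routine inversion on the contextual closure. I also expect to need the mutual induction to be set up so that Agda accepts termination, either with two mutually recursive functions or by proving both statements at once.
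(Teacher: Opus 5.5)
Your proposal is correct and takes the same approach as the paper: a mutual structural induction on the derivations of \Nf{M} and \Neu{x}{M}, inverting the one-step reduction in each case. The only non-routine case is the root redex for a neutral application, and you handle it correctly by noting that a term satisfying \Neu{x}{M} is never a $\lambda$-abstraction; the paper's proof leaves this detail implicit.
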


Completeness, on the other hand, depends on some results about renamings. Let \IsVarg\ be the predicate on $\Lambda$ that holds iff the subject is a variable. Then we define \sourcelink{Section5/Renaming.html}{renamings} as substitutions mapping any two set of variables:
\begin{Verbatim}
Ren ρ = ∀ x → IsVar (ρ x)
\end{Verbatim}

The next result follows easily:
\begin{linklemma}{Section5/UnaryRen.html}
\Verb|∀ {x y} → Ren (ι ‚ x := v y)|
\label{lemma:updateRenaming}
\end{linklemma}

Then we have some compatibility results about renamings and normal forms:
\begin{lemma}
\begin{enumerate}[ref=\thelemma.(\roman{enumi}),label=(\roman{enumi}),itemsep=0pt]

\myitem{Section5/Equivariance.html}
\Verb|∀ {M ρ} → Ren ρ → nf M → nf (M ∙ ρ)|

\myitem{Section5/EquivarianceNF.html}
\Verb|∀ {M ρ} → Ren ρ → Nf M → Nf (M ∙ ρ)|
\label{lemma:stabilityNfUnderRenaming}

\myitem{Section5/EquivarianceNE.html}
\Verb|∀ {x ρ M} → Ren ρ → Ne x M → ∃ λ y → Ne y (M ∙ ρ)|

\myitem{Section5/AntirenamingNF.html}
\Verb|∀ {M ρ} → Ren ρ → Nf (M ∙ ρ) → Nf M|
\label{lemma:antirenamingNf}

\myitem{Section5/AntirenamingNE.html}
\Verb|∀ {x ρ M} → Ren ρ → Ne x (M ∙ ρ) → ∃ λ y → Ne y M|
\end{enumerate}
\end{lemma}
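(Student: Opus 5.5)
Each of the five items is proved by structural induction on $M$, after fixing the renaming $\rho$. Items (i)–(iii) go forward; (iv) and (v) go backward and are mutually dependent, so they are proved together. The only cases that need work are the binders $\lambda[x:A]M$ and $\Pi[x:A]B$, where
\[
(\lambda[x:A]M)\bullet\rho=\lambda[y:A\bullet\rho](M\bullet(\rho,x:=y)).
\]
The key remark is that $(\rho,x:=y)$ is again a renaming, since $\rho$ maps every variable to a variable and the update adds one more variable image. This mirrors Lemma~\ref{lemma:updateRenaming}, which is the special case $\rho=\iota$. So the induction hypothesis, stated for all renamings, applies to the body.

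For (ii) and (iii) I would do mutual induction on the derivation of $\Nf{M}$ / $\Neu{x}{M}$. Sorts map to themselves. In the $\lambda$ and $\Pi$ rules, the IH is applied to $A$ with $\rho$ and to the body with $(\rho,x:=y)$. In the rule $\Neu{x}{x}$, $\rho\,x$ is some variable $y$, so we get $\Neu{y}{y}$. The application rule is immediate from the IHs.

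For (i) I would argue by contraposition. Suppose $M\bullet\rho\to_\beta N$; I show $M$ is reducible, by induction on $M$ with inversion on the step. The delicate case is a head redex in $M\bullet\rho$ when $M=M_1\cdot M_2$. Since $\rho$ sends variables to variables, $M_1\bullet\rho$ can be an abstraction only if $M_1$ itself is an abstraction, so $M$ already contains a redex. Congruence steps are pushed into subterms by the IH, with the updated renaming under binders. Going through completeness might look like an alternative for (i), but completeness is not yet available (the soundness lemma gives only $\Nfg\Rightarrow\nfg$, and it fails on terms like $s\cdot M$), so this direct argument is the one I would use.

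For (iv) and (v) I would use induction on $M$, not on the derivation, and invert the derivation for $M\bullet\rho$:
\begin{itemize}
\item If $M=c\,s$, the result is immediate.
\item If $M=v\,x$, then $\Nf{(\rho x)}$ and trivially $\Neu{x}{x}$.
\item If $M$ is an abstraction or product, $M\bullet\rho$ has the same head constructor. Inversion gives $\Nf{(A\bullet\rho)}$ and $\Nf{(M'\bullet(\rho,x:=y))}$, and the IH (for all renamings) yields $\Nf{A}$ and $\Nf{M'}$.
\item If $M=M_1\cdot M_2$, then $M\bullet\rho=(M_1\bullet\rho)\cdot(M_2\bullet\rho)$ can only be $\Nfg$ through the neutral rule. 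Hence $\Neu{z}{(M_1\bullet\rho)}$ and $\Nf{(M_2\bullet\rho)}$, and the IH gives some $y$ with $\Neu{y}{M_1}$, together with $\Nf{M_2}$.
\end{itemize}
For (v) the application case is the same, and the variable case yields $\Neu{x}{x}$.

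The step I expect to be the main obstacle is the Agda-level inversion in (iv)/(v). One has to match on the shape of $M\bullet\rho$ when it is produced by the substitution function: the binder cases compute a name via $X$, and the variable case produces an arbitrary variable $\rho x$. Unification on these requires first doing case analysis on $M$ and rewriting with the defining equations of $\_\bullet\_$, and only then inverting the $\Nfg$/$\Neg$ derivation. The closure of renamings under update is the other small but essential ingredient.
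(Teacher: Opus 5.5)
Your proposal is correct and is essentially the paper's proof: induction on the syntax for (i), and simultaneous induction for the pairs (ii)/(iii) and (iv)/(v), with the binder cases handled by applying the hypothesis at $(\rho,x:=v\,y)$. The paper cites \cref{lemma:updateRenaming} at that point, and you are right that what is actually needed there is its generalization from $\iota$ to an arbitrary renaming. The remaining differences are cosmetic. In (i) you reflect a step out of $M\bullet\rho$ back to a step out of $M$, instead of splitting $\mathsf{nf}$ with the inversion lemmas of \cref{lemma:inversionnf}. In (iv)/(v) you recurse on $M$ rather than on the normal-form derivation, which comes to the same thing, since $M$ has to be split before that derivation can be inverted anyway.
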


\begin{proof}
(i) follows by induction on the syntax, using \cref{lemma:updateRenaming} and a corresponding inversion lemma (\cref{lemma:inversionnf}) for each constructor. Both pairs of lemmas (ii) and (iii), and (iv) and (v), are proven simultaneously by  induction on the derivation of the normal form of $M$ and by using \cref{lemma:updateRenaming}.
\end{proof}

Finally, we can prove that \Nfg\ and \Neg\ are complete w.r.t. the intuitive characterization:

\begin{lemma}[Completeness of \Nfg/\Neg]
\label{lemma:completenessNf}
\begin{enumerate}[ref=\thelemma.(\roman{enumi}),label=(\roman{enumi}),itemsep=0pt]
\myitem{Section5/CompletenessNf.html}
\Verb|∀ {Γ M A} → Γ ⊢ₛ M ∶ A → nf M → Nf M|

\myitem{Section5/CompletenessNe.html} 
\Verb|∀ {M Γ A} → Γ ⊢ₛ M ∶ A → ne M → ∃ λ x → Ne x M|
\end{enumerate}
\end{lemma}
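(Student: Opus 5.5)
We prove (i) and (ii) simultaneously, by well-founded induction on the size of $M$ (or structural induction on $M$, generalizing over $\Gamma$ and $A$), with a subordinate case analysis on the shape of $M$. In each case we use the corresponding inversion lemma for typing to obtain typing derivations for the immediate subterms. We then use the inversion lemmas for \nfg\ (\cref{lemma:inversionnf}) to push the normality hypothesis down to those subterms, and finally apply the induction hypothesis.

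\textbf{Part (ii).} Only variables and applications have to be considered, since $\neu{M}$ is $\bot$ otherwise.
\begin{itemize}[itemsep=0pt]
\item If $M = x$, then $\Neu{x}{x}$ holds directly.
\item If $M = M_1 \cdot M_2$, then $\neu{M}$ gives $\neu{M_1}$ and $\nf{M_2}$. Inversion of applications (\cref{lemma:inversionTypeApp}) gives $\types{\Gamma}{M_1}{\Pi[x:A]B}$ and $\types{\Gamma}{M_2}{A}$. The IH (ii) on $M_1$ yields some $y$ with $\Neu{y}{M_1}$, and the IH (i) on $M_2$ yields $\Nf{M_2}$. Hence $\Neu{y}{(M_1\cdot M_2)}$.
\end{itemize}

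\textbf{Part (i), easy cases.}
\begin{itemize}[itemsep=0pt]
\item Sorts: immediate.
\item Variables: go through $\Neu{x}{x}$.
\item $\Pi[x:A]B$: inversion of $\Pi$-types gives $\types{\Gamma}{A}{s_1}$ and $\types{\Gamma,y:A}{B[x:=y]}{s_2}$. We get $\nf{A}$ and $\nf{B}$ from \cref{lemma:inversionnf}. Before applying the IH to $B[x:=y]$, we transport normality using equivariance, since $\iota,x:=y$ is a renaming by \cref{lemma:updateRenaming}. The IH then gives $\Nf{(B[x:=y])}$, and antirenaming (\cref{lemma:antirenamingNf}) returns $\Nf{B}$. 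This is why the induction must be on size rather than on the term itself: $B[x:=y]$ is not a strict subterm but has the same size as $B$. Alternatively, one can do induction on the typing derivation, which needs closure under the conversion rule, handled by simply passing through.
\item $\lambda[x:A]M_0$: identical, using inversion of $\lambda$-abstractions (\cref{lemma:inversionTypeAbs}). Here $A$ is typed inside the inverted $\Pi$-type, so a further inversion of that product is needed.
\end{itemize}

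\textbf{Part (i), application case (main obstacle).} Let $M = M_1\cdot M_2$ with $\nf{M}$. We must show that $M$ is neutral, i.e. that the head of $M_1$ is a variable. We analyse $M_1$:
\begin{itemize}[itemsep=0pt]
\item $M_1$ cannot be a $\lambda$-abstraction, since $M$ would then be a $\beta$-redex, contradicting $\nf{M}$.
\item $M_1$ cannot be a sort or a $\Pi$-type. It would then be typed with $\Pi[x:A]B$, and inversion of sorts or of $\Pi$-types would give $\Pi[x:A]B \convgspc c\,s$ (via symmetry), contradicting \cref{lemma:absurdEquationProdSorts}.
\item So $M_1$ is a variable or an application. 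We need $\neu{M_1}$ in order to apply the IH (ii). For this we strengthen the statement: for typed, normal terms whose head is not $\lambda$, sort, or $\Pi$, we derive $\neu{}$ by an auxiliary induction along the application spine. In each step the same absurd-equation argument excludes a sort or $\Pi$ at the head, and a $\lambda$ at the head of an inner application is excluded by normality, via \cref{lemma:inversionAppNf}.
\end{itemize}
Once $\neu{M}$ is established, the IH (ii) gives $\Neu{x}{M}$, and hence $\Nf{M}$. We expect this spine analysis, and threading the typing of the head through repeated application inversions, to be the most delicate bookkeeping.
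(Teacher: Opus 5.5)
Your argument is correct, but it is organised differently from the paper's. The paper proves (i) and (ii) simultaneously by induction on the typing derivation itself, which is the alternative you mention only in passing. In the \typerule{prod} and \typerule{abs} cases the premises already type the domain $A$ and the renamed bodies $B[x:=y]$ and $M[x:=z]$. So, once normality has been transported along the renaming, the IH applies to them directly, antirenaming (\cref{lemma:antirenamingNf}) gives back $\Nf{B}$ or $\Nf{M}$, and the \typerule{conv} case simply passes through. Your well-founded induction on the size of $M$ instead has to recover those typings through the inversion lemmas, including the extra inversion of the product in the $\lambda$ case, which you correctly spot. It also needs an additional lemma, absent from the paper, that renamings preserve size. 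In Agda this means well-founded rather than structural recursion, which the framework otherwise tries to avoid. What you gain is a purely syntax-directed case analysis with no conversion case.

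In the application case, your auxiliary induction along the spine is sound but unnecessary. $M_1$ is typed by inversion of applications and is normal by \cref{lemma:inversionAppNf}, so the IH (i) already yields $\Nf{M_1}$. Since $M_1$ is a variable or an application, that derivation must end with the neutral rule, giving $\Neu{y}{M_1}$. Together with $\Nf{M_2}$ you get $\Neu{y}{(M_1\cdot M_2)}$ directly. Note also that the IH is applied to $M_1$ and $M_2$, not to $M$ itself as your last sentence suggests.
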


\begin{proof}
The proofs are carried out simultaneously by induction on the structure of the typing derivation. 
We will only show the case of $\lambda$-abstractions. 
There we have
\types{\Gamma}{\lambda[x:A]M}{\Pi[y:A]B}
which follows from 
\types{\Gamma}{A}{s_1}
and 
\types{\Gamma,z:A}{M[x:=z]}{B[y:=z]} (we omite some premises).
We must prove that there is some derivation of \Nf{(\lambda[x:A]M)}.
The proof is actually quite simple.
First, by renaming we have \nf{(M[x:=z])}. Second, by the IH we obtain a derivation for  \Nf{A} and another one for \Nf{(M[x:=z])}. Finally, by anti-renaming (\cref{lemma:antirenamingNf}) we obtain \Nf{M}, and by the abs. rule, \Nf{(\lambda[x:A]M)}.
\end{proof}

\subsection{The Proof of Consistency}

With the previous results at our disposal we can finally turn to the proof of consistency itself and show that, for some interesting class of impredicative and normalizing PTS to be defined, it follows that there is no term $M$ such that \types{}{M}{\Pi[x:s]x} can be derived for any $x$ and $s$. 
In order to do so, 
first, we will show a very rudimentary consistency result stating that there is no $M$ \textit{in normal form} such that \types{x:s}{M}{x}.
Leveraging on this, we will prove that neither can \types{}{M}{\Pi[x:s]z} be derived with $M$ in normal form too.
Finally, by (hypothetical) normalization and SR we will obtain the result desired for any $M$.

The next lemma will be required to deal with the case of variables in one of the steps. It states that for any typeable neutral-form $\lambda$-term, the variable at the head of the iteration has to be of functional type:
\begin{linklemma}{Section5/HeadFun.html}
\Verb|∀ {Γ x M N A} → Ne x M → Γ ⊢ₛ M · N ∶ A → ∃₄ λ y B C D → (x , B) ∈ Γ|\\
\Verb| × B ≃β Π[ y ∶ C ] D|
\label{lemma:headNeutralFun}
\end{linklemma}

\begin{proof}
By structural induction on the derivation of \Neu{x}{M}
and using \cref{lemma:inversionTypeApp,lemma:inversionTypeVar}.
\end{proof}

The next result is the one that prompted us to prove soundness of \Nfg/\Neg\ and as mentioned earlier:
\begin{linklemma}{Section5/ConvSort.html}
\label{lemma:normalTermConvertsToSort}
\Verb|∀ {A s} → nf A → A ≃β c s → A ≡ c s|
\end{linklemma}

\begin{proof}
By the second CR theorem we have there are some $\alpha$-convertible terms $B$ and $C$ such that \manystep{A}{B} and \manystep{s}{C}. By a local induction on the latter we can show that $C \equiv s$ must be the case, hence $B \equiv s$ as well. Now we analyze the derivation of the former sequence. On the one hand, if $A = B$ then we are done. 
On the other hand, if \manystep{A}{B} follows from \step{A}{D} and \manystep{D}{s} for some $D$, then we get a contradiction with the hypothesis that $A$ is in normal form, ex falso quodlibet.
\end{proof}

The direct alternative of using \Nfg\ instead of \nfg\ above implies to conduct the proof above by induction on the structure of \Nf{A}, simultaneously with a proof for a similar result about \Neu{}{A}. In addition to that, several results about the impossibility of existence of reductions beginning in normal-form terms (using \Nfg) would also be required. It does not seem worthwhile to consider this path. 

We continue with the following primitive consistency results:

\begin{linklemma}{Section5/PrimConsistency.html}
\label{lemma:primitiveConsistency}
\Verb|∀ {z s M} → Nf M → ¬([(z , c s)] ⊢ₛ M ∶ v z)|
\end{linklemma}

\begin{proof}
Suppose \types{z:s}{M}{z} does hold. Then we should arrive at some contradiction. We proceed by structural induction on the derivation of \Nf{M}. 
All cases are quite similar, each of them requiring an appropriate inversion lemma to obtain the aforementioned contradiction: if $M$ is a variable then we use \cref{lemma:inversionTypeVar} to find out \conv{z}{s'} for some $s'$, which by \cref{lemma:absurdEquationVarSort} is absurd; if it is an abstraction then we use \cref{lemma:inversionTypeAbs} and \cref{lemma:absurdEquationVarProd}; etc. 
Exceptionally, if $M$ is an application then we use \cref{lemma:headNeutralFun} instead to derive a contradictory equation.
\end{proof}

\begin{lemma}
\label{lemma:consistency}
\begin{enumerate}[ref=\thelemma.(\roman{enumi}),label=(\roman{enumi}),itemsep=0pt]
\myitem{Section5/ConsistencyNe.html} 
\Verb|∀ {M y A} → Ne y M → ¬([] ⊢ₛ M ∶ A)|

\myitem{Section5/ConsistencyNF.html}
\Verb|∀ {M x s} → Nf M → ¬([] ⊢ₛ M ∶ Π[ x ∶ c s ](v x))|
\label{lemma:consistencyNf}
\end{enumerate}
\end{lemma}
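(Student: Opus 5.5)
For (i), I would argue by structural induction on the derivation of \Neu{y}{M}, with the typing derivation arbitrary. In the base case $M = y$, Lemma~\ref{lemma:inversionTypeVar} applied to \types{[]}{y}{A} yields $(y,B)\in[]$ for some $B$, which is absurd. In the inductive case $M = M'\cdot N$ with \Neu{y}{M'} and \Nf{N}, Lemma~\ref{lemma:inversionTypeApp} gives \types{[]}{M'}{\Pi[x:C]D} for some $x$, $C$, $D$. The induction hypothesis on \Neu{y}{M'} then refutes this. Alternatively, I could apply Lemma~\ref{lemma:headNeutralFun} to \Neu{y}{M'} and \types{[]}{M'\cdot N}{A} directly, which gives $(y,B)\in[]$ and is again absurd. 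The second route is even shorter and needs no induction at all beyond that lemma's own. Either way, (i) is routine.

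For (ii), I would proceed by case analysis on \Nf{M}, without needing induction. If $M$ is a sort $s'$, Lemma~\ref{lemma:inversionTypeSorts} gives \conv{\Pi[x:s](x)}{s''}, which is absurd by Lemma~\ref{lemma:absurdEquationProdSorts}. If $M=\Pi[x':A]B$, inversion of $\Pi$-types gives \conv{\Pi[x:s](x)}{s_3}, again absurd by Lemma~\ref{lemma:absurdEquationProdSorts}. If \Nf{M} comes from \Neu{y}{M}, part (i) concludes.

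The main case is $M=\lambda[x':A]M_0$ with \Nf{A} and \Nf{M_0}. Lemma~\ref{lemma:inversionTypeAbs} yields $z$, $x''$, $B$ such that \types{[],z:A}{M_0[x':=z]}{B[x'':=z]} and \conv{\Pi[x:s]x}{\Pi[x'':A]B}. Product injectivity (Lemma~\ref{lemma:productInjectivity}) then gives \conv{s}{A} and \conv{x[x:=z]}{B[x'':=z]}, i.e.\ \conv{z}{B[x'':=z]}. Since \Nf{A}, soundness (Lemma~\ref{lemma:soundnessNf}) gives \nf{A}, and Lemma~\ref{lemma:normalTermConvertsToSort} (with symmetry) gives $A\equiv s$. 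So the context is exactly $[(z,c\,s)]$. Next I need \types{[z:s]}{M_0[x':=z]}{z}, obtained by the conversion rule. This requires \types{[z:s]}{z}{s}, which follows from the var rule once the context is valid. Validity of the context comes from context validity (Lemma~\ref{lemma:contextValidity}) applied to the typing judgment of $M_0[x':=z]$.

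By Lemma~\ref{lemma:stabilityNfUnderRenaming} with the renaming of Lemma~\ref{lemma:updateRenaming}, \Nf{(M_0[x':=z])}. Lemma~\ref{lemma:primitiveConsistency} then gives the contradiction. The remaining cases, where $M$ is a variable or an application, are all covered by the \Neg\ subcase.

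The main obstacle I expect is this abstraction case. One has to be careful that product injectivity yields the instance at exactly the variable $z$ chosen by inversion, which holds because it is stated for all $y$. One also has to turn the convertibility \conv{A}{s} into the syntactic identity $A\equiv s$ so that the context literally matches the hypothesis of Lemma~\ref{lemma:primitiveConsistency}. This last point is precisely where soundness of \Nfg\ and Lemma~\ref{lemma:normalTermConvertsToSort} are needed.
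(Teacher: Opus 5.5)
Your proposal is correct and takes essentially the paper's approach. For (i) you use induction on \Neu{y}{M} with typing inversion. For (ii) you do case analysis on \Nf{M}, with the abstraction case going through the same sequence as the paper: inversion, product injectivity, soundness plus Lemma~\ref{lemma:normalTermConvertsToSort} to get $A\equiv s$, context validity with the var and conversion rules, stability of \Nfg\ under renaming, and finally Lemma~\ref{lemma:primitiveConsistency}. The sort, product and neutral cases you spell out (which the paper leaves implicit) are handled correctly. Your shortcut for (i) via Lemma~\ref{lemma:headNeutralFun} is also valid, provided the variable case $M=y$ is still dispatched by inversion of variables.
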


\begin{proof}
(i) follows by induction on the structure of \Neu{y}{M} and by using \cref{lemma:inversionnf}. As to (ii), we proceed by case analysis on the structure of \Nf{M}. We only show the case of $\lambda$-abstractions, which is the most complex.
There we have $M=\lambda[y:A]N$, and let us suppose there is a derivation of \types{}{\lambda[y:A]N}{\Pi[x:s]x}; we should get a contradiction.
By inversion on abstractions we can derive \types{z:A}{N[y:=z]}{B[y':=z]} for some $z$ with \conv{\Pi[x:s]x}{\Pi[y':A]B}. 
Next, by product injectivity we know \conv{s}{A} and \conv{z'}{B[y':=z']} for any $z'$, particularly \conv{z}{B[y':=z]}. 
By \cref{lemma:soundnessNf} we have \nf{A}, and so by \cref{lemma:normalTermConvertsToSort}, $A \equiv s$. 
Thus, so far, we have \types{z:s}{N[y:=z]}{B[y':=z]}. 
Now we would like to use \cref{lemma:primitiveConsistency}, so we must: (i) show that $N[y:=z]$ is also in normal form, and; (ii) convert the type $B[y':=z]$ to $z$. (i) follows by \cref{lemma:stabilityNfUnderRenaming}. 
As to (ii), by \cref{lemma:contextValidity} we have \ok{(z:s)}, hence by the variable rule, \types{z:s}{z}{s}, and thus, we can use the conversion rule to derive \types{z:s}{N[y:=z]}{z} and proceed as mentioned. 
\end{proof}

Without further ado, we have consistency. 
Let $(\mathcal{S},\mathcal{A},\mathcal{R})$ be any normalizing PTS.\footnote{\texttt{Normalizing} is a parameter or assumption to the current module.} Then:
\begin{theorem}[\sourcelink{Section5/Consistency.html}{Consistency}]
\label{theo:consistency}
\Verb|∀ {x s} → ¬(∃ λ M → [] ⊢ₛ M ∶ Π[ x ∶ c s ](v x))|
\end{theorem}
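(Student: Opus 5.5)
Suppose, towards a contradiction, that there is some $M$ with \types{}{M}{\Pi[x:s]x}. The plan is to replace $M$ by a normal form and then appeal to \cref{lemma:consistencyNf}.

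First, by the \texttt{Normalizing} assumption applied to this derivation we obtain \Verb|wn M|: there is some $N$ with \nf{N} and \manystep{M}{N}.

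Second, we transport the typing judgment along this reduction sequence. We proceed by induction on the derivation of \manystep{M}{N}. The reflexive case is immediate. In the step case \step{M}{M'} and \manystep{M'}{N}, and we apply \cref{theo:sr2} with $\Gamma=\Delta=[]$. This requires a proof of \reflstepcxt{[]}{[]}, which holds trivially since the pointwise relation on empty lists is witnessed by the nil case, together with the reduction \step{M}{M'} injected into \Verb|_→β₌_| by its right disjunct. This yields \types{}{M'}{\Pi[x:s]x}, and the induction hypothesis gives \types{}{N}{\Pi[x:s]x}. Since the type is untouched, no conversion rule is needed. It is cleanest to isolate this step as a small auxiliary lemma stating that typing in a fixed context is preserved by \Verb|_→β*₀_|.

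Third, from \types{}{N}{\Pi[x:s]x} and \nf{N} we get \Nf{N} by completeness (\cref{lemma:completenessNf}(i)), which needs the typing derivation just obtained. Then \cref{lemma:consistencyNf} applied to \Nf{N} and the derivation yields the desired contradiction.

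No step is genuinely hard. The only mild friction is the star-induction lifting \cref{theo:sr2} to many-step reduction, because of the bookkeeping of the reflexive context relation \reflstepcxt{[]}{[]} and of the injection of single steps into \Verb|_→β₌_|. I expect this to be a few lines.
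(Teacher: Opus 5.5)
Your proof is correct and follows the paper's argument: normalization gives a normal form $N$, subject reduction transports the typing to $N$, and completeness followed by \cref{lemma:consistencyNf} yields the contradiction. The paper just says ``by SR'' here, and your induction on the star closure (with the trivial pointwise relation on the empty context and the injection of single steps into the reflexive relation) correctly fills in that step.
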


\begin{proof}
Suppose there is such term and derivation. By normalization we have there is some $N$ in normal form (\nf{N}) such that $M$ reduces to in a finite number of steps. 
By SR we have \types{}{N}{\Pi[x:s]x}. 
By \cref{lemma:completenessNf} we have \Nf{N}, and so by \cref{lemma:consistencyNf} we obtain a contradiction.
\end{proof}

\subsection{Validity of the Uninhabited Type and Impredicativity}

Is the type $\Pi[x:s]x$ valid? For if that were \textit{not} the case, then it would obviously follow that it cannot be assigned to any term and thus, the proof would not actually account for consistency but merely be a fallacy. 
The next lemma shows that $\Pi[x : s]x$ is certainly valid, and furthermore, it precisely defines those PTS to which the consistency proof can be applied:

\begin{linklemma}{Section5/ValidityFalsehood.html}
\Verb|∀ {x s t} → [] ⊢ₛ Π[ x ∶ c s ](v x) ∶ c t ↔ ∃ λ u → 𝒜 s u × ℛ u s t|
\end{linklemma}

\begin{proof}
The direct follows by using the inversion lemmas and \cref{lemma:normalTermConvertsToSort}, while the converse by the product and variable rules.
\end{proof}

The axioms and rules on the right-hand side of the equivalence above describe an impredicative principle.\footnote{This should come as no surprise since our proof is an adaptation of one for CC, which is impredicative.} 
Let us replace $s$ by $*$ and $u$ by $\square$ in order to establish some parallelism with the systems in the $\lambda$-cube. 
In these systems it is possible to use the product rule to derive the judgement below:
\[
\Pi[x:*]B:*
\]
If we associate $*$ to propositions, 
then the preceding judgment establishes that any quantification of propositions is also a proposition, or, read conversely, which is more suggestive, that the type of propositions ``contains'' (is inhabited by) any quantification of itself; i.e., $*$ is an impredicative definition.
Some well-known instances of the PTS, which also belong to the $\lambda$-cube, that satisfy $(*,\square)\in\mathcal{A}$ and $(\square,*,*)\in\mathcal{R}$ are:
$\lambda 2$ or System~F;
$\lambda$P$2$;
$\lambda \omega$;
and; $\lambda$C.
In these systems, $\Pi[x:{*}]x$ codes falsehood, which translates to $\forall X. X$ under the Curry-Howard interpretation. 

\section{Conclusions and Related Work}
\label{sec:conclusions}

All in all, the size of the code does not seem to have exploded by any means. 
The entire formalization is approximately 4300LoC, from which 3000LoC correspond to previous work on Stoughton's substitutions and PTS. 
To put in perspective, there are at least two works with which we can compare ours.
First, there is the formalization of the meta-theory of CC, PTS and CTS using dBI in Coq by Barras and Werner \cite{coqincoqSources,barrasSourcePTS,barras96a,barras96aShort}. 
Their corresponding formalization is roughly 2900LoC.
And second, there is the formalization by Aydemir~et~al. on the meta-theory of CC using locally-nameless syntax in Coq as well \cite{Aydemir2008sources,aydemir2008}. They stopped short of proving consistency, and the work took out about 4800LoC. 

A complementary method to the LoC count for evaluating the approach is to assess
the criteria from the POPLMark challenge~\cite{aydemir2005}: (1) \textit{overhead}, (2) \textit{transparency} and (3) \textit{cost of entry}. Certainly, (1) is medium at least or high if we agree with their judgment on the dBI technique (c.f.~p.~55).
As to (2), we take the view that our approach quite steals the show (at least in terms of the syntax used for presenting our results, not in the arguments used within the proofs) as it is the most faithful to informal presentations. Take for instance  \cref{lemma:weakening} and compare it with the following different notations:\footnote{The formalization of the weakening lemma is particularly transparent too when using locally-nameless syntax, however, other basic definitions using this technique such as $\beta$-reduction are not as due to some other notational clutter, e.g., the opening action, mentioned in the next paragraph.} 
\begin{table}[h]
\centering
\begin{tabular}{ccc}
\AxiomC{$\Gamma \vdash A : B$}
\AxiomC{$\Gamma \vdash C : s$}
\RightLabel{($x\not\in\Gamma$)}
\BinaryInfC{$\Gamma, x:C \vdash A : B$}
\DisplayProof    
&       
\AxiomC{$\Gamma \vdash A : B$}
\AxiomC{$\Gamma; C \vdash$}
\BinaryInfC{$\Gamma; C \vdash {\uparrow^1} A : {\uparrow^1} B$}
\DisplayProof                    
&       
\AxiomC{$\Gamma \vdash A : B$}
\AxiomC{$\vdash \Gamma \bullet C$}
\BinaryInfC{$\Gamma, C \vdash A[{\uparrow}\mathsf{id}] : B[{\uparrow}\mathsf{id}]$}
\DisplayProof   
\\[2ex]
Barendregt \cite{barendregt92} (informal, classical) 
& 
Barras \& Werner (dBI) 
& 
Abel et al. \cite{abel2018} (dBI) 
\end{tabular}
\end{table}
\\
We can see that our solution is almost identical to the classical one.
As to (3), in \cite{copes2018,urciuoli2020} it was reported that the formalizations present there were carried out successfully by Master students with little background on Agda, giving us evidence to support the claim that the cost of entry must be somewhat low.
All in all, we can say with enough conviction that our approach to type theory should guarantee success on future formalizations (the fundamental evaluation criterion from which the three already discussed are derived), specially if the lemmas about substitutions and $\alpha$-conversion on the syntax can be automated away as with the Autosubst framework \cite{stark2019}, an interesting challenge.

We believe our solution is also more transparent than those using locally-nameless syntax.
Take for instance the work by McKinna and Pollack, already described in the Introduction,
and consider the rule for abstractions of $\beta$-reduction in their style
\cite[p.~385]{mckinna99}: 
\[
(\xi)
\quad
q\not\in M,N
\,
\land
\,
\step{[q/x]M}{[q/y]N}
\Longrightarrow
\step{\lambda x M}{\lambda y N}
\]
$q$ is a free variable or \textit{parameter} while $x$ and $y$ are bound variables or just \textit{variables}. 
This rule is quite involved.
This is partly due to that fact that, since $M$ and $N$ are not well-formed terms by themselves, they have to be ``opened'' by renaming their respective variables to fresh parameters.
The other point is that a different variable must be used in each term to allow renaming one to the other in certain cases, 
for instance, in the case some redex has been contracted somewhere in the top of the derivation and as a result some occurrence of $x$ bound to an inner abstraction enters into the scope of the outter-most abstraction, e.g., 
while reducing the term
$\lambda x ((\lambda y \lambda x y) x)$. 
Actually, it turns out that this rule is the same to that for abstractions in $\alpha$-conversion.
The solution of using two sorts of names was suggested to them by Coquand in order to alleviate reasoning with $\alpha$-conversion.
However, we can see that the result is not exactly what we expected,
as the problem of renaming variables is not avoided, but rather postponed during substitution, 
and it ends up scattered throughout the development.

In order to ``complete'' the proof of \cref{theo:consistency} we would have to mechanize normalization. 
Normalization proofs for impredicative systems exist but require impredicative features in the meta-language (c.f. \cite{altenkirch1993,barras96a,barras96aShort,donelly2007}). 
Since Agda is \textit{not} impredicative, it does not seem possible to mechanize them.\footnote{It is worth mentioning that in \cite{chapman2019} the authors formalized (some properties of) an extension of System~F which is not normalizing at the \textit{term level} in Agda, however, they did mechanize normalization for the \textit{type level} which is isomorphic to STLC. Similarly, quite a few solutions \cite{poplmarksol} have been submitted in response to the POPLMark Challenge \cite{aydemir2005}, nevertheless, these sets of problems focuses on results such as  preservation and progress (SR) of System~F but not on termination or normalization.}
In fact, Girard has shown that normalization of System~F implies consistency of PA${}_2$ \cite{girard1989}. If such proof existed, then one would expect
Agda to be proof-theoretically stronger than PA${}_2$ so as not to contradict G\"odel's incompleteness theorem, which does not seem likely. 

\bibliographystyle{eptcs}
\bibliography{references}

\begin{thebibliography}{10}
\providecommand{\bibitemdeclare}[2]{}
\providecommand{\surnamestart}{}
\providecommand{\surnameend}{}
\providecommand{\urlprefix}{Available at }
\providecommand{\url}[1]{\texttt{#1}}
\providecommand{\href}[2]{\texttt{#2}}
\providecommand{\urlalt}[2]{\href{#1}{#2}}
\providecommand{\doi}[1]{doi:\urlalt{https://doi.org/#1}{#1}}
\providecommand{\eprint}[1]{arXiv:\urlalt{https://arxiv.org/abs/#1}{#1}}
\providecommand{\bibinfo}[2]{#2}

\bibitemdeclare{article}{abel2019}
\bibitem{abel2019}
\bibinfo{author}{Andreas \surnamestart Abel\surnameend},
  \bibinfo{author}{Guillaume \surnamestart Allais\surnameend},
  \bibinfo{author}{Aliya \surnamestart Hameer\surnameend},
  \bibinfo{author}{Brigitte \surnamestart Pientka\surnameend},
  \bibinfo{author}{Alberto \surnamestart Momigliano\surnameend},
  \bibinfo{author}{Steven \surnamestart Schäfer\surnameend} \&
  \bibinfo{author}{Kathrin \surnamestart Stark\surnameend}
  (\bibinfo{year}{2019}): \emph{\bibinfo{title}{{POPLMark reloaded: Mechanizing
  Proofs by Logical Relations}}}.
\newblock {\slshape \bibinfo{journal}{Journal of Functional Programming}}
  \bibinfo{volume}{29}, \doi{10.1017/S0956796819000170}.

\bibitemdeclare{article}{abel2018}
\bibitem{abel2018}
\bibinfo{author}{Andreas \surnamestart Abel\surnameend},
  \bibinfo{author}{Joakim \surnamestart {\"{O}}hman\surnameend} \&
  \bibinfo{author}{Andrea \surnamestart Vezzosi\surnameend}
  (\bibinfo{year}{2018}): \emph{\bibinfo{title}{{Decidability of Conversion for
  Type Theory in Type Theory}}}.
\newblock {\slshape \bibinfo{journal}{Proceedings of the ACM on Programing
  Languages}} \bibinfo{volume}{2}(\bibinfo{number}{POPL}), pp.
  \bibinfo{pages}{23:1--23:29}, \doi{10.1145/3158111}.

\bibitemdeclare{inproceedings}{altenkirch1993}
\bibitem{altenkirch1993}
\bibinfo{author}{Thorsten \surnamestart Altenkirch\surnameend}
  (\bibinfo{year}{1993}): \emph{\bibinfo{title}{A Formalization of the Strong
  Normalization Proof for System F in LEGO}}.
\newblock In \bibinfo{editor}{Marc \surnamestart Bezem\surnameend} \&
  \bibinfo{editor}{Jan~Friso \surnamestart Groote\surnameend}, editors:
  {\slshape \bibinfo{booktitle}{Proceedings of the International Conference on
  Typed Lambda Calculi and Applications (TLCA~'93)}},
  \bibinfo{publisher}{Springer Berlin Heidelberg}, p. \bibinfo{pages}{13–28},
  \doi{10.1007/bfb0037095}.

\bibitemdeclare{inproceedings}{aydemir2005}
\bibitem{aydemir2005}
\bibinfo{author}{Brian~E. \surnamestart Aydemir\surnameend},
  \bibinfo{author}{Aaron \surnamestart Bohannon\surnameend},
  \bibinfo{author}{Matthew \surnamestart Fairbairn\surnameend},
  \bibinfo{author}{J.~Nathan \surnamestart Foster\surnameend},
  \bibinfo{author}{Benjamin~C. \surnamestart Pierce\surnameend},
  \bibinfo{author}{Peter \surnamestart Sewell\surnameend},
  \bibinfo{author}{Dimitrios \surnamestart Vytiniotis\surnameend},
  \bibinfo{author}{Geoffrey \surnamestart Washburn\surnameend},
  \bibinfo{author}{Stephanie \surnamestart Weirich\surnameend} \&
  \bibinfo{author}{Steve \surnamestart Zdancewic\surnameend}
  (\bibinfo{year}{2005}): \emph{\bibinfo{title}{Mechanized Metatheory for the
  Masses: The POPLMark Challenge}}.
\newblock In \bibinfo{editor}{Joe \surnamestart Hurd\surnameend} \&
  \bibinfo{editor}{Tom \surnamestart Melham\surnameend}, editors: {\slshape
  \bibinfo{booktitle}{Proceedings of the 18th International Conference on
  Theorem Proving in Higher Order Logics (TPHOLs~'05)}},
  \bibinfo{publisher}{Springer Berlin Heidelberg}, p. \bibinfo{pages}{50–65},
  \doi{10.1007/11541868_4}.

\bibitemdeclare{misc}{Aydemir2008sources}
\bibitem{Aydemir2008sources}
\bibinfo{author}{Brian~E. \surnamestart Aydemir\surnameend},
  \bibinfo{author}{Arthur \surnamestart Chargu{\'{e}}raud\surnameend},
  \bibinfo{author}{Benjamin~C. \surnamestart Pierce\surnameend},
  \bibinfo{author}{Randy \surnamestart Pollack\surnameend} \&
  \bibinfo{author}{Stephanie \surnamestart Weirich\surnameend}:
  \emph{\bibinfo{title}{Sources of: Engineering Formal Metatheory}}.
\newblock
  \urlprefix\url{https://www.chargueraud.org/research/2007/binders/formal_binders.tar.gz}.

\bibitemdeclare{inproceedings}{aydemir2008}
\bibitem{aydemir2008}
\bibinfo{author}{Brian~E. \surnamestart Aydemir\surnameend},
  \bibinfo{author}{Arthur \surnamestart Chargu{\'{e}}raud\surnameend},
  \bibinfo{author}{Benjamin~C. \surnamestart Pierce\surnameend},
  \bibinfo{author}{Randy \surnamestart Pollack\surnameend} \&
  \bibinfo{author}{Stephanie \surnamestart Weirich\surnameend}
  (\bibinfo{year}{2008}): \emph{\bibinfo{title}{Engineering Formal
  Metatheory}}.
\newblock In \bibinfo{editor}{George~C. \surnamestart Necula\surnameend} \&
  \bibinfo{editor}{Philip \surnamestart Wadler\surnameend}, editors: {\slshape
  \bibinfo{booktitle}{Proceedings of the 35th {ACM} {SIGPLAN-SIGACT} Symposium
  on Principles of Programming Languages (POPL~'08)}},
  \bibinfo{publisher}{{ACM}}, pp. \bibinfo{pages}{3--15},
  \doi{10.1145/1328438.1328443}.

\bibitemdeclare{book}{barendregt84}
\bibitem{barendregt84}
\bibinfo{author}{Hendrik~P. \surnamestart Barendregt\surnameend}
  (\bibinfo{year}{1984}): \emph{\bibinfo{title}{The Lambda Calculus: Its Syntax
  and Semantics}}, \bibinfo{edition}{revised} edition.
\newblock {\slshape \bibinfo{series}{Studies in Logic and the Foundations of
  Mathematics}} \bibinfo{volume}{103}, \bibinfo{publisher}{North-Holland}.

\bibitemdeclare{article}{barendregt1991}
\bibitem{barendregt1991}
\bibinfo{author}{Hendrik~P. \surnamestart Barendregt\surnameend}
  (\bibinfo{year}{1991}): \emph{\bibinfo{title}{Introduction to Generalized
  Type Systems}}.
\newblock {\slshape \bibinfo{journal}{Journal of Functional Programming}}
  \bibinfo{volume}{1}(\bibinfo{number}{2}), p. \bibinfo{pages}{125–154},
  \doi{10.1017/S0956796800020025}.

\bibitemdeclare{incollection}{barendregt92}
\bibitem{barendregt92}
\bibinfo{author}{Hendrik~P. \surnamestart Barendregt\surnameend}
  (\bibinfo{year}{1992}): \emph{\bibinfo{title}{Lambda Calculi with Types}}.
\newblock In \bibinfo{editor}{S.~\surnamestart Abramsky\surnameend},
  \bibinfo{editor}{Dov~M. \surnamestart Gabbay\surnameend} \&
  \bibinfo{editor}{T.~S.~E. \surnamestart Maibaum\surnameend}, editors:
  {\slshape \bibinfo{booktitle}{Handbook of Logic in Computer Science}},
  \bibinfo{volume}{2}, \bibinfo{publisher}{Oxford University Press}, p.
  \bibinfo{pages}{117–309}, \doi{10.1093/oso/9780198537618.003.0002}.

\bibitemdeclare{misc}{coqincoqSources}
\bibitem{coqincoqSources}
\bibinfo{author}{Bruno \surnamestart Barras\surnameend}:
  \emph{\bibinfo{title}{{Sources of the CC Formalization}}}.
\newblock \urlprefix\url{https://github.com/rocq-archive/coq-in-coq}.

\bibitemdeclare{misc}{barrasSourcePTS}
\bibitem{barrasSourcePTS}
\bibinfo{author}{Bruno \surnamestart Barras\surnameend}:
  \emph{\bibinfo{title}{{Sources of the PTS Formalization}}}.
\newblock \urlprefix\url{https://github.com/rocq-archive/pts}.

\bibitemdeclare{techreport}{barras96a}
\bibitem{barras96a}
\bibinfo{author}{Bruno \surnamestart Barras\surnameend} (\bibinfo{year}{1996}):
  \emph{\bibinfo{title}{{Coq en Coq}}}.
\newblock \bibinfo{type}{Rapport de Recherche} \bibinfo{number}{3026},
  \bibinfo{institution}{INRIA}.

\bibitemdeclare{inproceedings}{barras96b}
\bibitem{barras96b}
\bibinfo{author}{Bruno \surnamestart Barras\surnameend} (\bibinfo{year}{1996}):
  \emph{\bibinfo{title}{Verification of the Interface of a Small Proof System
  in Coq}}.
\newblock In \bibinfo{editor}{Eduardo \surnamestart Giménez\surnameend} \&
  \bibinfo{editor}{Christine \surnamestart Paulin-Mohring\surnameend}, editors:
  {\slshape \bibinfo{booktitle}{Proceedings of the 1996 International Workshop
  on Types for Proofs and Programs (TYPES~'96)}}, {\slshape
  \bibinfo{series}{LNCS}} \bibinfo{volume}{1512}, \bibinfo{publisher}{Springer
  Berlin Heidelberg}, pp. \bibinfo{pages}{28--45}, \doi{10.1007/bfb0097785}.

\bibitemdeclare{unpublished}{barras96aShort}
\bibitem{barras96aShort}
\bibinfo{author}{Bruno \surnamestart Barras\surnameend} \&
  \bibinfo{author}{Benjamin \surnamestart Werner\surnameend}:
  \emph{\bibinfo{title}{{Coq in Coq}}}.
\newblock
  \urlprefix\url{https://www.lix.polytechnique.fr/page/index.php?username=barras&path=publi/coqincoq.pdf}.
\newblock \bibinfo{note}{Manuscript}.

\bibitemdeclare{inproceedings}{chapman2019}
\bibitem{chapman2019}
\bibinfo{author}{James \surnamestart Chapman\surnameend},
  \bibinfo{author}{Roman \surnamestart Kireev\surnameend},
  \bibinfo{author}{Chad \surnamestart Nester\surnameend} \&
  \bibinfo{author}{Philip \surnamestart Wadler\surnameend}
  (\bibinfo{year}{2019}): \emph{\bibinfo{title}{System F in Agda, for Fun and
  Profit}}.
\newblock In \bibinfo{editor}{Graham \surnamestart Hutton\surnameend}, editor:
  {\slshape \bibinfo{booktitle}{Proceedings of the 13th International
  Conference on Mathematics of Program Construction (MPC~'19)}}, {\slshape
  \bibinfo{series}{LNCS}} \bibinfo{volume}{11825}, \bibinfo{publisher}{Springer
  Berlin Heidelberg}, p. \bibinfo{pages}{255–297},
  \doi{10.1007/978-3-030-33636-3_10}.

\bibitemdeclare{article}{chargueraud2012}
\bibitem{chargueraud2012}
\bibinfo{author}{Arthur \surnamestart Chargu{\'e}raud\surnameend}
  (\bibinfo{year}{2012}): \emph{\bibinfo{title}{The Locally Nameless
  Representation}}.
\newblock {\slshape \bibinfo{journal}{Journal of Automated Reasoning}}
  \bibinfo{volume}{49}(\bibinfo{number}{3}), pp. \bibinfo{pages}{363--408},
  \doi{10.1007/s10817-011-9225-2}.

\bibitemdeclare{article}{copello2017}
\bibitem{copello2017}
\bibinfo{author}{Ernesto \surnamestart Copello\surnameend},
  \bibinfo{author}{Nora \surnamestart Szasz\surnameend} \&
  \bibinfo{author}{\surnamestart Álvaro Tasistro\surnameend}
  (\bibinfo{year}{2017}): \emph{\bibinfo{title}{Formal metatheory of the Lambda
  calculus using Stoughton's substitution}}.
\newblock {\slshape \bibinfo{journal}{Theoretical Computer Science}}
  \bibinfo{volume}{685}, pp. \bibinfo{pages}{65--82},
  \doi{10.1016/j.tcs.2016.08.025}.
\newblock \bibinfo{note}{Logical and Semantic Frameworks with Applications}.

\bibitemdeclare{inproceedings}{copes2018}
\bibitem{copes2018}
\bibinfo{author}{Mart{\'{\i}}n \surnamestart Copes\surnameend},
  \bibinfo{author}{Nora \surnamestart Szasz\surnameend} \&
  \bibinfo{author}{{\'{A}}lvaro \surnamestart Tasistro\surnameend}
  (\bibinfo{year}{2018}): \emph{\bibinfo{title}{Formalization in Constructive
  Type Theory of the Standardization Theorem for the Lambda Calculus using
  Multiple Substitution}}.
\newblock In \bibinfo{editor}{Fr{\'{e}}d{\'{e}}ric \surnamestart
  Blanqui\surnameend} \& \bibinfo{editor}{Giselle \surnamestart
  Reis\surnameend}, editors: {\slshape \bibinfo{booktitle}{Proceedings of the
  13th International Workshop on Logical Frameworks and Meta-Languages: Theory
  and Practice (LFMTP~'18)}}, {\slshape \bibinfo{series}{{EPTCS}}}
  \bibinfo{volume}{274}, pp. \bibinfo{pages}{27--41},
  \doi{10.4204/EPTCS.274.3}.

\bibitemdeclare{techreport}{coquand1990}
\bibitem{coquand1990}
\bibinfo{author}{Thierry \surnamestart Coquand\surnameend}
  (\bibinfo{year}{1990}): \emph{\bibinfo{title}{A Proof Of Strong Normalization
  For The Theory Of Constructions Using A Kripe-Like Interpretation}}.
\newblock \bibinfo{type}{Technical Report} \bibinfo{number}{MS-CIS-90-44},
  \bibinfo{institution}{University of Pennsylvania}.
\newblock
  \urlprefix\url{https://repository.upenn.edu/handle/20.500.14332/7509}.

\bibitemdeclare{article}{coquand1988}
\bibitem{coquand1988}
\bibinfo{author}{Thierry \surnamestart Coquand\surnameend} \&
  \bibinfo{author}{Gérard \surnamestart Huet\surnameend}
  (\bibinfo{year}{1988}): \emph{\bibinfo{title}{The Calculus of
  Constructions}}.
\newblock {\slshape \bibinfo{journal}{Information and Computation}}
  \bibinfo{volume}{76}(\bibinfo{number}{2}), pp. \bibinfo{pages}{95--120},
  \doi{10.1016/0890-5401(88)90005-3}.
\newblock
  \urlprefix\url{https://www.sciencedirect.com/science/article/pii/0890540188900053}.

\bibitemdeclare{book}{curry1958}
\bibitem{curry1958}
\bibinfo{author}{Haskell~B. \surnamestart Curry\surnameend} \&
  \bibinfo{author}{Robert \surnamestart Feys\surnameend}
  (\bibinfo{year}{1958}): \emph{\bibinfo{title}{Combinatory Logic}}.
\newblock \bibinfo{volume}{1}, \bibinfo{publisher}{North-Holland Publishing
  Company}.

\bibitemdeclare{article}{debruijn1972}
\bibitem{debruijn1972}
\bibinfo{author}{Nicolaas~G. \surnamestart {de Bruijn}\surnameend}
  (\bibinfo{year}{1972}): \emph{\bibinfo{title}{Lambda calculus notation with
  nameless dummies, a tool for automatic formula manipulation, with application
  to the Church-Rosser theorem}}.
\newblock {\slshape \bibinfo{journal}{Indagationes Mathematicae (Proceedings)}}
  \bibinfo{volume}{75}(\bibinfo{number}{5}), pp. \bibinfo{pages}{381--392},
  \doi{10.1016/1385-7258(72)90034-0}.

\bibitemdeclare{inproceedings}{despeyroux1995}
\bibitem{despeyroux1995}
\bibinfo{author}{Jo{\"e}lle \surnamestart Despeyroux\surnameend},
  \bibinfo{author}{Amy \surnamestart Felty\surnameend} \&
  \bibinfo{author}{Andr{\'e} \surnamestart Hirschowitz\surnameend}
  (\bibinfo{year}{1995}): \emph{\bibinfo{title}{Higher-order Abstract Syntax in
  Coq}}.
\newblock In \bibinfo{editor}{Mariangiola \surnamestart
  Dezani-Ciancaglini\surnameend} \& \bibinfo{editor}{Gordon \surnamestart
  Plotkin\surnameend}, editors: {\slshape \bibinfo{booktitle}{Proceedings of
  the Second International Conference on Typed Lambda Calculi and Applications
  (TLCA~'95)}}, {\slshape \bibinfo{series}{LNCS}} \bibinfo{volume}{902},
  \bibinfo{publisher}{Springer Berlin Heidelberg}, pp.
  \bibinfo{pages}{124--138}, \doi{10.1007/bfb0014049}.

\bibitemdeclare{article}{donelly2007}
\bibitem{donelly2007}
\bibinfo{author}{Kevin \surnamestart Donnelly\surnameend} \&
  \bibinfo{author}{Hongwei \surnamestart Xi\surnameend} (\bibinfo{year}{2007}):
  \emph{\bibinfo{title}{A Formalization of Strong Normalization for
  Simply-Typed Lambda-Calculus and System F}}.
\newblock {\slshape \bibinfo{journal}{Electronic Notes in Theoretical Computer
  Science}} \bibinfo{volume}{174}(\bibinfo{number}{5}), pp.
  \bibinfo{pages}{109--125}, \doi{10.1016/j.entcs.2007.01.021}.
\newblock \bibinfo{note}{Proceedings of the First International Workshop on
  Logical Frameworks and Meta-Languages: Theory and Practice (LFMTP~'06)}.

\bibitemdeclare{book}{girard1989}
\bibitem{girard1989}
\bibinfo{author}{Jean-Yves \surnamestart Girard\surnameend},
  \bibinfo{author}{Paul \surnamestart Taylor\surnameend} \&
  \bibinfo{author}{Yves \surnamestart Lafont\surnameend}
  (\bibinfo{year}{1989}): \emph{\bibinfo{title}{Proofs and Types}}.
\newblock \bibinfo{publisher}{Cambridge University Press}.

\bibitemdeclare{article}{harper93}
\bibitem{harper93}
\bibinfo{author}{Robert \surnamestart Harper\surnameend},
  \bibinfo{author}{Furio \surnamestart Honsell\surnameend} \&
  \bibinfo{author}{Gordon \surnamestart Plotkin\surnameend}
  (\bibinfo{year}{1993}): \emph{\bibinfo{title}{A Framework for Defining
  Logics}}.
\newblock {\slshape \bibinfo{journal}{Journal of the ACM}}
  \bibinfo{volume}{40}(\bibinfo{number}{1}), p. \bibinfo{pages}{143–184},
  \doi{10.1145/138027.138060}.

\bibitemdeclare{article}{harper2005}
\bibitem{harper2005}
\bibinfo{author}{Robert \surnamestart Harper\surnameend} \&
  \bibinfo{author}{Frank \surnamestart Pfenning\surnameend}
  (\bibinfo{year}{2005}): \emph{\bibinfo{title}{On Equivalence and Canonical
  Forms in the LF Type Theory}}.
\newblock {\slshape \bibinfo{journal}{ACM Transactions on Computational Logic}}
  \bibinfo{volume}{6}(\bibinfo{number}{1}), p. \bibinfo{pages}{61–101},
  \doi{10.1145/1042038.1042041}.

\bibitemdeclare{book}{hindley1997}
\bibitem{hindley1997}
\bibinfo{author}{J.~Roger \surnamestart Hindley\surnameend}
  (\bibinfo{year}{1997}): \emph{\bibinfo{title}{Basic Simple Type Theory}}.
\newblock {\slshape \bibinfo{series}{Cambridge Tracts in Theoretical Computer
  Science}}~\bibinfo{volume}{42}, \bibinfo{publisher}{Cambridge University
  Press}, \doi{10.1017/cbo9780511608865}.

\bibitemdeclare{techreport}{kashima2000}
\bibitem{kashima2000}
\bibinfo{author}{Ryo \surnamestart Kashima\surnameend} (\bibinfo{year}{2000}):
  \emph{\bibinfo{title}{{A Proof of the Standardization Theorem in
  Lambda-Calculus}}}.
\newblock \bibinfo{type}{Technical Report} \bibinfo{number}{RRMCS C-145},
  \bibinfo{institution}{Tokyo Institute of Technology}.

\bibitemdeclare{inproceedings}{mckinna93}
\bibitem{mckinna93}
\bibinfo{author}{James \surnamestart McKinna\surnameend} \&
  \bibinfo{author}{Robert \surnamestart Pollack\surnameend}
  (\bibinfo{year}{1993}): \emph{\bibinfo{title}{Pure Type Systems Formalized}}.
\newblock In \bibinfo{editor}{Marc \surnamestart Bezem\surnameend} \&
  \bibinfo{editor}{Jan~Friso \surnamestart Groote\surnameend}, editors:
  {\slshape \bibinfo{booktitle}{Proceedings of the International Conference on
  Typed Lambda Calculi and Applications (TLCA~'93)}},
  \bibinfo{publisher}{Springer Berlin Heidelberg}, pp.
  \bibinfo{pages}{289--305}, \doi{10.1007/BFb0037113}.

\bibitemdeclare{article}{mckinna99}
\bibitem{mckinna99}
\bibinfo{author}{James \surnamestart McKinna\surnameend} \&
  \bibinfo{author}{Robert \surnamestart Pollack\surnameend}
  (\bibinfo{year}{1999}): \emph{\bibinfo{title}{Some Lambda Calculus and Type
  Theory Formalized}}.
\newblock {\slshape \bibinfo{journal}{Journal of Automated Reasoning}}
  \bibinfo{volume}{23}(\bibinfo{number}{3}), pp. \bibinfo{pages}{373--409},
  \doi{10.1023/A:1006294005493}.

\bibitemdeclare{book}{nipkow2002}
\bibitem{nipkow2002}
\bibinfo{author}{Tobias \surnamestart Nipkow\surnameend},
  \bibinfo{author}{Markus \surnamestart Wenzel\surnameend} \&
  \bibinfo{author}{Lawrence~C. \surnamestart Paulson\surnameend}
  (\bibinfo{year}{2002}): \emph{\bibinfo{title}{Isabelle/HOL: A Proof Assistant
  for Higher-order Logic}}.
\newblock {\slshape \bibinfo{series}{LNCS}} \bibinfo{volume}{2283},
  \bibinfo{publisher}{Springer Berlin Heidelberg}, \doi{10.1007/3-540-45949-9}.

\bibitemdeclare{article}{pfenning1988}
\bibitem{pfenning1988}
\bibinfo{author}{Frank \surnamestart Pfenning\surnameend} \&
  \bibinfo{author}{Conal \surnamestart Elliott\surnameend}
  (\bibinfo{year}{1988}): \emph{\bibinfo{title}{Higher-order Abstract Syntax}}.
\newblock {\slshape \bibinfo{journal}{SIGPLAN Notices}}
  \bibinfo{volume}{23}(\bibinfo{number}{7}), p. \bibinfo{pages}{199–208},
  \doi{10.1145/960116.54010}.

\bibitemdeclare{phdthesis}{pollack94phd}
\bibitem{pollack94phd}
\bibinfo{author}{Robert \surnamestart Pollack\surnameend}
  (\bibinfo{year}{1994}): \emph{\bibinfo{title}{The Theory of LEGO}}.
\newblock Ph.D. thesis, \bibinfo{school}{University of Edinburgh}.
\newblock \urlprefix\url{https://era.ed.ac.uk/handle/1842/504}.

\bibitemdeclare{book}{sorensen2006}
\bibitem{sorensen2006}
\bibinfo{author}{Morten~H. \surnamestart S\o{}rensen\surnameend} \&
  \bibinfo{author}{Pawel \surnamestart Urzyczyn\surnameend}
  (\bibinfo{year}{2006}): \emph{\bibinfo{title}{Lectures on the Curry-Howard
  Isomorphism}}.
\newblock {\slshape \bibinfo{series}{Studies in Logic and the Foundations of
  Mathematics}} \bibinfo{volume}{149}, \bibinfo{publisher}{Elsevier}.

\bibitemdeclare{article}{sozeau2025}
\bibitem{sozeau2025}
\bibinfo{author}{Matthieu \surnamestart Sozeau\surnameend},
  \bibinfo{author}{Yannick \surnamestart Forster\surnameend},
  \bibinfo{author}{Meven \surnamestart Lennon-Bertrand\surnameend},
  \bibinfo{author}{Jakob \surnamestart Nielsen\surnameend},
  \bibinfo{author}{Nicolas \surnamestart Tabareau\surnameend} \&
  \bibinfo{author}{Th\'{e}o \surnamestart Winterhalter\surnameend}
  (\bibinfo{year}{2025}): \emph{\bibinfo{title}{Correct and Complete Type
  Checking and Certified Erasure for Coq, in Coq}}.
\newblock {\slshape \bibinfo{journal}{Journal of the ACM}}
  \bibinfo{volume}{72}(\bibinfo{number}{1}), \doi{10.1145/3706056}.

\bibitemdeclare{inproceedings}{stark2019}
\bibitem{stark2019}
\bibinfo{author}{Kathrin \surnamestart Stark\surnameend},
  \bibinfo{author}{Steven \surnamestart Sch\"{a}fer\surnameend} \&
  \bibinfo{author}{Jonas \surnamestart Kaiser\surnameend}
  (\bibinfo{year}{2019}): \emph{\bibinfo{title}{Autosubst 2: reasoning with
  multi-sorted de Bruijn terms and vector substitutions}}.
\newblock In: {\slshape \bibinfo{booktitle}{Proceedings of the 8th ACM SIGPLAN
  International Conference on Certified Programs and Proofs (CPP~'19)}},
  \bibinfo{publisher}{ACM}, p. \bibinfo{pages}{166–180},
  \doi{10.1145/3293880.3294101}.

\bibitemdeclare{article}{stoughton1988}
\bibitem{stoughton1988}
\bibinfo{author}{Allen \surnamestart Stoughton\surnameend}
  (\bibinfo{year}{1988}): \emph{\bibinfo{title}{Substitution Revisited}}.
\newblock {\slshape \bibinfo{journal}{Theoretical Computer Science}}
  \bibinfo{volume}{59}(\bibinfo{number}{3}), pp. \bibinfo{pages}{317--325},
  \doi{10.1016/0304-3975(88)90149-1}.

\bibitemdeclare{article}{takahashi1995}
\bibitem{takahashi1995}
\bibinfo{author}{Masako \surnamestart Takahashi\surnameend}
  (\bibinfo{year}{1995}): \emph{\bibinfo{title}{Parallel Reductions in
  λ-Calculus}}.
\newblock {\slshape \bibinfo{journal}{Information and Computation}}
  \bibinfo{volume}{118}(\bibinfo{number}{1}), pp. \bibinfo{pages}{120--127},
  \doi{10.1006/inco.1995.1057}.

\bibitemdeclare{inproceedings}{tasistro2015}
\bibitem{tasistro2015}
\bibinfo{author}{\'Alvaro \surnamestart Tasistro\surnameend},
  \bibinfo{author}{Ernesto \surnamestart Copello\surnameend} \&
  \bibinfo{author}{Nora \surnamestart Szasz\surnameend} (\bibinfo{year}{2015}):
  \emph{\bibinfo{title}{Formalisation in Constructive Type Theory of
  Stoughton's Substitution for the Lambda Calculus}}.
\newblock In \bibinfo{editor}{Mauricio \surnamestart Ayala-Rinc\'on\surnameend}
  \& \bibinfo{editor}{Ian \surnamestart Mackie\surnameend}, editors: {\slshape
  \bibinfo{booktitle}{Proceedings of the Ninth Workshop on Logical and Semantic
  Frameworks, with Applications (LSFA~'14)}}, {\slshape
  \bibinfo{series}{ENTCS}} \bibinfo{volume}{312},
  \bibinfo{publisher}{Elsevier}, pp. \bibinfo{pages}{215--230},
  \doi{10.1016/j.entcs.2015.04.013}.

\bibitemdeclare{misc}{poplmarksol}
\bibitem{poplmarksol}
\bibinfo{author}{The~POPLMark \surnamestart Team\surnameend}:
  \urlprefix\url{https://www.seas.upenn.edu/~plclub/poplmark/}.

\bibitemdeclare{misc}{agda2622}
\bibitem{agda2622}
\bibinfo{author}{\surnamestart {The Agda Team}\surnameend}:
  \emph{\bibinfo{title}{Agda v2.6.2.2}}.
\newblock \urlprefix\url{https://agda.readthedocs.io/en/v2.6.2.2/}.

\bibitemdeclare{misc}{agdastdlib171}
\bibitem{agdastdlib171}
\bibinfo{author}{\surnamestart {The Agda's Standard Library Team}\surnameend}:
  \emph{\bibinfo{title}{The Standard Library v1.7.1}}.
\newblock
  \urlprefix\url{https://github.com/agda/agda-stdlib/releases/tag/v1.7.1}.

\bibitemdeclare{article}{urban2011}
\bibitem{urban2011}
\bibinfo{author}{Christian \surnamestart Urban\surnameend},
  \bibinfo{author}{James \surnamestart Cheney\surnameend} \&
  \bibinfo{author}{Stefan \surnamestart Berghofer\surnameend}
  (\bibinfo{year}{2011}): \emph{\bibinfo{title}{Mechanizing the metatheory of
  LF}}.
\newblock {\slshape \bibinfo{journal}{ACM Transactions on Computational Logic}}
  \bibinfo{volume}{12}(\bibinfo{number}{2}), \doi{10.1145/1877714.1877721}.

\bibitemdeclare{inproceedings}{urciuoli2023}
\bibitem{urciuoli2023}
\bibinfo{author}{Sebasti\'an \surnamestart Urciuoli\surnameend}
  (\bibinfo{year}{2023}): \emph{\bibinfo{title}{A Formal Proof of the Strong
  Normalization Theorem for System T in Agda}}.
\newblock In \bibinfo{editor}{Daniele \surnamestart Nantes-Sobrinho\surnameend}
  \& \bibinfo{editor}{Pascal \surnamestart Fontaine\surnameend}, editors:
  {\slshape \bibinfo{booktitle}{Proceedings of the 17th International Workshop
  on Logical and Semantic Frameworks with Applications (LSFA~'22)}}, {\slshape
  \bibinfo{series}{EPTCS}} \bibinfo{volume}{376}, \bibinfo{publisher}{Open
  Publishing Association}, p. \bibinfo{pages}{81–99},
  \doi{10.4204/eptcs.376.8}.

\bibitemdeclare{misc}{sources}
\bibitem{sources}
\bibinfo{author}{Sebasti\'an \surnamestart Urciuoli\surnameend}
  (\bibinfo{year}{2026}): \emph{\bibinfo{title}{Sources}},
  \doi{10.5281/zenodo.20586635}.

\bibitemdeclare{misc}{html}
\bibitem{html}
\bibinfo{author}{Sebasti\'an \surnamestart Urciuoli\surnameend}
  (\bibinfo{year}{2026}): \emph{\bibinfo{title}{Sources (HTML)}}.
\newblock \urlprefix\url{https://surciuoli.github.io/pts-consistency-impred}.

\bibitemdeclare{inproceedings}{urciuoli2020}
\bibitem{urciuoli2020}
\bibinfo{author}{Sebasti\'an \surnamestart Urciuoli\surnameend},
  \bibinfo{author}{\'Alvaro \surnamestart Tasistro\surnameend} \&
  \bibinfo{author}{Nora \surnamestart Szasz\surnameend} (\bibinfo{year}{2020}):
  \emph{\bibinfo{title}{Strong Normalization for the Simply-Typed Lambda
  Calculus in Constructive Type Theory Using Agda}}.
\newblock In \bibinfo{editor}{Cláudia \surnamestart Nalon\surnameend} \&
  \bibinfo{editor}{Giselle \surnamestart Reis\surnameend}, editors: {\slshape
  \bibinfo{booktitle}{Proceedings of the 15th International Workshop on Logical
  and Semantic Frameworks, with Applications (LSFA~'20)}}, {\slshape
  \bibinfo{series}{ENTCS}} \bibinfo{volume}{351},
  \bibinfo{publisher}{Elsevier}, pp. \bibinfo{pages}{187--203},
  \doi{10.1016/j.entcs.2020.08.010}.

\bibitemdeclare{inproceedings}{urciuoli2025}
\bibitem{urciuoli2025}
\bibinfo{author}{Sebastián \surnamestart Urciuoli\surnameend}
  (\bibinfo{year}{2025}): \emph{\bibinfo{title}{On the Formal Metatheory of the
  Pure Type Systems using One-sorted Variable Names and Multiple
  Substitutions}}.
\newblock In \bibinfo{editor}{Kaustuv \surnamestart Chaudhuri\surnameend} \&
  \bibinfo{editor}{Daniele \surnamestart Nantes-Sobrinho\surnameend}, editors:
  {\slshape \bibinfo{booktitle}{Proceedings of the Twentieth International
  Workshop on Logical Frameworks and Meta-Languages: Theory and Practice
  (LFMTP~'25)}}, {\slshape \bibinfo{series}{EPTCS}} \bibinfo{volume}{431},
  \bibinfo{publisher}{Open Publishing Association}, p.
  \bibinfo{pages}{17–33}, \doi{10.4204/eptcs.431.2}.

\end{thebibliography}

\end{document}